\newtheorem{theorem}{Theorem}
\newtheorem{definition}{Definition}
\newtheorem{lemma}{Lemma}
\newtheorem{corollary}{Corollary}
\newtheorem{proposition}{Proposition}
\newtheorem{observation}{Observation}
\newcommand\blfootnote[1]{%
	\begingroup
	\renewcommand\thefootnote{}\footnote{#1}%
	\addtocounter{footnote}{-1}%
	\endgroup
}
\newcommand*{\ov}[1]{%
	$\m@th\overline{\raisebox{0pt}[1.3\height]{#1}}$%
}
\begin{document}


\title{\hspace{-2.8mm}Network Interdiction Using Adversarial Traffic Flows}
\author{\large Xinzhe Fu and Eytan Modiano\\Lab for Information and Decision Systems, MIT}
\maketitle
\blfootnote{This work was supported by DTRA grants HDTRA1-13-1-0021 and HDTRA1-14-1-0058, and NSF grant CNS-1735463.}
\begin{abstract}
Traditional network interdiction refers to the problem of an interdictor trying to reduce the throughput of network users by removing network edges. 
In this paper, we propose a new paradigm for network interdiction that models scenarios, such as stealth DoS attack, where the interdiction is performed through injecting adversarial traffic flows. Under this paradigm, we first study the deterministic flow interdiction problem, where the interdictor has perfect knowledge of the operation of network users. We show that the problem is highly inapproximable on general networks and is NP-hard even when the network is acyclic. We then propose an algorithm that achieves a logarithmic approximation ratio and quasi-polynomial time complexity for acyclic networks through harnessing the submodularity of the problem. Next, we investigate the robust flow interdiction problem, which adopts the robust optimization framework to capture the case where definitive knowledge of the operation of network users is not available. We design an approximation framework that integrates the aforementioned algorithm, yielding a quasi-polynomial time procedure with poly-logarithmic approximation ratio for the more challenging robust flow interdiction. Finally, we evaluate the performance of the proposed algorithms through simulations, showing that they can be efficiently implemented and yield near-optimal solutions.
\end{abstract}

\section{Introduction}
Network interdiction, originally proposed in \cite{cite:interdiction1,cite:interdiction4} models the scenarios where a budget-constrained \textit{interdictor} tries to limit the throughput available for \textit{users} of a capacitated network by removing network edges. The throughput is given by the optimal value of a single-commodity max-flow problem and the goal of the interdictor is to compute an interdiction strategy that specifies which edges to remove in order to minimize the throughput, or maximize the throughput reduction, subject to the budget constraint. Since the problem is NP-hard even when the network has special topologies, previous works focus on designing approximation algorithms \cite{cite:interdiction4,cite:interdiction6,cite:interdiction7} or formulating integer programs and solving them using traditional optimization techniques (e.g. branch and bound) \cite{cite:interdiction1,cite:interdiction9}. Subsequent generalizations include extensions to the case where the throughput is given by multi-commodity max-flow problem \cite{cite:interdiction3} and allowing the interdictor to use mixed strategy that takes advantage of randomization \cite{cite:interdiction5,cite:interdiction10}. We refer the readers to \cite{cite:interdictionsurvey} for a comprehensive survey.



As a generalization of the renowned max-flow min-cut theorem, network interdiction provides valuable insights to the robustness of networks. Projecting the interdictor to an adversarial position, network interdiction can characterize the impact of natural disasters on fiber-optic networks \cite{cite:maxflow}, evaluate the vulnerability of network infrastructures \cite{cite:infrastructure}, and provide guidelines to the design of security-enhancing strategies for cyber-physical systems \cite{cite:cyberphysical}.

In this paper, we propose a new paradigm for network interdiction where the interdiction is performed through injecting adversarial traffic flow to the network in an intelligent way, encroaching the capacity of network links, thereby reducing the throughput of network users. It captures applications that have eluded the traditional network interdiction paradigm based on edge removals. One of the most prominent examples is the stealth denial of service (DoS) attack in  communication networks, including wireless ad hoc networks \cite{cite:dosAdhoc}, software defined networks \cite{cite:dosSDN} and cloud services \cite{cite:dosCloud}. The interdictor (attacker) injects low-rate data into the network that consumes network resources and compromises the capacity available to the users.
In this paradigm, we model the network as a capacitated directed graph with $n$ nodes, where the network users are sending flow on a set $P$ of user paths and the interdictor aims to reduce the throughput of the users through sending adversarial flow from its source $s$ to its destination $t$. Mirroring the situations in \cite{cite:dosAdhoc,cite:dosSDN,cite:paschos}, we assume the interdictor to be low-rate and undetectable, which will be formally defined later. The interdiction strategy is defined as a probability distribution over the set of $s$-$t$ flows with value less than the given budget, which resembles the mixed strategy in the  game theory literature \cite{cite:interdiction5}. The throughput reduction achieved is equal to the difference between the network throughput before the interdiction, which is defined as the sum of initial flow values on paths in $P$ and the network throughput after the interdiction, which is determined by the optimal value of a path-based max-flow problem on the residual network that subsumes both single and multi-commodity max-flows. 

Under the proposed interdiction paradigm, we study two problems that differ in the availability of the knowledge on the operation of network users captured by the set of user paths $P$. The first, deterministic flow interdiction, assumes that the interdictor has perfect knowledge of $P$ and seeks an interdiction strategy that maximizes the (expected) throughput reduction with respect to $P$. We show that there does not exist any polynomial time algorithm that approximates the problem on general networks within an $O(n^{1-\delta})$ factor for any $\delta>0$ unless P = NP, and the problem is NP-hard even when the network is acyclic. Thus, we focus on designing efficient algorithms with good performance guarantees on acyclic networks. Specifically, utilizing the submodularity of the problem, we propose a recursive algorithm that is capable of achieving $O(\log n)$-approximation. The second problem, robust flow interdiction, deals with the situation where definitive knowledge of $P$ is not available. In particular, we assume that the set of user paths lies in some uncertainty set $\mathcal{U}$ that contains all possible candidates for $P$. The goal of the interdictor is to compute an interdiction strategy that maximizes the throughput reduction for the worst case in $\mathcal{U}$. As a generalization of its deterministic counterpart, robust flow interdiction inherits the computational complexity results and is more challenging to solve due to its inherent maximin objective. In this context, we design an approximation framework that integrates the algorithm for deterministic flow interdiction and yields a quasi-polynomial time procedure with a poly-logarithmic approximation guarantee. Finally, We evaluate the performance of the proposed algorithms through simulations. The simulation results suggest that our algorithms compute solutions that are at least 70\% of the optimal and are efficiently implementable.

The rest of the paper is organized as follows. We formally present our paradigm on acyclic networks in Section \ref{sec:model}. In Sections \ref{sec:deterministic} and \ref{sec:robust}, we introduce formal definitions, show the computational complexity and describe our proposed algorithms for the two flow interdiction problems, respectively. We evaluate the performance of our algorithms through simulations in Section \ref{sec:simulations}. Section \ref{sec:discussion} is devoted to the extension of our paradigm and the interdiction problems to general networks. We conclude the paper in Section \ref{sec:conclusion}.

\section{Network Interdiction Paradigm} \label{sec:model}
In this section, we first formalize our network interdiction paradigm, and then show two important structural properties of it. Note that currently, we focus on acyclic networks, and provide extensions to general networks in Section \ref{sec:discussion}.

Consider a network represented as a directed acyclic graph $G(V,E)$ with vertex set $V$ and edge set $E\subseteq V\times V$. Let $n=|V|$ be the number of nodes and $m=|E|$ be the number of edges. We assume $G$ to be simple (with no multi-edges). Let $C$ be an $|E|$-dimensional non-negative capacity vector with $C(e)$ indicating the capacity of edge $e$. We define $s,t\in V$ as the source and the destination of the interdictor, and assume without loss of generality that they are connected. An $s$-$t$ flow is defined as an $|E|$-dimensional vector $\mathbf{f}$ that satisfies capacity constraints: $\forall e \in E, 0\le \mathbf{f}(e)\le C(e)$ and flow conservation constraints: $\forall v\in V\backslash\{s,t\},\sum_{(u,v)\in E}\mathbf{f}(u,v)=\sum_{(v,u)\in E}\mathbf{f}(v,u)$. We define $val(\mathbf{f})=\sum_{(s,u)\in E}\mathbf{f}(s,u)$ to be the value of $\mathbf{f}$, i.e., the total flow out of the source. 

The interdiction is performed by injecting flow from $s$ to $t$. The interdictor has a flow budget $\gamma$ that specifies the maximum value of flow that it can inject. In this paper, we are primarily concerned with low-rate interdictor, and thus assume that $\gamma\le\min_{e\in E}C(e)$ and is bounded by some polynomial of $n$. Let $\mathcal{F}_{\le\gamma}$ be the set of $s$-$t$ flows $\mathbf{f}$ with $val(\mathbf{f})\le\gamma$. We allow the interdictor to use randomized flow injection, which is captured by the concept of interdiction strategy formally defined below. 
\begin{definition}[Interdiction Strategy]
	An interdiction strategy $w$ is a probability distribution $w:\mathcal{F}_{\le\gamma}\mapsto [0,1]$ such that $\sum_{\mathbf{f}\in\mathcal{F}_{\le\gamma}}w(\mathbf{f})=1$
\end{definition}
The interdiction strategy bears resemblance to the mixed strategy in the game theory literature. It can be alternatively interpreted as injecting flows in a time sharing way. Furthermore, a deterministic flow injection $\mathbf{f}$ (similar to a pure strategy in game theory) can be represented as a strategy with $w(\mathbf{f})=1$. 

Before the interdiction, the network users are sending flow on a set of user paths $P=\{p_1,p_2,\ldots,p_k\}$ in the network. Each path is a subset of edges and we use $e\in p_i$ to represent that edge $e$ is on path $p_i$. The user paths may not share the same source and destination, and are not necessarily disjoint. Initially, the values of the flows on the paths are $\lambda_1,\lambda_2\ldots,\lambda_k$ respectively, which satisfy capacity constraints: $\forall e\in E,\ \sum_{p_i\ni e}\lambda_i \le C(e)$. The network throughput for the users before the interdiction is defined as $\sum_{i=1}^k\lambda_i$. Note that the involvement of the initial flows gives our paradigm the flexibility to capture the case where the users are not fully utilizing the paths before interdiction.



After the interdictor injects flow $\mathbf{f}$, the residual capacity of the edges becomes $\tilde{C}_{\mathbf{f}}$ such that $\tilde{C}_{\mathbf{f}}(e)=C(e)-\mathbf{f}(e)$ for all $e\in E$. 
The throughput of the users after interdiction is given by the optimal value of the following (path-based) max-flow problem:
\begin{align}
\quad\text{maximize } & \textstyle\sum_{i}\tilde{\lambda}_i\label{throughput}\\
\text{\textbf{s.t. }} \textstyle\sum_{p_i\ni e}\tilde{\lambda}_i&\le \tilde{C}_{\mathbf{f}}(e),\quad \forall e\in E \label{capacity}\\
& \textstyle0\le \tilde{\lambda}_i\le \lambda_i,\quad \forall i \label{utility}
\end{align}
where constraints (\ref{capacity}) are the capacity constraints after the interdiction and constraints (\ref{utility}) specify that the users will not actively push more flows on the paths after the interdiction, which can be attributed to the undetectability of the interdictor or that the users have no more flow to send.  
Let $T(\mathbf{f},P)$ be the optimal value of (\ref{throughput}). 
We define the throughput reduction achieved by injecting flow $\mathbf{f}$ as the difference between the throughput of the network before and after interdiction, i.e., $\Lambda(\mathbf{f},P)=\sum_i\lambda_i-T(\mathbf{f},P)$. Naturally, under an interdiction strategy ${w}$, the expected throughput reduction achieved by the interdictor is defined as $\Lambda({w},P)=\sum_{\mathbf{f}\in\mathcal{F}_{\le\gamma}}w\mathbf{(f)}\Lambda(\mathbf{f},P)$. 

\subsection{Structural Properties of the Paradigm}
The proposed network interdiction paradigm has two important structural properties that will play a key role in the problems we study in subsequent sections.
The first property shows that if we want to maximize the throughput reduction, we can restrict our consideration to the set of $s$-$t$ flows with value $\gamma$. Its proof follows straightforwardly from the monotonicity of throughput reduction with respect to the value of the interdicting flow and that $\gamma\le\min_eC(e)$.
\begin{observation}
	For any $s$-$t$ flow $\mathbf{f}$ with $val(\mathbf{f})<\gamma$, there exists a flow $\mathbf{f}'$ such that $val(\mathbf{f}')=\gamma$ and $\Lambda(\mathbf{f}',P)\ge\Lambda(\mathbf{f},P)$ for all possible $P$.
\end{observation}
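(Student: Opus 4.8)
The plan is to construct the desired flow $\mathbf{f}'$ by augmenting $\mathbf{f}$ along a single $s$-$t$ path so that $\mathbf{f}'\ge\mathbf{f}$ holds componentwise, and then to invoke a monotonicity argument: enlarging the interdicting flow can only shrink the residual capacities, which in turn can only decrease the users' post-interdiction throughput. Crucially, the construction will not depend on $P$, so the resulting inequality holds uniformly over all possible $P$.

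First I would establish the monotonicity step, which is the easy direction. Suppose $\mathbf{g}$ and $\mathbf{h}$ are two $s$-$t$ flows with $\mathbf{h}(e)\ge\mathbf{g}(e)$ for every edge $e$. Then $\tilde{C}_{\mathbf{h}}(e)=C(e)-\mathbf{h}(e)\le C(e)-\mathbf{g}(e)=\tilde{C}_{\mathbf{g}}(e)$, so any vector $\tilde{\lambda}$ feasible for the max-flow problem (\ref{throughput})--(\ref{utility}) associated with $\mathbf{h}$ is also feasible for the one associated with $\mathbf{g}$, since the capacity constraints (\ref{capacity}) are only relaxed while (\ref{utility}) is unchanged. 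Taking maxima over the respective feasible regions yields $T(\mathbf{h},P)\le T(\mathbf{g},P)$, hence $\Lambda(\mathbf{h},P)\ge\Lambda(\mathbf{g},P)$ for all $P$.

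The crux is therefore to produce, from a flow $\mathbf{f}$ with $val(\mathbf{f})<\gamma$, a feasible $\mathbf{f}'$ with $val(\mathbf{f}')=\gamma$ and $\mathbf{f}'\ge\mathbf{f}$ componentwise. Here I would exploit acyclicity. Because $G$ is a DAG, the flow $\mathbf{f}$ decomposes into $s$-$t$ paths with no cyclic components, so the flow on any single edge is at most the total value, i.e. $\mathbf{f}(e)\le val(\mathbf{f})$ for all $e$. Let $\delta=\gamma-val(\mathbf{f})>0$, pick any simple $s$-$t$ path $q$ (which exists since $s$ and $t$ are connected), and set $\mathbf{f}'(e)=\mathbf{f}(e)+\delta$ for $e\in q$ and $\mathbf{f}'(e)=\mathbf{f}(e)$ otherwise. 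Flow conservation is preserved because $\delta$ units are added uniformly along a path, and $val(\mathbf{f}')=val(\mathbf{f})+\delta=\gamma$. Capacity feasibility is exactly where the budget assumption enters: for $e\in q$ we have $\mathbf{f}'(e)=\mathbf{f}(e)+\delta\le val(\mathbf{f})+\delta=\gamma\le\min_{e}C(e)\le C(e)$, while all other edges are unchanged. Since $\mathbf{f}'\ge\mathbf{f}$, the monotonicity step delivers $\Lambda(\mathbf{f}',P)\ge\Lambda(\mathbf{f},P)$ for all $P$, as required.

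I expect the only genuinely delicate point to be the bound $\mathbf{f}(e)\le val(\mathbf{f})$. On a general cyclic network a low-value flow could still saturate an individual edge through a flow cycle, and augmenting along an arbitrary path could then violate a capacity constraint. The acyclicity of $G$ together with the low-rate assumption $\gamma\le\min_{e}C(e)$ is precisely what rules this out and makes the single-path augmentation feasible; these two hypotheses are the load-bearing ingredients of the argument.
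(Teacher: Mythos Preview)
Your proposal is correct and follows exactly the approach the paper sketches: it invokes the monotonicity of $\Lambda$ in the interdicting flow together with the low-rate assumption $\gamma\le\min_eC(e)$, and you have simply filled in the details (the componentwise augmentation along an $s$-$t$ path and the DAG-based bound $\mathbf{f}(e)\le val(\mathbf{f})$) that the paper leaves implicit.
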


We denote $\mathcal{F}_{\gamma}$ to be the set of flows with value $\gamma$.
We further define \textit{single-path flows} as the $s$-$t$ flows that have positive values on edges of one $s$-$t$ path. The second property establishes the optimality of interdiction strategies taking positive value on only single-path flows in the maximization of $\Lambda$.
\begin{proposition}\label{proposition:property}
	For any interdiction strategy $w$, there exists an interdiction strategy $w'$ that is a probability distribution on the set of single-path flows such that $\Lambda(w',P)\ge\Lambda(w,P)$ for all possible $P$.
\end{proposition}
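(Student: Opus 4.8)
The plan is to combine two ingredients: the convexity of the throughput reduction $\Lambda(\cdot,P)$ as a function of the injected flow, together with the classical flow decomposition theorem. By the preceding observation (Observation 1) we may assume throughout that every flow in the support of $w$ has value exactly $\gamma$, i.e. lies in $\mathcal{F}_{\gamma}$, since replacing each such flow by one of value $\gamma$ never decreases the reduction.

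First I would establish that for every fixed $P$ the map $\mathbf{f}\mapsto\Lambda(\mathbf{f},P)$ is convex. Note that $T(\mathbf{f},P)$ is the optimal value of the linear program (\ref{throughput})--(\ref{utility}), whose only dependence on $\mathbf{f}$ enters through the right-hand sides $\tilde{C}_{\mathbf{f}}(e)=C(e)-\mathbf{f}(e)$ of the capacity constraints (\ref{capacity}); the box constraints (\ref{utility}) do not involve $\mathbf{f}$. The optimal value of a maximization LP is concave in its right-hand-side vector: given optimal solutions for two capacity vectors, their convex combination is feasible for the corresponding convex combination of capacities and attains the convex combination of the two objective values. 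Since $\tilde{C}_{\mathbf{f}}$ is affine in $\mathbf{f}$, it follows that $T(\mathbf{f},P)$ is concave in $\mathbf{f}$, and hence $\Lambda(\mathbf{f},P)=\sum_i\lambda_i-T(\mathbf{f},P)$ is convex in $\mathbf{f}$.

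Next, fix a flow $\mathbf{f}\in\mathcal{F}_{\gamma}$ in the support of $w$. Because $G$ is acyclic, flow decomposition writes $\mathbf{f}=\sum_j\mathbf{h}_j$ as a sum of single-path flows $\mathbf{h}_j$ supported on $s$-$t$ paths, where $\mathbf{h}_j$ has value $\mu_j$ and $\sum_j\mu_j=\gamma$. Scaling each component to $\hat{\mathbf{h}}_j=(\gamma/\mu_j)\mathbf{h}_j$ yields a single-path flow of value $\gamma$; since $\hat{\mathbf{h}}_j(e)=\gamma\le\min_{e'\in E}C(e')\le C(e)$ on every edge of its path, this scaled flow is feasible and lies in $\mathcal{F}_{\gamma}$. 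Writing $\mathbf{h}_j=(\mu_j/\gamma)\hat{\mathbf{h}}_j$ exhibits $\mathbf{f}=\sum_j(\mu_j/\gamma)\hat{\mathbf{h}}_j$ as a convex combination of single-path flows, the weights $\mu_j/\gamma$ being nonnegative and summing to one. Convexity of $\Lambda(\cdot,P)$ then gives, for every $P$,
\begin{equation}
\Lambda(\mathbf{f},P)\le\sum_j\frac{\mu_j}{\gamma}\,\Lambda(\hat{\mathbf{h}}_j,P).
\end{equation}

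Finally I would assemble $w'$ by replacing, in the support of $w$, each flow $\mathbf{f}$ with the distribution placing weight $\mu_j/\gamma$ on its single-path components $\hat{\mathbf{h}}_j$ (accumulating weights when a single-path flow arises from several decompositions). The resulting $w'$ is supported on single-path flows and is a valid probability distribution, and summing the displayed inequality against $w(\mathbf{f})$ over the support yields $\Lambda(w',P)\ge\Lambda(w,P)$. The main obstacle is the convexity step: one must argue carefully that it is the reduction, rather than the throughput, that is the convex quantity, and verify that the scaled single-path flows remain feasible — this is exactly where the low-rate assumption $\gamma\le\min_eC(e)$ is used. Everything else is a pointwise-in-$P$ application of the above, so the guarantee holding for all possible $P$ comes for free.
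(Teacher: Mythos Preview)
Your argument is correct and follows the same overall skeleton as the paper: decompose each flow in the support into single-path pieces via flow decomposition, scale each piece up to value $\gamma$ (feasibility coming from $\gamma\le\min_e C(e)$), redistribute the probability mass to those pieces, and verify pointwise in $P$ that the expected reduction does not decrease. The one real difference is how the key inequality $\Lambda(\mathbf{f},P)\le\sum_j(\mu_j/\gamma)\Lambda(\hat{\mathbf{h}}_j,P)$ is obtained. You prove it by a direct primal argument---the optimal value of a maximization LP is concave in its right-hand side because convex combinations of feasible solutions are feasible for convex combinations of capacities---whereas the paper works through LP duality: it fixes an optimal dual solution $(\mathbf{g}_0^*,\mathbf{g}_1^*)$ at $\mathbf{f}$ (strong duality), notes the dual feasible region is independent of the injected flow, and then bounds each $T(\hat{\mathbf{h}}_j,P)$ from above via weak duality. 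Your route is slightly more elementary since it avoids the dual altogether; the paper's route exposes the linear dependence on $\mathbf{f}$ more explicitly. Either proof yields exactly the same redistribution with the same guarantee.
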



\begin{proof}
	We prove the proposition through flow decomposition and linear programming duality. See Appendix \ref{appendix:proofproperty} for details.	
\end{proof}

\section{Deterministic Flow Interdiction} \label{sec:deterministic}
In this section, we study the deterministic flow interdiction problem. We first formally define the problem, then prove its computational complexity, and finally introduce our proposed approximation algorithm.

\subsection{Problem Formulation}



The deterministic flow interdiction deals with the case where the interdictor has full knowledge of $P$ and seeks the interdiction strategy that causes the maximum expected throughput reduction.

\begin{definition}[Deterministic Flow Interdiction] \label{def:deterministic}
	Given the set of user paths $P=\{p_1,\ldots,p_k\}$ with initial flow values $\{\lambda_1,\ldots,\lambda_k\}$, the deterministic flow interdiction problem seeks an interdiction strategy $w$ that maximizes $\Lambda(w,P)$.
\end{definition}


\begin{figure}
	\centering
	\includegraphics[width=0.75\linewidth]{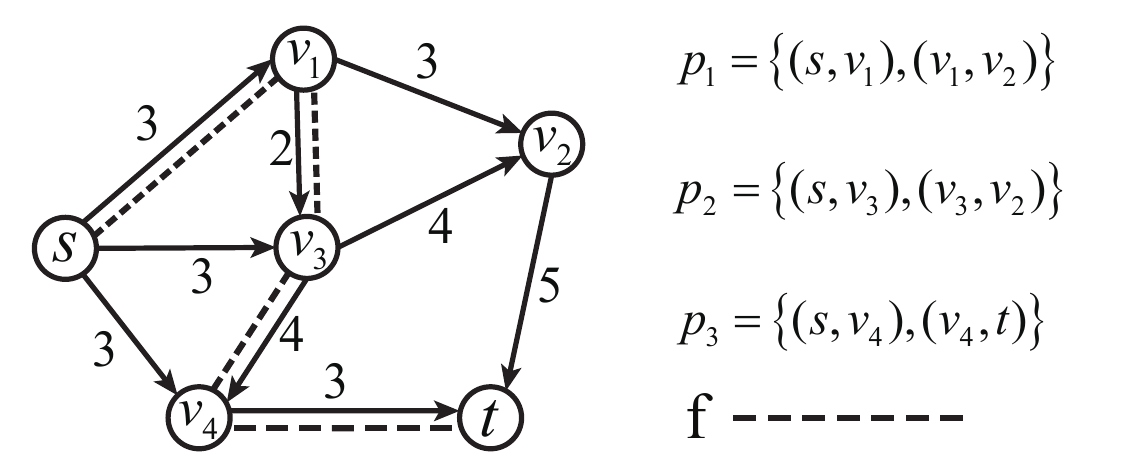}
	\vspace{-3mm}
	\caption{An example network of the flow interdiction problems.}
	\vspace{-2mm}
	\label{fig:example}
\end{figure}

\textbf{Example: }We give an example of the problem. Consider the network in Figure \ref{fig:example}, where the capacities are labeled along the edges. The source and the destination of the interdictor are nodes $s$ and $t$. The interdictor has budget $\gamma=2$. The user paths $P=\{p_1, p_2, p_3\}$ all have an initial flow value of 3. Let $\mathbf{f}$ be the $s$-$t$ flow such that $\mathbf{f}(s,v_1)=\mathbf{f}(v_1,v_3)=\mathbf{f}(v_3,v_4)=\mathbf{f}(v_4,t)=2$ . In this example, the interdiciton strategy $w$ such that $w(\mathbf{f})=1$ is optimal with $\Lambda(w,P)=4$.

\subsection{Computational Complexity}

Before establishing the computational complexity, we first show some structural properties specific to the deterministic flow interdiction problem. Following from Proposition \ref{proposition:property}, there exists an interdiction strategy on the set of single-path flows that is optimal for the deterministic flow interdiction. We further extend this property, showing that there exists an optimal pure interdiction strategy.

\begin{proposition}\label{proposition:property1}
	For the deterministic flow interdiction problem, there exists an optimal (pure) interdiction strategy $w$ such that $w(\mathbf{f}^*)=1$ for some single-path flow $\mathbf{f^*}$.
\end{proposition}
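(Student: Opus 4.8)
The plan is to leverage Proposition \ref{proposition:property} together with the fact that, in the deterministic setting, the user-path set $P$ is fixed, so that $\Lambda(\mathbf{f},P)$ is a fixed scalar attached to each flow $\mathbf{f}$ and the objective $\Lambda(w,P)=\sum_{\mathbf{f}\in\mathcal{F}_{\le\gamma}}w(\mathbf{f})\Lambda(\mathbf{f},P)$ is simply the expectation of these scalars under the distribution $w$. The key observation is that an expectation can never exceed the best outcome in its support, so concentrating all probability mass on a single best flow is at least as good as any mixture. In other words, the statement is just the derandomization of an optimal mixed strategy played against a fixed (non-adversarial) objective.

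Concretely, I would first apply Proposition \ref{proposition:property} to obtain an optimal interdiction strategy $w'$ whose support consists only of single-path flows. Writing $\Lambda(w',P)=\sum_{\mathbf{f}}w'(\mathbf{f})\,\Lambda(\mathbf{f},P)$ as a convex combination over the single-path flows in the support of $w'$, I would select a single-path flow $\mathbf{f}^*$ in this support that attains the maximum value of $\Lambda(\cdot,P)$; by the averaging argument, $\Lambda(\mathbf{f}^*,P)\ge\Lambda(w',P)$. Letting $w$ be the pure strategy with $w(\mathbf{f}^*)=1$, we then have $\Lambda(w,P)=\Lambda(\mathbf{f}^*,P)\ge\Lambda(w',P)$, and since $w'$ is optimal we conclude that $w$ is itself optimal, pure, and supported on a single single-path flow, as claimed.

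The argument is short, and the only point requiring care is the \emph{existence} of a maximizing $\mathbf{f}^*$, i.e., that the maximum of $\Lambda(\cdot,P)$ over the relevant flows is actually attained rather than merely approached (the support of $w'$ could in principle be large or continuous if injected values are allowed to vary). This is where I would appeal to the structural reductions already available: by the preceding Observation we may restrict attention to flows of value exactly $\gamma$, and since $G$ is a finite acyclic graph there are only finitely many $s$-$t$ paths, hence only finitely many single-path flows in $\mathcal{F}_{\gamma}$ to consider. Finiteness guarantees the maximum of $\Lambda(\cdot,P)$ over this set is attained, so no compactness or continuity argument is needed.

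I do not expect any substantive obstacle here, since essentially all of the combinatorial content has already been absorbed into Proposition \ref{proposition:property}. The role of this proposition is only to upgrade "optimal mixture over single-path flows" to "optimal single single-path flow," which follows purely from the linearity of the expected-reduction objective in $w$ once $P$ is fixed; the genuinely nontrivial step, the reduction to single-path flows, is inherited rather than re-proved.
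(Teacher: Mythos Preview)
Your proposal is correct and follows essentially the same argument as the paper: invoke Proposition~\ref{proposition:property} to get an optimal strategy supported on single-path flows, pick a flow in the support maximizing $\Lambda(\cdot,P)$, and use the averaging inequality $\Lambda(\mathbf{f}^*,P)\ge\sum_i w'(\mathbf{q}_i)\Lambda(\mathbf{q}_i,P)$ to conclude that the corresponding pure strategy is optimal. Your additional remark about attainment of the maximum (finiteness of $s$-$t$ paths in a finite DAG) is a nice bit of extra care that the paper leaves implicit.
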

\begin{proof}
	Building on proposition \ref{proposition:property}, let $w$ be an optimal interdiction strategy that takes positive values only on single-path flows $\mathbf{q}_1,\ldots,\mathbf{q}_r$. Let $\mathbf{q}^*\in \arg\max_i\Lambda(\mathbf{q}_i,P)$. Consider the pure strategy $w'$ with $w'(\mathbf{q}^*)=1$. It follows that $\Lambda(w',P)=\Lambda(\mathbf{q}^*,P)\ge\sum_iw(\mathbf{q}_i)\Lambda(\mathbf{q}_i,P)=\Lambda(w,P)$, which proves the existence of an optimal pure strategy.
\end{proof}

From the proof of Propositions \ref{proposition:property} and \ref{proposition:property1}, we can obtain the following corollary.

\begin{corollary}\label{corollary:hardness}
	Given an optimal strategy $w$ to the deterministic flow interdiction problem, we can obtain another optimal strategy $w'$ with $w'(\mathbf{f}^*)=1$ for some single-path flow $\mathbf{f^*}$. 
\end{corollary}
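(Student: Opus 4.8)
The plan is to obtain $w'$ by \emph{composing} the two constructions already used to establish Propositions~\ref{proposition:property} and~\ref{proposition:property1}, and then to observe that optimality of the input strategy forces every inequality along the way to be an equality. Nothing new has to be proved; the corollary is exactly the algorithmic (``we can obtain'') reading of the existence statements.

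First I would feed the given optimal strategy $w$ into the procedure from the proof of Proposition~\ref{proposition:property}. That procedure is constructive: applying flow decomposition to each flow in the (finite) support of $w$ and invoking the linear-programming duality argument yields a new strategy $w_1$ supported on a finite collection of single-path flows $\mathbf{q}_1,\ldots,\mathbf{q}_r$ with $\Lambda(w_1,P)\ge\Lambda(w,P)$. Since $w$ is optimal, no strategy can achieve a strictly larger throughput reduction, so this inequality is in fact an equality and $w_1$ is itself optimal.

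Next I would apply the selection step from the proof of Proposition~\ref{proposition:property1} to $w_1$. Setting $\mathbf{f}^*\in\arg\max_{i}\Lambda(\mathbf{q}_i,P)$ and defining the pure strategy $w'$ by $w'(\mathbf{f}^*)=1$, the fact that a maximum dominates any convex combination of the quantities being maximized gives
\begin{equation*}
\Lambda(w',P)=\Lambda(\mathbf{f}^*,P)\ge\sum_{i=1}^{r}w_1(\mathbf{q}_i)\,\Lambda(\mathbf{q}_i,P)=\Lambda(w_1,P)=\Lambda(w,P).
\end{equation*}
As $w$ is optimal, the two ends of this chain coincide, so $w'$ attains the optimum and is the desired pure strategy concentrated on the single-path flow $\mathbf{f}^*$.

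I do not expect a genuine obstacle here, since the heavy lifting is contained in Proposition~\ref{proposition:property}, which I invoke as a black box. The only points to verify are bookkeeping: that the support produced by decomposing finitely many flows is finite (so the $\arg\max$ is well defined), and that each transformation can only preserve or increase $\Lambda$, which is precisely what lets optimality of $w$ collapse the chain of inequalities into equalities and certify $w'$ as optimal.
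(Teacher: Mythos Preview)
Your proposal is correct and follows essentially the same approach as the paper: compose the constructive procedures from Propositions~\ref{proposition:property} and~\ref{proposition:property1}, then use optimality of $w$ to collapse the chain of inequalities into equalities. The paper's own proof is marginally more direct---it decomposes a \emph{single} flow $\mathbf{f}$ from the support of $w$ rather than all of them, and notes the slightly stronger fact that every resulting scaled single-path component $\frac{\gamma}{val(\mathbf{q}_i)}\mathbf{q}_i$ is optimal (not just the $\arg\max$)---but the underlying idea is identical.
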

\begin{proof}
	For any $\mathbf{f}$ that $w(\mathbf{f})>0$, it can be decomposed into single-path flows $\mathbf{q}_1,\ldots,\mathbf{q}_r$. From the proofs of  Propositions \ref{proposition:property} and \ref{proposition:property1}, it follows that strategies $w_i, i\in\{1,\ldots,r\}$ with $w_i(\frac{\gamma}{val(\mathbf{q}_i)}\mathbf{q}_i)=1$ are all optimal.
\end{proof}

Corollary \ref{corollary:hardness} states that a single path flow that maximizes $\Lambda(\cdot,P)$ can be obtained from an optimal interdiction strategy for the deterministic flow interdiciton problem in polynomial time. Hence, the NP-hardness of finding an optimal single-path flow implies the NP-hardness of the deterministic flow interdiction. Based on this result, we prove the NP-hardness of the deterministic flow interdiction problems.
\begin{proposition}\label{proposition:hardness}
	The deterministic flow interdiction problem is NP-hard.
\end{proposition}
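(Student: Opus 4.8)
The plan is to use Corollary \ref{corollary:hardness} to reduce the question to a purely combinatorial one: since an optimal single-path flow can be extracted in polynomial time from any optimal interdiction strategy, it suffices to prove that computing a single $s$-$t$ path $p$ whose saturation (pushing $\gamma$ units along $p$) maximizes the throughput reduction $\Lambda(\cdot,P)$ is NP-hard. First I would observe that if the per-edge contributions to $\Lambda$ were additive, the optimal path could be found by a longest-path computation, which in a DAG is polynomial; hence any hardness must be driven by the interaction among user paths that share edges, i.e.\ by the fact that $T(\mathbf{f},P)$ is the optimum of a global packing LP (\ref{throughput})--(\ref{utility}) rather than a sum of local per-edge terms.

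Accordingly, I would reduce from a standard NP-complete problem --- \textsc{3-SAT} (or, in keeping with the clique-type inapproximability advertised for general networks, \textsc{Max-Clique}/\textsc{Independent-Set}) --- by building a layered acyclic network in which every $s$-$t$ path encodes a candidate solution. For a \textsc{3-SAT} reduction I would: (i) create one variable gadget per variable, arranged in consecutive layers so that any $s$-$t$ path is forced to make an independent binary true/false choice at each gadget, which guarantees acyclicity; (ii) introduce user paths for the clauses together with shared ``literal'' edges, choosing the capacities $C(e)$ and initial flows $\lambda_i$ so that the residual max-flow on a clause's user paths drops precisely when the interdiction path selects a literal satisfying that clause; and (iii) set every edge capacity to exceed $\gamma$ so that both the low-rate constraint $\gamma\le\min_e C(e)$ and the feasibility $\sum_{p_i\ni e}\lambda_i\le C(e)$ hold. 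The numbers would be arranged so that the maximum achievable $\Lambda$ over single-path flows meets a prescribed threshold if and only if the formula is satisfiable.

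The key steps, in order, are: fix the layered gadget and the user-path set $P$; compute, for an arbitrary interdiction path $p$, a closed form for $T(\mathbf{f},P)$ by solving the associated packing LP, using its max-flow/min-cut dual so that the value becomes a clean function of which shared edges $p$ hits; show that the induced $\Lambda$ is monotone in the set of satisfied clauses and equals the target threshold exactly at satisfying assignments; and finally invoke Corollary \ref{corollary:hardness} to transfer the hardness from the single-path problem back to the deterministic flow interdiction problem of Definition \ref{def:deterministic}.

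I expect the main obstacle to be step (ii) together with the LP evaluation: because $\Lambda$ is a global LP quantity rather than a sum of independent edge effects, the gadget must \emph{decouple} the clause contributions just enough that the packing LP has a predictable optimum for every path $p$, while still \emph{coupling} the shared literal edges so that the instance does not collapse into a polynomial-time longest-path problem on the DAG. Certifying the LP optimum via a matching dual cut for every path choice --- and thereby ruling out a spurious, smaller-than-intended reduction obtained by fractionally reassigning the $\tilde{\lambda}_i$ --- is the delicate part, and it is also where the acyclicity of $G$ should be exploited to keep the residual flow structure tractable.
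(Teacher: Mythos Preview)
Your high-level plan---invoke Corollary~\ref{corollary:hardness} and reduce \textsc{3-SAT} to the single-path version via layered variable gadgets and one user path per clause---is exactly the paper's route. However, your diagnosis of \emph{where} the non-additivity must come from is off, and this pushes you toward an unnecessarily complicated construction. From ``per-edge additivity would make this a DAG longest-path problem'' you conclude that ``hardness must be driven by the interaction among user paths that share edges,'' i.e.\ by the coupling in the packing LP. That inference is a non-sequitur: even when the user paths are pairwise \emph{edge-disjoint}, $\Lambda$ is already non-additive over the edges of the interdiction path, because hitting two edges of the \emph{same} user path $p_i$ yields no more reduction than hitting one. This coverage-type submodularity (Lemma~\ref{lemma:submodular}) is the true source of hardness, and the paper's Remark after Proposition~\ref{proposition:hardness} states explicitly that the problem remains NP-hard with disjoint user paths.

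Concretely, the paper takes all capacities and $\gamma$ equal to $1$, makes the clause paths $p_1,\dots,p_k$ node-disjoint, and threads the interdictor's $s$--$t$ routes \emph{through} designated edges of those clause paths: the $v_{i0}$--$u_{i+1}$ (resp.\ $v_{i1}$--$u_{i+1}$) segment visits one specific edge of each $p_j$ whose clause contains $\bar x_i$ (resp.\ $x_i$). With disjoint user paths and unit data, the LP~(\ref{throughput}) decouples completely: $T(\mathbf f,P)$ is simply the number of clause paths \emph{not} touched by the interdiction path, so $\Lambda=k$ iff every clause path is hit, iff the encoded assignment satisfies every clause. There is no dual certificate to exhibit, no fractional reassignment of $\tilde\lambda_i$ to worry about, and no shared literal edges among user paths. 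Your anticipated ``main obstacle''---controlling the global LP optimum across all path choices---evaporates once you locate the non-additivity in the coverage structure rather than in inter-path coupling.
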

\begin{proof}
	 The proof is done by reduction from the 3-satisfiability problem, which is a classical NP-Complete problem \cite{cite:karp}. See Appendix \ref{appendix:proofhardness} for the details.
\end{proof}
\textbf{Remark:} From the proof of Proposition \ref{proposition:hardness}, we have that even when the user paths are disjoint, the deterministic problem is still NP-hard.
%


\subsection{Approximation Algorithm}
Before presenting the algorithm, we extend some previous definitions. For any subset of edges $A\subseteq E$, imagine that the interdictor can interdict the edges in $A$ by reducing their capacities by $\gamma$. We extend the definition of $\Lambda(\cdot,P)$ to $A$ as $\Lambda(A,P)=\sum_i{\lambda}_i-T(A,P)$, where $T(A,P)$ is the optimal value of the maximization problem (\ref{throughput}) with $\tilde{C}_A(e)=C(e)-\gamma\cdot\mathbbm{1}_{\{e\in A\}}$. This provides an interpretation of $\Lambda(\cdot,P)$ as a set function on all subsets of $E$.
 Note that each single-path flow $\mathbf{f}$ can be equivalently represented as a set of edges $E_{\mathbf{f}}$ with $\mathbf{f}(e)=\gamma$ if and only if $e\in E_{\mathbf{f}}$. It follows that $\Lambda(\mathbf{f},P)=\Lambda(E_{\mathbf{f}},P)$, which links the definition of $\Lambda(\cdot,P)$ on single-path flows to that on sets of edges. 
 
 Our algorithm works on the optimization problem below.
\begin{align}
\text{maximize} \quad&\Lambda(E_\mathbf{f},P)\label{transform}\\
\textbf{s.t.}\quad & E_\mathbf{f} \mbox{ forms an $s$-$t$ path.} \nonumber
\end{align}
Let $E_\mathbf{f^*}$ be the optimal solution to (\ref{transform}) and $\mathbf{f}^*$ be its corresponding single-path flow. By Proposition \ref{proposition:property}, the strategy $w$ with $w(\mathbf{f^*})$ is an optimal interdiction strategy, and $\Lambda(E_{\mathbf{f}^*},P)=\Lambda(w,P)$. Therefore, through approximating problem (\ref{transform}), our algorithm translates to an approximation to the deterministic flow interdiction problem. In the sequel, to better present the main idea of our algorithm, we first discuss the case where the user paths are edge-disjoint. After that, we generalize the results to the non-disjoint case.

\textit{C.1) Disjoint User Paths:}
When the user paths are edge-disjoint, for some interdicted edges $A\subseteq E$ and user paths $\{p_1,\ldots,p_k\}$ with initial values $\{\lambda_1,\ldots,\lambda_k\}$, the optimal solution to the max-flow problem (\ref{throughput}) can be easily obtained as $\tilde{\lambda}_{iA}=\min\left(\lambda_i,\min_{e\in p_i}\tilde{C}_A(e)\right)$ for all $i$. It follows that the throughput reduction can be written as the sum of the throughput reduction on each paths, i.e., $\Lambda(A,P)=\sum_i(\lambda_i-\tilde{\lambda}_{iA})$. Based on this, we reason below that the set function $\Lambda(\cdot,P)$ has two important properties: \textit{monotonicity and submodularity}.

\begin{lemma}\label{lemma:submodular}
	Consider $\Lambda(\cdot,P):2^E\mapsto \mathbf{R}^*$ as a set function. $\Lambda$ is:
	\begin{enumerate}
		\item Monotone: $\Lambda(A,P)\le \Lambda(B,P)$ for all $A\subseteq B$;
		\item Submodular: for all $A,B\subseteq E, e\in E$, if $A\subseteq B$, then $\Lambda(A\cup\{e\},P)-\Lambda(A,P)\ge \Lambda(B\cup\{e\},P)-\Lambda(B,P)$.
	\end{enumerate}
\end{lemma}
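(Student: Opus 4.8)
The plan is to exploit the additive, per-path structure that edge-disjointness affords. Since the paths are edge-disjoint, the excerpt already records the closed form $\tilde{\lambda}_{iA}=\min\bigl(\lambda_i,\min_{e\in p_i}\tilde{C}_A(e)\bigr)$ and hence $\Lambda(A,P)=\sum_i g_i(A)$ with $g_i(A):=\lambda_i-\tilde{\lambda}_{iA}$. Because a finite sum of monotone (respectively submodular) set functions is again monotone (respectively submodular), it suffices to prove that each summand $g_i$ enjoys both properties. I would therefore reduce the whole lemma to a one-path statement and analyze $g_i$ in isolation.

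The first step is to express $g_i$ through the bottleneck of path $p_i$. Writing $b_i(A):=\min_{e\in p_i}\tilde{C}_A(e)$ and $h(M):=\max(0,\lambda_i-M)$, we have $g_i(A)=h(b_i(A))$, where $h$ is non-increasing and convex (indeed piecewise linear). Since $\tilde{C}_A(e)=C(e)-\gamma\,\mathbbm{1}_{\{e\in A\}}$, only the edges of $A$ lying on $p_i$ affect $b_i$, and disjointness ensures these never interact across different paths. Monotonicity is then immediate: enlarging $A$ can only decrease each $\tilde{C}_A(e)$, hence decrease $b_i(A)$, hence increase $g_i(A)=h(b_i(A))$.

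For submodularity, fix $A\subseteq B$ and an edge $e$; the cases $e\in B$ are trivial (the marginal gain at $B$ is zero while that at $A$ is non-negative), so I take $e\notin B$, and if $e\notin p_i$ both marginal gains vanish. When $e\in p_i$, adding $e$ changes only $\tilde{C}(e)$, so the bottleneck updates as $b_i(A\cup\{e\})=\min\bigl(b_i(A),C(e)-\gamma\bigr)$ and likewise for $B$. Setting $c:=C(e)-\gamma$, the marginal gain of $e$ at a set with bottleneck $b$ is $\phi(b):=h(\min(b,c))-h(b)$, which equals $0$ when $b\le c$ and $h(c)-h(b)$ when $b>c$. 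The crux is that $\phi$ is non-decreasing in $b$: on the region $b>c$ it is the constant $h(c)$ minus the non-increasing function $h(b)$. Since $B\supseteq A$ forces $b_i(B)\le b_i(A)$, we obtain $\phi(b_i(B))\le\phi(b_i(A))$, i.e. the marginal gain of $e$ with respect to $B$ does not exceed that with respect to $A$. Summing over all paths yields the desired inequality.

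The main obstacle is this last submodularity step, and in particular the verification that $\phi$ is monotone. Intuitively it says that the benefit of interdicting an edge is dictated solely by the path's current bottleneck, and once the bottleneck has already been pushed below $\lambda_i$ further interdiction on the same path buys less. The argument hinges on two structural facts working together: the convex, piecewise-linear shape of $h$, and the monotonicity of the bottleneck $b_i(\cdot)$ in $A$; the case analysis on whether the freshly reduced capacity $C(e)-\gamma$ actually becomes the new bottleneck is where care is needed.
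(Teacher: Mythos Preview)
Your proof is correct. Both you and the paper begin with the same per-path decomposition $\Lambda(A,P)=\sum_i g_i(A)$, reducing the lemma to a single-path statement; the difference is in how each $g_i$ is then handled.

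The paper rewrites $g_i(A)=\lambda_i+\max\bigl(-\lambda_i,\ \max_{e\in p_i}\{-C(e)+\gamma\mathbbm{1}_{\{e\in A\}}\}\bigr)$ and asserts that ``the maximum of a set of submodular functions is also submodular''. That assertion is false in general: for instance, with modular functions $f_1(S)=2\mathbbm{1}_{\{a\in S\}}+2\mathbbm{1}_{\{b\in S\}}$ and $f_2(S)=3\mathbbm{1}_{\{c\in S\}}$ on $\{a,b,c\}$, the marginal of $a$ at $\{c\}$ is $0$ but at $\{b,c\}$ is $1$. The paper's conclusion is nevertheless true because each function inside the max depends on a \emph{single} ground-set element, and in that special case the marginal of $e'$ is $\max(0,\,a_{e'}+b_{e'}-g(A))$, which is antitone in $g(A)$ and hence in $A$; but this justification is not supplied.

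Your bottleneck route sidesteps this issue entirely. By writing $g_i(A)=h(b_i(A))$ and showing that the marginal gain $\phi(b)=h(\min(b,c))-h(b)$ is non-decreasing in the current bottleneck $b$, you obtain submodularity directly from $b_i(B)\le b_i(A)$. This is a cleaner and fully rigorous argument; it also makes explicit the structural reason submodularity holds, namely that interdicting an additional edge on a path helps only to the extent that it beats the existing bottleneck.
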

\begin{proof}
	The monotonicity is easily seen from the definition of $\Lambda$. The proof of submodularity is also straightforward. Note that for each user path $i$,
	\begin{align*}
	 \lambda_i-\tilde{\lambda}_{iA}&=\lambda_i-\min(\lambda_i,\min_{e\in p_i}\{C(e)-\mathbbm{1}_{\{e\in A\}}\})\\
	 &=\lambda_i+\max(-\lambda_i,\max_{e\in p_i}\{-C(e)+\mathbbm{1}_{\{e\in A\}}\})
	 \end{align*}
	  Since constant and linear functions are submodular, and the maximum of a set of submodular functions is also submodular, it follows that $\Lambda(A,P)=\sum_i(\lambda_i-\tilde{\lambda}_{iA})$ is submodular.
\end{proof}

Intuitively, an $s$-$t$ path with large throughput reduction should have many intersections with different user paths. This intuition, combined with the monotonicity and submodularity of $\Lambda$, may suggest an efficient greedy approach to the optimization problem (\ref{transform}) that iteratively selects the edge with the maximum marginal gain with respect to $\Lambda$ while sharing some $s$-$t$ path with the edges that have already been selected. However, this is not the whole picture since such greedy selection might get stuck in some short $s$-$t$ path and lose the chance of further including the edges that contribute to the throughput reduction. The latter aspect indicates the necessity of extensive search over the set of all $s$-$t$ paths, but the number of $s$-$t$ paths grows exponentially with $n$. Therefore, an algorithm with good performance guarantee and low time complexity must strike a balance between greedy optimization that harnesses the properties of $\Lambda$, and extensive search that avoids prematurely committing to some short path.
The algorithm we propose, named as the Recursive Greedy algorithm, achieves such balance. It is based on the idea of \cite{cite:submodular}. The details of the algorithm are presented in \textbf{Algorithm \ref{algorithm:greedy}} . In the description and analysis of the algorithm, $\Lambda_X(A,P)=\Lambda(A\cup X,P)-\Lambda(X,P)$ for $X,A\subseteq E$ represents the marginal gain of set $A$ with respect to $X$. We use $\log$ to denote the logarithm with base two. For two nodes $u_1,u_2\in V$, the shortest $u_1$-$u_2$ path is defined as the $u_1$-$u_2$ path with the smallest number of edges. 



The recursive function $RG$ lies at the heart of the Recursive Greedy algorithm. $RG$ takes four parameters: source $u_1$, destination $u_2$, constructed subpath $X$ and recursion depth $i$. It constructs a path from $u_1$ to $u_2$ that has a large value of $\Lambda_X(\cdot,P)$ by recursively searching for a sequence of good anchors and greedily concatenating the sub-paths between anchors. The base case of the recursion is when the depth $i$ reaches zero, then $RG$ returns the shortest path between $u_1$ and $u_2$ if there exists one (step 2). Otherwise, it goes over all the nodes $v$ in $V$ (step 8), using $v$ as an anchor to divide the search into two parts. For each $v$, it first calls a sub-procedure to search for sub-path from $u_1$ to $v$ that maximizes $\Lambda_X(\cdot,P)$, with $i$ decremented by 1 (step 9). After the first sub-procedure returns $E_{\mathbf{f}_1}$, it calls a second sub-procedure for sub-paths from $v$ to $u_2$ (step 10). Note that the second sub-procedure is performed on the basis of the result of the first one, which reflects the greedy aspect of the algorithm. The two sub-paths concatenated serve as the $u_1$-$u_2$ path that $RG$ obtains for anchor $v$. Finally, $RG$ returns the path that maximizes $\Lambda_X(\cdot,P)$ over the ones that it has examined over all anchors (steps 11, 12 and 13).

The Recursive Greedy algorithm starts by invoking $RG(s,t,\emptyset,I)$ with $I$ as the initial recursion depth. In the following, we show that the algorithm achieves a desirable performance guarantee as long as $I$ is greater than certain threshold.
An illustration of the algorithm with $I=2$ on the previous example is shown in Figure \ref{fig:recursion1}. The optimal solution is returned by the path with anchors $v_1,v_3,v_4$.

\begin{figure}
	\centering
	\includegraphics[width=1\linewidth]{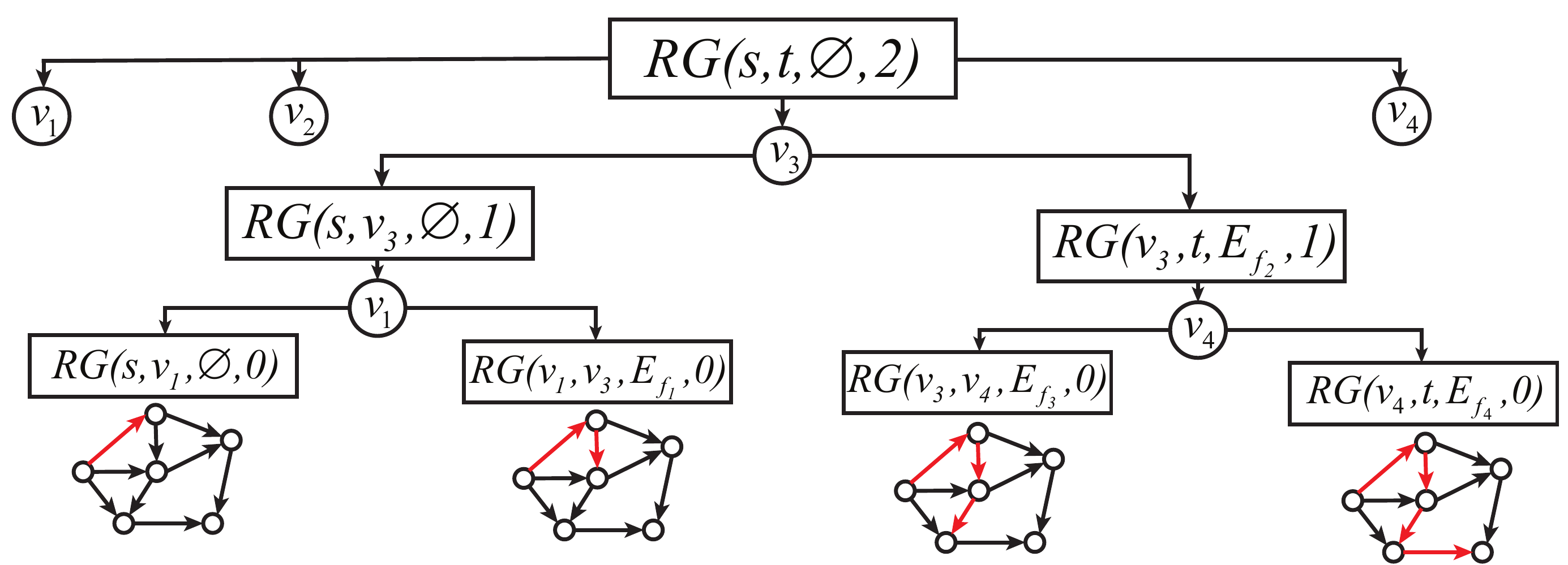}
	\vspace{-3mm}
	\caption{Illustration of the Recursive Greedy algorithm (with some intermediate steps omitted) on the example of Figure \ref{fig:example}, where $E_{\mathbf{f}_1},\ldots,E_{\mathbf{f}_4}$ are used to denote the sub-paths constructed during the recursion for ease of notation. }
	\vspace{-2mm}
	\label{fig:recursion1}
\end{figure}

\begin{algorithm}
	\caption{The Recursive Greedy Algorithm}
	\begin{algorithmic}[1]	\label{algorithm:greedy}		
		\REQUIRE{Network graph $G(V,E)$,  user paths $P=\{p_1,\ldots,p_2\}$ with initial flow values $\{f_1,\ldots,f_k\}$, Interdictor's source $s$, destination $t$ and budget $\gamma$ }\\
		\ENSURE{The optimal $s$-$t$ path $E_\mathbf{f}$}
		
		\STATE \textbf{Run: } $RG(s,t,\emptyset,I)$\\
		\textit{The Recursive Function $RG(u_1,u_2,X,i)$:}
		\STATE $E_\mathbf{f}:=$ shortest $u_1$-$u_2$ path.
		\IF{$E_\mathbf{f}$ does not exist}
		\RETURN Infeasible
		\ENDIF
		\IF{$i=0$}
		\RETURN $E_\mathbf{f}$
		\ENDIF
		\STATE $r:=\Lambda_X(E_\mathbf{f},P)$.
		\FOR{all $v\in V$}
		\STATE $E_{\mathbf{f}_1}:=RG(u_1,v,X,i-1)$.
		\STATE $E_{\mathbf{f}_2}:=RG(v,u_2,X\cup E_{\mathbf{f}_1},i-1)$.
		\IF{$\Lambda_X(E_{\mathbf{f_1}}\cup E_{\mathbf{f_2}},P)>r$}
		\STATE $r:=\Lambda_X(E_{\mathbf{f_1}}\cup E_{\mathbf{f_2}},P)$, $E_\mathbf{f}:=E_{\mathbf{f}_1}\cup E_{\mathbf{f}_2}$.
		\ENDIF		
		\ENDFOR
		\RETURN{$E_\mathbf{f}$}
	\end{algorithmic}
\end{algorithm}


\begin{theorem}\label{thm:recgreedy}
	If $I\ge \lceil\log d\rceil$, the Recursive Greedy algorithm returns an $s$-$t$ path $E_\mathbf{f}$ with $\Lambda(E_\mathbf{f},P)\ge \frac{1}{\lceil\log d\rceil+1}\Lambda(E_\mathbf{f^*},P)$, where $d$ is the length of $E_{\mathbf{f}^*}$.
\end{theorem}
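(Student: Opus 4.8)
The plan is to prove a stronger, self-contained invariant about the recursive function $RG$ by induction on the recursion depth $i$, using only the monotonicity and submodularity of $\Lambda(\cdot,P)$ from Lemma \ref{lemma:submodular}. Concretely, I would establish the following claim: for any source $u_1$, destination $u_2$, constructed subpath $X$, depth $i$, and any integer $j$ with $0\le j\le i$, if there exists a $u_1$-$u_2$ path $Q$ with at most $2^{j}$ edges, then $RG(u_1,u_2,X,i)$ returns a path $E_\mathbf{f}$ satisfying $\Lambda_X(E_\mathbf{f},P)\ge \frac{1}{j+1}\Lambda_X(Q,P)$. Decoupling the depth $i$ from the length exponent $j$ in this way is exactly what makes the final bound depend on $\lceil\log d\rceil$ rather than on the possibly larger $I$, so that the hypothesis $I\ge\lceil\log d\rceil$ (an inequality) still yields the stated ratio.

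For the base case $i=0$ only $j=0$ is admissible, so $Q$ has at most one edge; since $G$ is simple, the shortest $u_1$-$u_2$ path returned in step 2 coincides with $Q$, giving the bound with constant $1$. For the inductive step at depth $i\ge 1$ with a witness path $Q$ of at most $2^{j}$ edges, I would split $Q$ at its middle vertex $v^{*}$ into $Q_1$ (from $u_1$ to $v^{*}$) and $Q_2$ (from $v^{*}$ to $u_2$), each with at most $2^{j-1}$ edges. Because the loop of $RG$ examines every $v\in V$, it in particular examines $v=v^{*}$; the calls then return $E_{\mathbf{f}_1}=RG(u_1,v^{*},X,i-1)$ and $E_{\mathbf{f}_2}=RG(v^{*},u_2,X\cup E_{\mathbf{f}_1},i-1)$, to both of which the induction hypothesis applies with exponent $j-1\le i-1$, using the monotone submodular function $\Lambda_{X}(\cdot,P)$ and its contraction $\Lambda_{X\cup E_{\mathbf{f}_1}}(\cdot,P)$.

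The crux is the algebra combining the two recursive guarantees. Writing $m=\Lambda_X(Q,P)$ and $g_1=\Lambda_X(E_{\mathbf{f}_1},P)$, the first call gives $g_1\ge\frac{1}{j}\Lambda_X(Q_1,P)$ and the second gives $\Lambda_{X\cup E_{\mathbf{f}_1}}(E_{\mathbf{f}_2},P)\ge\frac{1}{j}\Lambda_{X\cup E_{\mathbf{f}_1}}(Q_2,P)$. Subadditivity (a consequence of submodularity) yields $\Lambda_X(Q_1,P)+\Lambda_X(Q_2,P)\ge m$, hence $\Lambda_X(Q_2,P)\ge m-jg_1$; combined with the monotonicity bound $\Lambda_X(E_{\mathbf{f}_1}\cup Q_2,P)\ge\Lambda_X(Q_2,P)$ this gives $\Lambda_{X\cup E_{\mathbf{f}_1}}(Q_2,P)\ge m-(j+1)g_1$, so that the concatenated path satisfies
\[
\Lambda_X(E_{\mathbf{f}_1}\cup E_{\mathbf{f}_2},P)=g_1+\Lambda_{X\cup E_{\mathbf{f}_1}}(E_{\mathbf{f}_2},P)\ge g_1+\tfrac{1}{j}\bigl(m-(j+1)g_1\bigr)=\tfrac{m-g_1}{j}.
\]
On the other hand, monotonicity gives the trivial bound $\Lambda_X(E_{\mathbf{f}_1}\cup E_{\mathbf{f}_2},P)\ge g_1$. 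Since the value returned by $RG$ is at least the value at anchor $v^{*}$, it is at least $\max\{g_1,(m-g_1)/j\}$; the first argument increases and the second decreases in $g_1$, so this maximum is minimized at $g_1=m/(j+1)$, where both equal $m/(j+1)$, establishing $\Lambda_X(E_\mathbf{f},P)\ge\frac{m}{j+1}$ and closing the induction. I expect this max-of-two-bounds balancing step, together with the bookkeeping showing that the weak marginal $\Lambda_{X\cup E_{\mathbf{f}_1}}(Q_2,P)$ can still be charged against $m$ and $g_1$, to be the main obstacle. Finally, I would invoke the invariant at the top-level call $RG(s,t,\emptyset,I)$ with $Q=E_{\mathbf{f}^*}$: since $E_{\mathbf{f}^*}$ has $d\le 2^{\lceil\log d\rceil}$ edges, the choice $j=\lceil\log d\rceil\le I$ is admissible, and because $X=\emptyset$ gives $\Lambda_\emptyset(\cdot,P)=\Lambda(\cdot,P)$, the invariant yields $\Lambda(E_\mathbf{f},P)\ge\frac{1}{\lceil\log d\rceil+1}\Lambda(E_{\mathbf{f}^*},P)$, as claimed.
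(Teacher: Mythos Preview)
Your proposal is correct, and its overall skeleton---split the target path at its midpoint, apply the inductive guarantee to each half with the appropriate contraction of $\Lambda$, and combine---matches the paper. The execution, however, differs in two respects.

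First, the induction variable: the paper inducts on the length $d$ of the target path, carrying the hypothesis ``$I\ge\lceil\log d\rceil$'' through the recursion via $I-1\ge\lceil\log d\rceil-1=\lceil\log\lceil d/2\rceil\rceil$. You instead induct on the depth $i$ and decouple it from a separate length exponent $j\le i$; this makes the role of the slack $I-\lceil\log d\rceil$ explicit and avoids having to track how the optimal-path length behaves under restriction to a subinterval. Second, and more substantively, the combining algebra is different. The paper sums the two recursive bounds, then uses submodularity twice to push both marginals down to the common contraction $\Lambda_{X\cup E'_{\mathbf f}}$, telescopes, and obtains the self-referential inequality $\Lambda_X(E'_{\mathbf f},P)\ge\frac{1}{\lceil\log d\rceil}\bigl(\Lambda_X(E_{\mathbf f^*},P)-\Lambda_X(E'_{\mathbf f},P)\bigr)$, which rearranges directly to the $\frac{1}{\lceil\log d\rceil+1}$ ratio. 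You instead extract from the first call an upper bound $\Lambda_X(Q_1,P)\le jg_1$, feed it through subadditivity to lower-bound $\Lambda_{X\cup E_{\mathbf f_1}}(Q_2,P)$, and then balance the resulting bound $(m-g_1)/j$ against the trivial monotonicity bound $g_1$. The paper's route is a bit shorter and avoids the case split implicit in the $\max$; your route makes the ``what if the first half already captured a lot'' intuition more visible and does not need the second application of diminishing returns to the common contraction.
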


\begin{proof}
	
	We prove a more general claim, that for all $u_1,u_2\in V,\ X\subseteq E$,  if $I\ge \lceil\log d\rceil$, the procedure $RG(u_1,u_2,X,I)$ returns an $u_1$-$u_2$ path $E_\mathbf{f}$ with $\Lambda_X(E_\mathbf{f},P)\ge \frac{1}{\lceil\log d\rceil+1}\Lambda_X(E_\mathbf{f^*},P)$, where $E_{\mathbf{f}^*}$ is the $u_1$-$u_2$ path that maximizes $\Lambda(\cdot,P)$ and $d$ is the length of $E_{\mathbf{f}^*}$. The theorem follows from the claim by setting $u_1=s$, $u_2=t$ and $X=\emptyset$.
	
	Let the nodes on the path $E_\mathbf{f^*}$ be $\{u_1=v_0,\ldots,v_d=u_2\}$. The proof is done by induction on $d$.	
	First, for the base step, when $d=1$, it means that there exists an edge between $u_1$ and $u_2$, which must be the shortest $u_1$-$u_2$ path. Obviously the procedure examines this path at step 2, and the claim follows. Next, suppose the claim holds for $d\le l$. When $d=l+1$, $I\ge 1$. Let $v^*=v_{\lceil \frac{d}{2}\rceil}$ and $E_{\mathbf{f}_1^*},E_{\mathbf{f}_2^*}$ be the subpaths of $E_{\mathbf{f}^*}$ from $u_1$ to $v^*$ and $v^*$ to $t$, respectively. When $RG$ uses $v^*$ as an anchor, it first invokes $RG(u_1,v^*,X,I-1)$ that returns $E_{\mathbf{f}_1}$ and then invokes $RG(v^*,u_2,X\cup E_{\mathbf{f}_1},I-1)$ that returns $E_{\mathbf{f}_2}$. Let $E'_\mathbf{f}=E_{\mathbf{f}_1}\cup E_{\mathbf{f}_2}$. Our goal is to show that 
	\begin{align}
	\Lambda_X(E'_\mathbf{f},P)\ge\frac{1}{\lceil\log d\rceil+1}\Lambda_X\left(E_{\mathbf{f}^*},P\right), \label{ieq:target}
	\end{align}
	which proves the induction step, since the path $E_{\mathbf{f}}$ that $RG(u_1,u_2,X,I)$ returns must satisfy $\Lambda_X(E_{\mathbf{f}},P)\ge \Lambda_X(E'_{\mathbf{f}},P)$.
	
	Since $I\ge \lceil\log d\rceil$, we have $I-1\ge\lceil\log d\rceil-1 = \lceil\log \frac{d}{2}\rceil= \lceil\log \lceil\frac{d}{2}\rceil\rceil$. As $E_{\mathbf{f}_1^*}$ is a path of length $\lceil d/2\rceil$ from $u_1$ to $v^*$ and $E_{\mathbf{f}_2^*}$ is a path of length $\lfloor d/2\rfloor$ from $v^*$ to $u_2$, by the induction hypothesis,
	\begin{align*}
	 \Lambda_{X}(E_{\mathbf{f}_1},P)&\ge \frac{1}{\lceil\log d\rceil}\Lambda_{X}(E_\mathbf{f_1^*},P),\\
	 \Lambda_{X\cup E_{\mathbf{f}_1}}(E_{\mathbf{f}_2},P)&\ge \frac{1}{\lceil\log d\rceil}\Lambda_{X\cup E_{\mathbf{f}_1}}(E_\mathbf{f_2^*},P).
	 \end{align*}
	By the submodularity of $\Lambda$ (Lemma \ref{lemma:submodular}), we have
	\begin{align*}
	 \Lambda_{X}(E_\mathbf{f_1^*},P)\ge \Lambda_{X\cup E'_{\mathbf{f}}}(E_\mathbf{f_1^*},P)\\ 
	  \Lambda_{X\cup E_{\mathbf{f}_1}}(E_\mathbf{f_2^*},P)\ge \Lambda_{X\cup E'_{\mathbf{f}}}(E_\mathbf{f_2^*},P)
	 \end{align*}
	 Using this, we sum the two inequalities obtained form the induction hypothesis and get
	\begin{small}
	\begin{align*}
	&\Lambda_X(E'_\mathbf{f},P)\ge \frac{1}{\lceil\log d\rceil}\left( \Lambda_X(E_{\mathbf{f}^*_1},P) +  \Lambda_{X\cup E_{\mathbf{f}_1}}(E_{\mathbf{f}^*_2},P)\right)\\
	&\ge \frac{1}{\lceil\log d\rceil}\left(\Lambda_{X\cup E'_{\mathbf{f}}}(E_{\mathbf{f}^*_1},P) +  \Lambda_{X\cup E'_{\mathbf{f}}}(E_{\mathbf{f}^*_2},P)\right).
	\end{align*}
	\end{small}
	Again, by Lemma \ref{lemma:submodular}, we have,
	\[
	\Lambda_{X\cup E'_{\mathbf{f}}}(E_{\mathbf{f}^*_2},P) \ge \Lambda_{X\cup E'_{\mathbf{f}}\cup E_{\mathbf{f}_1^*}}(E_{\mathbf{f}^*_2},P)
	\]
	It follows that 
	\begin{small}	
	\begin{align}
	&\Lambda_X(E'_\mathbf{f},P)\ge \frac{1}{\lceil\log d\rceil}\left( \Lambda_{X\cup E'_{\mathbf{f}}}(E_{\mathbf{f}^*_1},P) +  \Lambda_{X\cup E'_{\mathbf{f}}\cup E_{\mathbf{f}_1^*}}(E_{\mathbf{f}^*_2},P)\right)\nonumber\\
	&=\frac{1}{\lceil\log d\rceil}\left(\Lambda\left(X\cup E'_{\mathbf{f}}\cup E_{\mathbf{f}^*},P\right)-\Lambda\left(X\cup E'_{\mathbf{f}},P\right)\right)\label{eq:def}\\
	&\ge\frac{1}{\lceil\log d\rceil} \left(\Lambda\left(X\cup E_{\mathbf{f}^*},P\right)-\Lambda\left(X\cup E'_{\mathbf{f}},P\right)\right)\label{ieq:monotonicity}\\
	&= \frac{1}{\lceil\log d\rceil}\left(\Lambda_X\left(E_{\mathbf{f^*}},P\right)-\Lambda_X\left(E'_{\mathbf{f}},P\right)\right)\label{eq:finalbound}, 
	\end{align}
\end{small}
	where equality (\ref{eq:def}) follows from the definition of $\Lambda_X$ and that $E_{\mathbf{f}^*}=E_{\mathbf{f}^*_1}\cup E_{\mathbf{f}^*_2}$, inequality (\ref{ieq:monotonicity}) follows from the monotonicity of $\Lambda$ and equality (\ref{eq:finalbound}) follows also from the definition $\Lambda_X$.
	From (\ref{eq:finalbound}), we obtain (\ref{ieq:target}), which concludes the proof.
\end{proof}

\textbf{Time Complexity: }The bound on the Recursive Greedy algorithm's running time is easy to establish. As we invoke at most $2n$ sub-procedures at each level of recursion and the computation of $\Lambda$ takes $O(m)$ time, the time complexity of the algorithm is $O((2n)^{I}m)$. Taking $I=\log n \ge \lceil\log d\rceil$,\footnote{Strictly speaking, we need to set $I=\lceil \log n\rceil$. We omit the ceiling function here for ease of notations.} we get an algorithm with a logarithmic approximation ratio of $1/(\lceil\log d\rceil+1)$ with a quasi-polynomial time complexity of $O((2n)^{\log n}m)$.


\textbf{Remark:} First, note that the proof of Theorem \ref{thm:recgreedy} only relies on the monotonicity and submodularity of $\Lambda$. Therefore, the Recursive Greedy algorithm works for any monotone and submodular function on the subsets of $E$. Second, we can generalize \textbf{Algorithm \ref{algorithm:greedy}} to one that uses more than one anchors at step 8. The generalization is given in Appendix \ref{app:generalization}. When the algorithm uses $a-1$ anchors, it achieves an approximation ratio of $1/(\lceil \log_ad\rceil+1)$ in $O((an)^{(a-1)\log_a n}m)$ time. The parameter $a$ can thus control the tradeoff between the performance guarantee and the time complexity of the algorithm.

\textit{C.2) Non-disjoint User Paths: }When the user paths are not disjoint, the problem becomes more challenging. First, notice that $\tilde{\lambda}_{iA}=\min\left(\lambda_i,\min_{e\in p_i}\tilde{C}_A(e)\right)$ no longer holds due to the constraints in (\ref{capacity}) that couple different $\tilde{\lambda}_i$'s together. More importantly, $\Lambda$ actually loses the submodular property when the user paths are not disjoint, which prevents the direct application of the Recursive Greedy algorithm. We tackle the issues through approximating $\Lambda$ with a monotone and submodular function $\bar{\Lambda}$, and run the Recursive Greedy algorithm on $\bar{\Lambda}$. The performance guarantee of the algorithm can be obtained by bounding the gap between $\Lambda$ and $\bar{\Lambda}$.

Let $E_0\subseteq E$ be the set of edges that belong to some user path. This is also the set of edges that appear in constraints (\ref{capacity}). We partition $E_0$ into two sets $E_1$ and $E_2$, where $E_1$ is the set of edges that belong to only one user path, and $E_2$ is the set of edges that belong to at least two (intersecting) user paths. Following this, we define $\bar{\Lambda}(A,P),A\subseteq E$ to be evaluated through the two-phase procedure below. The procedure first goes edges in $E_1$ (Phase I), setting
\[
\tilde{\lambda}_{iA}^{(1)}:=\min\left(\lambda_i,\min_{e\in p_i,e\in E_1}\{\tilde{C}_A(e)\}\right),\quad \forall i.
\]
 Then, it goes over edges in $E_2$ (Phase II), setting
 \[
 \tilde{\lambda}_{iA}^{(2)}:=\tilde{\lambda}_{iA}^{(1)}\cdot\prod_{e\in p_i,e\in E_2,\tilde{C}_A(e)\le \sum_{p_j\ni e}\lambda_j}\frac{\tilde{C}_A(e)}{\sum_{p_j\ni e}\lambda_j},\quad\forall i.
 \]
 Finally, it sets $\bar{\Lambda}(A,P)=\sum_i\lambda_i-\sum_i\tilde{\lambda}_{iA}^{(2)}$. 
 
 The procedure uses $\{\tilde{\lambda}_{iA}^{(2)}\}$, a set of flow values on user paths, as an approximate solution to the max-flow problem (\ref{throughput}). The solution is obtained through first setting the flow values to $\{\lambda_i\}$ and then gradually decreasing them until the constraints are satisfied. In Phase I, the flow values are decreased to satisfy the capacity constraints posed by edge in $E_1$. In Phase II, the flow values are further reduced to compensate for the capacity violations on edges in $E_2$ through multiplying a factor $\frac{\tilde{C}_A(e)}{\sum_{p_j\ni e}\lambda_j}$, which is equal to the ratio between the capacity of $e$ after the interdiction and the sum of flow values on $e$ before the interdiction, to the flow value of each user path containing $e$, for each $e\in E_2$. Typically, Phase II overcompensates and thus $\bar{\Lambda}$ is an upper bound of $\Lambda$. But as we will show, the gap between $\bar{\Lambda}$ and $\Lambda$ is moderate and such overcompensation guarantees the submodularity of $\bar{\Lambda}$.


Substituting $\Lambda$ with $\bar{\Lambda}$ in \textbf{Algorithm \ref{algorithm:greedy}}, we obtain the Recursive Greedy algorithm for the case of non-disjoint user paths. We will refer to this algorithm as the \textbf{Extended Recursive Greedy algorithm}. The name is justified by noting that when the user paths are disjoint, $E_2=\emptyset$ and $\bar{\Lambda}=\Lambda$, the Extended Recursive Greedy algorithm degenerates to \textbf{Algorithm \ref{algorithm:greedy}}.

Before analyzing the performance of the algorithm,
we establish two lemmas 
that show the monotonicity and submodularity of $\bar{\Lambda}$, and bound the gap between $\bar{\Lambda}$ and $\Lambda$, respectively. The proofs of the lemmas are given in Appendix \ref{app:prooflemma}
\begin{lemma}\label{lemma:extendproperties}
	Consider $\bar{\Lambda}(\cdot,P):2^E\mapsto \mathbb{R}^*$ as a set function. $\bar{\Lambda}(\cdot,P)$ is monotone and submodular.
\end{lemma}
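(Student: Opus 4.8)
The plan is to reduce the statement to a per-path analysis and then verify both properties directly from the explicit two-phase formula. Writing $\bar{\Lambda}(A,P)=\sum_i g_i(A)$ with $g_i(A):=\lambda_i-\tilde{\lambda}^{(2)}_{iA}$, and recalling that monotonicity and submodularity are each preserved under finite sums, it suffices to show that every $g_i$ is monotone and submodular. For a fixed path $p_i$ I would first rewrite the residual flow value as a product $\tilde{\lambda}^{(2)}_{iA}=L_i(A)\cdot M_i(A)$, where $L_i(A)=\min\bigl(\lambda_i,\min_{e\in p_i\cap E_1}\tilde{C}_A(e)\bigr)$ is the Phase I term and $M_i(A)=\prod_{e\in p_i\cap E_2}\min\bigl(1,\tilde{C}_A(e)/S_e\bigr)$, with $S_e:=\sum_{p_j\ni e}\lambda_j$, is the Phase II term (the $\min(1,\cdot)$ encodes the threshold condition $\tilde{C}_A(e)\le S_e$ in the definition, and $S_e$ is independent of $A$). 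Since every edge of $p_i$ lies in exactly one of $E_1,E_2$, the Phase I term and each factor of $M_i$ depend on disjoint blocks of the indicator variables $\mathbbm{1}_{\{e\in A\}}$. Monotonicity is then immediate: enlarging $A$ only lowers each $\tilde{C}_A(e)$, so $L_i$ and every factor of $M_i$ are non-negative and non-increasing, whence $\tilde{\lambda}^{(2)}_{iA}$ is non-increasing and $g_i$ is non-decreasing.

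For submodularity, write $h_i:=\tilde{\lambda}^{(2)}_{iA}$ and $\Delta_{e}\phi(A):=\phi(A\cup\{e\})-\phi(A)$. Submodularity of $g_i=\lambda_i-h_i$ is equivalent to $\Delta_{e^*}h_i(A)\le\Delta_{e^*}h_i(B)$ for all $A\subseteq B$ and $e^*\notin B$. The key structural observation is that adding a single edge $e^*$ perturbs only one ingredient of the product: if $e^*\notin p_i$ both marginals vanish; if $e^*\in p_i\cap E_1$ it changes only $L_i$, leaving $M_i$ fixed; and if $e^*\in p_i\cap E_2$ it changes only the single factor of $M_i$ indexed by $e^*$, leaving $L_i$ and all other factors fixed. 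This decoupling turns the product rule into a clean sign computation in each case. In the case $e^*\in p_i\cap E_2$, the relevant factor depends on $A$ only through $\mathbbm{1}_{\{e^*\in A\}}$, so its increment $\delta_{e^*}:=\min(1,(C(e^*)-\gamma)/S_{e^*})-\min(1,C(e^*)/S_{e^*})\le 0$ is a constant and $\Delta_{e^*}h_i(A)=\delta_{e^*}\,L_i(A)\,M_i'(A)$, where $M_i'$ collects the remaining (non-negative, non-increasing) factors; since $L_iM_i'$ is non-negative and non-increasing while $\delta_{e^*}\le 0$, the claim follows by flipping the monotone inequality $L_i(A)M_i'(A)\ge L_i(B)M_i'(B)$ upon multiplication by $\delta_{e^*}$.

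The case $e^*\in p_i\cap E_1$ is the crux and the step I expect to be the main obstacle, since there $\Delta_{e^*}h_i(A)=\Delta_{e^*}L_i(A)\cdot M_i(A)$ is a product of two $A$-dependent terms. I would isolate the part $R_A$ of the Phase I minimum that excludes $e^*$ and use the fact that $L_i$ is a minimum of affine functions to show that $\Delta_{e^*}L_i(A)=\min(R_A,C(e^*)-\gamma)-\min(R_A,C(e^*))$ is a non-increasing function of $R_A$, together with $R_A\ge R_B$ whenever $A\subseteq B$; hence $\Delta_{e^*}L_i(A)\le\Delta_{e^*}L_i(B)\le 0$. Combining this with $0\le M_i(B)\le M_i(A)$ through the two-step chain $\Delta_{e^*}L_i(A)M_i(A)\le\Delta_{e^*}L_i(B)M_i(A)\le\Delta_{e^*}L_i(B)M_i(B)$ (the first inequality multiplies $\Delta_{e^*}L_i(A)\le\Delta_{e^*}L_i(B)$ by $M_i(A)\ge 0$, the second uses $\Delta_{e^*}L_i(B)\le 0$ with $M_i(A)\ge M_i(B)$) yields $\Delta_{e^*}h_i(A)\le\Delta_{e^*}h_i(B)$. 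Summing the per-path inequalities over $i$ then establishes both the monotonicity and the submodularity of $\bar{\Lambda}(\cdot,P)$.
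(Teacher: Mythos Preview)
Your proof is correct and follows essentially the same approach as the paper: both establish monotonicity directly from the non-increasing behavior of $\tilde{\lambda}_{iA}^{(2)}$ and then verify submodularity by a case analysis on whether the added edge lies outside $E_0$, in $E_1$, or in $E_2$, exploiting that the Phase~I and Phase~II factors depend on disjoint blocks of indicator variables. Your treatment is somewhat more streamlined---the per-path decomposition $\bar{\Lambda}=\sum_i g_i$, the rewriting of each Phase~II factor as $\min(1,\tilde{C}_A(e)/S_e)$, and the monotonicity-of-$\phi(r)$ argument in the $E_1$ case replace the paper's explicit three-subcase split---but the underlying logic is the same.
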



\begin{lemma}\label{lemma:gap}
	$\Lambda(A,P)\le \bar{\Lambda}(A,P)\le (b+1)\cdot\Lambda(A,P)$ for all $A\subseteq E$, where $b=\max_i{|E_2\cap p_i|}$,\footnote{$|A|$ denotes the cardinality of set $A$} i.e., the maximum number of edges that a user path shares with other user paths.   
\end{lemma}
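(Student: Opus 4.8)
The plan is to prove the two inequalities separately: the lower bound $\Lambda(A,P)\le\bar{\Lambda}(A,P)$ by exhibiting a feasible max-flow solution, and the upper bound $\bar{\Lambda}(A,P)\le(b+1)\Lambda(A,P)$ by a phase-by-phase accounting argument.

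For the lower bound, equivalently $\sum_i\tilde{\lambda}_{iA}^{(2)}\le T(A,P)$, I would show that $\{\tilde{\lambda}_{iA}^{(2)}\}$ is a \emph{feasible} solution to the max-flow problem (\ref{throughput}); optimality of $T(A,P)$ then gives the bound. Constraint (\ref{utility}) is immediate since every factor in the Phase-II product lies in $[0,1]$, so $0\le\tilde{\lambda}_{iA}^{(2)}\le\tilde{\lambda}_{iA}^{(1)}\le\lambda_i$. For the capacity constraints (\ref{capacity}) I would split on the edge type: for $e\in E_1$ the single path through $e$ already satisfies $\tilde{\lambda}_{iA}^{(1)}\le\tilde{C}_A(e)$ by Phase I; for $e\in E_2$ with $\tilde{C}_A(e)\le\sum_{p_j\ni e}\lambda_j$, each path through $e$ carries the factor $\tilde{C}_A(e)/\sum_{p_j\ni e}\lambda_j$, so $\tilde{\lambda}_{iA}^{(2)}\le\lambda_i\cdot\tilde{C}_A(e)/\sum_{p_j\ni e}\lambda_j$ and summing over paths through $e$ gives exactly $\tilde{C}_A(e)$; and for $e\in E_2$ with $\tilde{C}_A(e)>\sum_{p_j\ni e}\lambda_j$ the constraint is trivial since $\sum_{p_i\ni e}\tilde{\lambda}_{iA}^{(2)}\le\sum_{p_i\ni e}\lambda_i<\tilde{C}_A(e)$.

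For the upper bound I would decompose the reduction as $\bar{\Lambda}(A,P)=\sum_i(\lambda_i-\tilde{\lambda}_{iA}^{(1)})+\sum_i(\tilde{\lambda}_{iA}^{(1)}-\tilde{\lambda}_{iA}^{(2)})$, the Phase-I loss plus the Phase-II loss, and fix an optimal solution $\{\tilde{\lambda}_{iA}^*\}$ of (\ref{throughput}). The Phase-I term is bounded by $\Lambda(A,P)$: since $\tilde{\lambda}_{iA}^*$ respects the utility bound and every capacity constraint of an $E_1$-edge (each lying on a single path), we get $\tilde{\lambda}_{iA}^*\le\tilde{\lambda}_{iA}^{(1)}$ for each $i$, hence $\sum_i(\lambda_i-\tilde{\lambda}_{iA}^{(1)})\le\sum_i(\lambda_i-\tilde{\lambda}_{iA}^*)=\Lambda(A,P)$. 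For the Phase-II term I would invoke the elementary inequality $1-\prod_l\alpha_l\le\sum_l(1-\alpha_l)$ for $\alpha_l\in[0,1]$, applied to the factors $\tilde{C}_A(e)/\sum_{p_j\ni e}\lambda_j$ along $p_i$, giving $\tilde{\lambda}_{iA}^{(1)}-\tilde{\lambda}_{iA}^{(2)}\le\lambda_i\sum_{e}(1-\tilde{C}_A(e)/\sum_{p_j\ni e}\lambda_j)$ over the relevant $E_2$-edges on $p_i$. Interchanging the order of summation over paths and edges, and using $\sum_{i:p_i\ni e}\lambda_i=\sum_{p_j\ni e}\lambda_j$, bounds the Phase-II loss by $\sum_{e}(\sum_{p_j\ni e}\lambda_j-\tilde{C}_A(e))$ over the relevant $e\in E_2$. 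Each summand is the capacity violation at $e$ under the initial flows, which I charge to the optimal solution: from $\sum_{p_i\ni e}\tilde{\lambda}_{iA}^*\le\tilde{C}_A(e)$ we get $\sum_{p_j\ni e}\lambda_j-\tilde{C}_A(e)\le\sum_{p_i\ni e}(\lambda_i-\tilde{\lambda}_{iA}^*)$. Interchanging summation once more and bounding the number of relevant $E_2$-edges on any path by $b=\max_i|E_2\cap p_i|$ yields Phase-II loss $\le b\,\Lambda(A,P)$, so that $\bar{\Lambda}(A,P)\le(b+1)\Lambda(A,P)$.

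The main obstacle is the Phase-II estimate, where the overcompensation is multiplicative and distributed across several shared edges of each path. The argument must first linearize the product loss via $1-\prod\alpha_l\le\sum(1-\alpha_l)$, and then carefully charge each edge's capacity violation to the optimal max-flow's reduction on the paths crossing that edge; the factor $b$ emerges precisely because a single path's reduction can be charged once for each of its (up to $b$) shared edges. The Phase-I bound and the feasibility step are comparatively routine once the correct per-path comparison $\tilde{\lambda}_{iA}^*\le\tilde{\lambda}_{iA}^{(1)}$ is identified.
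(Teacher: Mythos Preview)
Your proposal is correct and follows essentially the same approach as the paper. The upper-bound argument is identical: decompose $\bar{\Lambda}$ into Phase-I and Phase-II losses, bound Phase~I by $\Lambda(A,P)$ via $\tilde{\lambda}_{iA}^*\le\tilde{\lambda}_{iA}^{(1)}$, linearize the Phase-II product with $1-\prod\alpha_l\le\sum(1-\alpha_l)$, switch the order of summation, and charge each edge's capacity violation to the optimal reduction to pick up the factor~$b$. For the lower bound you are actually more explicit than the paper, which only remarks informally that Phase~II ``overcompensates''; your feasibility check of $\{\tilde{\lambda}_{iA}^{(2)}\}$ makes that precise.
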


Now, we are ready to analyze the performance of the Extended Recursive Greedy algorithm.
\begin{theorem}\label{thm:recgreedy2}
	If $I\ge \lceil\log d\rceil$, then the Extended Recursive Greedy algorithm returns an $s$-$t$ path $E_\mathbf{f}$ that satisfies \[\Lambda(E_\mathbf{f},P)\ge \frac{1}{(b+1)\cdot(\lceil\log d\rceil+1)}\Lambda(E_\mathbf{f^*},P),\] where $d$ is the length of $E_\mathbf{f^*}$ and $b=\max_i|E_2\cap p_i|$.
\end{theorem}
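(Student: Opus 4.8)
The plan is to reduce Theorem \ref{thm:recgreedy2} to Theorem \ref{thm:recgreedy} applied to the surrogate function $\bar{\Lambda}$, and then translate the guarantee back to $\Lambda$ using the sandwich inequality of Lemma \ref{lemma:gap}. Since the Extended Recursive Greedy algorithm is precisely Algorithm \ref{algorithm:greedy} run on $\bar{\Lambda}$, and $\bar{\Lambda}$ is monotone and submodular by Lemma \ref{lemma:extendproperties}, the Remark following Theorem \ref{thm:recgreedy} guarantees that the performance bound of the Recursive Greedy algorithm carries over verbatim to $\bar{\Lambda}$. So the work is entirely in matching the two approximation losses and keeping the hypotheses consistent.

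The one subtlety I would make precise is which target path the $\bar{\Lambda}$-guarantee is compared against. A naive application of Theorem \ref{thm:recgreedy} to $\bar{\Lambda}$ would produce a bound relative to the $\bar{\Lambda}$-maximizing path, say $E_{\bar{\mathbf{f}}^*}$ of length $\bar{d}$, and would require $I \ge \lceil \log \bar{d}\rceil$, where $\bar{d}$ is not controlled by $d$. To avoid this mismatch, I would observe that the induction in the proof of Theorem \ref{thm:recgreedy} never invokes the optimality of $E_{\mathbf{f}^*}$: it only uses that $E_{\mathbf{f}^*}$ is a fixed $u_1$-$u_2$ path of length $d$, split at its midpoint $v^*=v_{\lceil d/2\rceil}$ into subpaths of lengths $\lceil d/2\rceil$ and $\lfloor d/2\rfloor$, together with monotonicity and submodularity. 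Hence the proof in fact establishes the stronger per-path statement: for any fixed $s$-$t$ path $Q$ of length $d$ and any monotone submodular $g$, if $I \ge \lceil\log d\rceil$ then $RG$ run on $g$ returns a path $E_\mathbf{f}$ with $g(E_\mathbf{f}) \ge \frac{1}{\lceil\log d\rceil+1}\,g(Q)$. I would then apply this with $g=\bar{\Lambda}$ and $Q=E_{\mathbf{f}^*}$ (the $\Lambda$-optimal path, of length $d$), which preserves the dependence on $\lceil\log d\rceil$ and makes the hypothesis $I \ge \lceil\log d\rceil$ exactly the one assumed in the theorem.

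With this in place the remaining steps are a direct chaining of three inequalities. The per-path guarantee gives $\bar{\Lambda}(E_\mathbf{f},P) \ge \frac{1}{\lceil\log d\rceil+1}\bar{\Lambda}(E_{\mathbf{f}^*},P)$. The left inequality of Lemma \ref{lemma:gap}, namely $\bar{\Lambda} \le (b+1)\Lambda$, yields $\Lambda(E_\mathbf{f},P) \ge \frac{1}{b+1}\bar{\Lambda}(E_\mathbf{f},P)$ for the algorithm's output, while the right inequality $\Lambda \le \bar{\Lambda}$ yields $\bar{\Lambda}(E_{\mathbf{f}^*},P) \ge \Lambda(E_{\mathbf{f}^*},P)$ for the optimum. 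Multiplying these together gives $\Lambda(E_\mathbf{f},P) \ge \frac{1}{(b+1)(\lceil\log d\rceil+1)}\Lambda(E_{\mathbf{f}^*},P)$, as claimed. The main obstacle is exactly the bookkeeping of the previous paragraph, i.e. justifying that the comparison may be made against $E_{\mathbf{f}^*}$ rather than the $\bar{\Lambda}$-optimal path; once that point is settled, the factors $(b+1)$ from Lemma \ref{lemma:gap} and $(\lceil\log d\rceil+1)$ from Theorem \ref{thm:recgreedy} simply compound.
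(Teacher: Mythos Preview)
Your proposal is correct and matches the paper's proof essentially verbatim: apply the Recursive Greedy guarantee (Theorem~\ref{thm:recgreedy}) to the surrogate $\bar{\Lambda}$ via Lemma~\ref{lemma:extendproperties}, then sandwich with Lemma~\ref{lemma:gap} to translate back to $\Lambda$. Your extra paragraph on why the comparison may be taken against the $\Lambda$-optimal path $E_{\mathbf{f}^*}$ (rather than the $\bar{\Lambda}$-maximizer) is a genuine subtlety that the paper's two-line proof silently glosses over, and your resolution---that the induction in Theorem~\ref{thm:recgreedy} never uses optimality of the target path, only its length---is exactly right.
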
 
\begin{proof}
	By Lemma \ref{lemma:extendproperties} and Theorem \ref{thm:recgreedy}, we have $\bar{\Lambda}(E_\mathbf{f},P)\ge \frac{1}{(\lceil\log d\rceil+1)}\bar{\Lambda}(E_\mathbf{f^*},P)$ when $I\ge \lceil\log d\rceil$. Invoking Lemma \ref{lemma:gap}, we obtain that
	\begin{small}
	\begin{align*}
	 \Lambda(E_\mathbf{f},P)&\ge \frac{1}{b+1}\bar{\Lambda}(E_\mathbf{f},P) \ge \frac{1}{(b+1)\cdot(\lceil\log d\rceil+1)}\bar{\Lambda}(E_\mathbf{f^*},P)\\ &\ge \frac{1}{(b+1)\cdot(\lceil\log d\rceil+1)}\Lambda(E_\mathbf{f^*},P),
	 \end{align*}
	 \end{small}
	  which concludes the proof.
\end{proof}
	  
	  Note that the computation of $\bar{\Lambda}$ takes $O(m)$ time. 
	Therefore, taking $I=\log n$ we get a $\frac{1}{(b+1)\cdot(\lceil\log d\rceil+1)}$-approximation algorithm with a time complexity of $O((2n)^{\log n}m)$. Although in the worst case, $b$ can be at the same order as $n$.  In the cases, $b$ is of $O(\log n)$,  and the Extended Recursive Greedy algorithm still enjoys a logarithmic approximation ratio.

\section{Robust Flow Interdiction} \label{sec:robust}
In this section, we investigate the robust flow interdiction problem. Following the road map of deterministic flow interdiction, we first describe the formal definition of the problem, then show its computational complexity, and finally present the approximation framework for the problem.

\subsection{Problem Formulation}
While deterministic flow interdiction considers the case where the interdictor has definitive knowledge of the user paths, robust flow interdiction concerns scenarios where such knowledge is not available.
 We model this more complicated situation using the robust optimization framework \cite{cite:robust}. Instead of having certain knowledge of $P$, the interdictor only knows that $P$ lies in an uncertainty set $\mathcal{U}=\{P_1,\ldots,P_\xi\}$. Each $P_l=\{p_{l1},\ldots,p_{lk_l}\}\in\mathcal{U}$, associated with initial flow values $\{\lambda_{l1},\ldots,\lambda_{lk_l}\}$, is a candidate set of paths that the users are operating on. The interdictor aims to hedge against the worst case, maximizing the minimum throughput reduction achieved over all candidates $P$.
\begin{definition}[Robust Flow Interdiction] \label{def:robust}
	Given the uncertain set $\mathcal{U}=\{P_1,\ldots,P_\xi\}$ of the user paths and the associated initial flow values on user paths for each $P\in\mathcal{U}$, the robust flow interdiction problem seeks an interdiction strategy $w$ that maximizes the worst case throughput reduction, i.e., $w\in\arg\max_{w'}\min_{P\in\mathcal{U}}\Lambda(w',P)$. 
\end{definition}
\textbf{Example:}
As an example of the robust flow interdiction problem, we again consider the network in Figure \ref{fig:example}. The interdictor has source $s$, detination $t$ and budget $\gamma=2$. Assume that the interdictor only knows that the users are sending flow on either $\{p_1,p_2\}$ or $\{p_1,p_3\}$, and the initial flow values on $p_1,p_2,p_3$ are all three.  This corresponds to the robust flow interdiction with $\mathcal{U}=\{\{p_1,p_2\},\{p_1,p_3\}\}$. Let $\mathbf{f}_1$ be the $s$-$t$ flow such that $\mathbf{f}_1(s,v_3)=\mathbf{f}_1(v_3,v_4)=\mathbf{f}_1(v_4,t)=2$ and $\mathbf{f}_2$ be the $s$-$t$ flow such that $\mathbf{f}_2(s,v_1)=\mathbf{f}_2(v_1,v_3)=\mathbf{f}_2(v_3,v_2)=\mathbf{f}_2(v_2,t)=2$. The optimal strategy in this case is $w(\mathbf{f}_1)=1/3,w(\mathbf{f}_2)=2/3$, and the worst case throughput reduction equals $8/3$ as $\Lambda(w,\{p_1,p_2\})=\Lambda(w,\{p_1,p_3\})=8/3$. Note that in this example, no pure interdiction strategy can achieve a worst case throughput reduction of $8/3$, which demonstrates the superiority of mixed strategies in the robust flow interdiction setting.

The robust flow interdiction problem subsumes the deterministic one as a special case by setting $\mathcal{U}=\{P\}$. Therefore, we immediately have the following proposition.
\begin{proposition}\label{proposition:hardness1}
	The robust flow interdiction problem is NP-hard.
\end{proposition}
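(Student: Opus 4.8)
The plan is to prove Proposition~\ref{proposition:hardness1} by a direct reduction from the deterministic flow interdiction problem, exploiting the observation already stated in the excerpt that robust flow interdiction generalizes the deterministic version. Since Proposition~\ref{proposition:hardness} has established that deterministic flow interdiction is NP-hard, the cleanest route is to exhibit a polynomial-time mapping from an arbitrary instance of deterministic flow interdiction to an instance of robust flow interdiction whose optimal value coincides, so that any algorithm solving the latter would solve the former.

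First I would take an arbitrary instance of deterministic flow interdiction: a network $G(V,E)$ with capacities $C$, interdictor source $s$, destination $t$, budget $\gamma$, and a single set of user paths $P$ with initial flow values $\{\lambda_1,\ldots,\lambda_k\}$. I would then construct a robust flow interdiction instance on the \emph{same} network $G$, with the same $s$, $t$, $\gamma$, by defining the singleton uncertainty set $\mathcal{U}=\{P\}$. This construction is clearly polynomial (indeed, it is essentially the identity map). The next step is to verify that the objectives agree: for the singleton $\mathcal{U}=\{P\}$, the maximin objective collapses, since $\min_{P'\in\mathcal{U}}\Lambda(w,P')=\Lambda(w,P)$ for every strategy $w$. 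Hence $\arg\max_{w}\min_{P'\in\mathcal{U}}\Lambda(w,P')=\arg\max_{w}\Lambda(w,P)$, meaning an optimal robust strategy is exactly an optimal deterministic strategy for $P$, and the two problems have identical optimal values.

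It then follows that a polynomial-time algorithm (or, for the decision version, a polynomial-time verifier of a target throughput-reduction value) for robust flow interdiction would yield one for deterministic flow interdiction. Since the latter is NP-hard by Proposition~\ref{proposition:hardness}, so is the former. To be careful about the form of hardness, I would phrase the reduction in terms of the natural decision problem (does there exist an interdiction strategy $w$ with worst-case throughput reduction at least some threshold $\tau$?), which for $\mathcal{U}=\{P\}$ is exactly the decision version of deterministic flow interdiction; the reduction preserves yes- and no-instances because the objective values are identical.

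I do not expect any genuine obstacle here, as this is a containment-as-special-case argument rather than a fresh combinatorial reduction. The only point requiring minor care is the one just noted: ensuring that the equivalence is stated at the level of the decision problem so that NP-hardness transfers cleanly, and confirming that the singleton-$\mathcal{U}$ instance is a legal robust instance under Definition~\ref{def:robust}. Both are immediate, so the proof is short and the substantive complexity content is inherited entirely from Proposition~\ref{proposition:hardness}.
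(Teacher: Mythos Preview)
Your proposal is correct and follows exactly the approach the paper takes: the text immediately preceding Proposition~\ref{proposition:hardness1} already remarks that robust flow interdiction subsumes the deterministic problem as the special case $\mathcal{U}=\{P\}$, and the proposition is stated as an immediate consequence of Proposition~\ref{proposition:hardness}. Your write-up simply makes this special-case reduction explicit, which is fine.
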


Before presenting our approximation framework, we present a linear programming (LP) formulation that serves as an alternative solution to the robust flow interdiction problem. According to Proposition \ref{proposition:property}, we can restrict our attention to distributions on the set of single-path flows with value $\gamma$. Therefore, in the following, \textit{the distributions we refer to are all on the set of single-path flows in $\mathcal{F}_{\gamma}$}. We enumerate such single-path flows in an arbitrary order and associate with each single-path flow $\mathbf{f}_i$ a variable $w_i$. Consider the linear program:
 \begin{align}
\quad\text{maxi}&\text{mize } \textstyle z\label{LP:robust}\\
\text{\textbf{s.t. }}& \sum_iw_i\Lambda(\mathbf{f}_i,P)\ge z,\quad \forall P\in \mathcal{U}\nonumber\\
& \sum_iw_i=1\nonumber\\
&\ w_i\ge 0,\quad \forall i \nonumber
\end{align}

Clearly, the solution to the LP corresponds to an optimal interdiction strategy $w$ to the robust flow interdiction problem with $w(\mathbf{f}_i)=w_i$. Hence, formulating and solving the LP is a natural algorithm for the robust flow interdiction. However, as the number of single-path flows can be exponential in the number of nodes $n$, the LP may contain an exponential number of variables. It follows that the algorithm has an undesirable exponential time complexity. We use this algorithm in the simulations to obtain optimal interdiction strategies for comparisons with our approximation framework. 
Another issue arises when the number of single-path flows is exponential in the number of nodes, that is, even outputting the strategy $w$ takes exponential time. This makes it impractical and unfair to compare any sub-exponential time approximation procedure to the optimal solution. We get around this issue by comparing our solution to the optimal interdiction strategy that takes non-zero values on at most $N_0$ single-path flows, where $N_0$ is a pre-specified number  bounded by some polynomial of $n$. We refer to such strategies as $N_0$-bounded strategies. The optimal $N_0$-bounded strategy corresponds to the best strategy that uses at most $N_0$ different interdicting flows. Note that such restriction does not trivialize the problem since we place no limitation on the set but just the number of single-path flows that the interdictor can use.

\subsection{Approximation Framework}
In this section, we present the approximation framework we propose for the robust flow interdiction problem. As a generalization of the deterministic version, the robust flow interdiction is more complicated since it involves maximizing the minimum of a set of functions. The (Extended) Recursive Greedy algorithm cannot be directly adapted to this case. Instead, we design an approximation framework that integrates the Extended Recursive Greedy algorithm as a sub-procedure. The framework only incurs a logarithmic loss in terms of approximation ratio. 
 
The description and analysis of the approximation framework are carried out in three steps. First, we justify that it is sufficient to consider the robust flow interdiction problem with parameters taking rational/integral values. In the second step, building on the rationality/integrality of parameter values, we convert the problem to a sequence of integer linear programs. Finally, we solve the sequence of integer programs through iteratively invoking the Extended Recursive Greedy algorithm.

\subsubsection{Rationalizing the Parameters}
In the first step, we show that not much is lost if we only consider the interdiction strategies that take rational values and restrict the throughput reduction to take integer values. Specifically, let $N=N_0^2+N_0$ and $\mathbb{Q}_N=\{\frac{\beta}{N}:\beta\in\mathbb{N},0\le\beta\le N\}$ be the set of non-negative rational numbers that can be represented with $N$ as denominator. Further, we define $\mathcal{W}_N$ to be the set of strategies that take value in $\mathbb{Q}_N$, i.e., $\mathcal{W}_N=\{w:\mathcal{F}_\gamma\mapsto\mathbb{Q}_N, \sum_\mathbf{f}w(\mathbf{f})=1\}$. We use $w^*$ to represent the optimal $N_0$-bounded interdiction strategy, and $w_N^*$ to represent optimal strategy in $\mathcal{W}_N$.  The following lemma states that $w^*$ can be well approximated by $w_N^*$.

\begin{lemma}\label{lemma:rationalize}
	 For all $P\in\mathcal{U}$, $\Lambda(w^*_N,P)\ge \frac{N_0}{N_0+1}\Lambda(w^*,P)$.
\end{lemma}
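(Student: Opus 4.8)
The plan is to prove the lemma by \emph{constructing} the rationalized strategy $w_N^*$ explicitly, as a rounding of $w^*$ whose weights dominate those of $\frac{N_0}{N_0+1}w^*$ flow-by-flow; the pointwise-in-$P$ bound then comes for free. The enabling observation is that the objective is linear in the strategy, $\Lambda(w,P)=\sum_{\mathbf f}w(\mathbf f)\Lambda(\mathbf f,P)$, and that every single-path flow satisfies $\Lambda(\mathbf f,P)\ge 0$ (the post-interdiction throughput never exceeds $\sum_i\lambda_i$). Hence, if I can build $w_N^*\in\mathcal W_N$ with $w_N^*(\mathbf f)\ge\frac{N_0}{N_0+1}w^*(\mathbf f)$ for \emph{every} single-path flow $\mathbf f$, then multiplying each such inequality by the non-negative coefficient $\Lambda(\mathbf f,P)$ and summing gives $\Lambda(w_N^*,P)\ge\frac{N_0}{N_0+1}\Lambda(w^*,P)$ simultaneously for all $P\in\mathcal U$. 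In this way the entire $P$-dependence is absorbed into the coefficients and the lemma reduces to a single weight-by-weight rounding statement, with no separate appeal to the worst case.

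For the construction, recall that $w^*$ is $N_0$-bounded, so its support $S$ contains at most $N_0$ single-path flows. Writing $c=\frac{N_0}{N_0+1}$, I would set $w_N^*(\mathbf f)=\frac1N\lceil cN\,w^*(\mathbf f)\rceil$ for each flow $\mathbf f$. Rounding up guarantees $w_N^*(\mathbf f)\ge c\,w^*(\mathbf f)$, each value lies in $\mathbb Q_N$ by construction, and flows outside $S$ receive weight $0$ so the domination is trivial there. The only thing to repair is the total mass: since there are at most $N_0$ nonzero terms, $\sum_{\mathbf f}w_N^*(\mathbf f)\le\sum_{\mathbf f\in S}(c\,w^*(\mathbf f)+\tfrac1N)\le c+\tfrac{N_0}{N}$, and with $N=N_0^2+N_0=N_0(N_0+1)$ this equals $\tfrac{N_0}{N_0+1}+\tfrac{1}{N_0+1}=1$. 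Thus the deficit $\delta:=1-\sum_{\mathbf f}w_N^*(\mathbf f)$ is non-negative and, being a difference of integer multiples of $\tfrac1N$, is itself a multiple of $\tfrac1N$. I would add $\delta$ to the weight of any single flow in $S$; this keeps $w_N^*\in\mathcal W_N$, forces the weights to sum to exactly $1$, and only \emph{increases} weights, so the flow-by-flow domination $w_N^*(\mathbf f)\ge c\,w^*(\mathbf f)$ is preserved.

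With this $w_N^*\in\mathcal W_N$ in hand, the pointwise bound is immediate: for every $P\in\mathcal U$, $\Lambda(w_N^*,P)=\sum_{\mathbf f}w_N^*(\mathbf f)\Lambda(\mathbf f,P)\ge\sum_{\mathbf f}c\,w^*(\mathbf f)\Lambda(\mathbf f,P)=c\,\Lambda(w^*,P)$, where the inequality is termwise and valid because each $\Lambda(\mathbf f,P)\ge0$. This is exactly $\Lambda(w_N^*,P)\ge\frac{N_0}{N_0+1}\Lambda(w^*,P)$ for all $P$, as required.

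The main obstacle is the tension between rounding \emph{up} --- which is what forces the per-flow domination, and hence the guarantee that holds uniformly in $P$ --- and keeping the result a genuine probability distribution in $\mathcal W_N$, since naive round-up overshoots the total mass. The resolution is the precise calibration $N=N_0(N_0+1)$: pre-scaling $w^*$ by $c=\frac{N_0}{N_0+1}$ deliberately creates a mass slack of $1-c=\frac1{N_0+1}$, while rounding up at most $N_0$ weights inflates the mass by at most $\frac{N_0}{N}=\frac1{N_0+1}$ --- exactly the slack available. It is this knife-edge balance, rather than any property of $\Lambda$ beyond non-negativity and linearity, that the choice of $N$ is engineered to achieve; a smaller denominator would leave no room to round every weight up, and the argument would break.
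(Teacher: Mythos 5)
Your proposal is correct and matches the paper's own proof almost verbatim: your rounding $\frac{1}{N}\lceil cN\,w^*(\mathbf f)\rceil$ with $c=\frac{N_0}{N_0+1}$ is exactly the paper's $\tilde w^*(\mathbf f)=\frac{\lceil N_0^2 w^*(\mathbf f)\rceil}{N_0^2+N_0}$, and the mass-slack calculation (at most $N_0$ support flows, each overshooting by at most $\frac1N$, absorbed by redistributing the deficit) and the termwise linearity bound are the same. The only cosmetic difference is that you name the constructed strategy $w_N^*$ directly, whereas the paper constructs it as $\tilde w^*$ and then invokes optimality of $w_N^*$ in $\mathcal{W}_N$ to conclude $\Lambda(w_N^*,P)\ge\Lambda(\tilde w^*,P)$.
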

\begin{proof}
Consider $\tilde{w}^*$ such that $\tilde{w}^*(\mathbf{f})=\frac{\lceil N_0^2w^*(\mathbf{f})\rceil}{N_0^2+N_0}$ for all $w^*(\mathbf{f})>0$ and $\tilde{w}^*(\mathbf{f})=0$ otherwise. Since $w^*$ is a $N_0$-bounded strategy, $\sum_{\mathbf{f}} \tilde{w}^*(\mathbf{f})\le \frac{N_0+N_0^2\sum_\mathbf{f}w^*(\mathbf{f})}{N_0^2+N_0}= 1$. Hence, we can augment $\tilde{w}^*$ into a strategy in $\mathcal{W}_N$ by adding $1-\sum_\mathbf{f'}\tilde{w}^*(\mathbf{f}')$ to some $\tilde{w}^*(\mathbf{f})$. With a little abuse of notation, we use $\tilde{w}^*$ to denote the resulting strategy. By the definition of $\tilde{w}^*$, we have
\begin{small}
	\begin{align*}
	\sum_{\mathbf{f}}\tilde{w}^*(\mathbf{f})\Lambda(\mathbf{f},P)\ge \frac{N_0^2}{N_0^2+N_0}\sum_{\mathbf{f}}w^*(\mathbf{f})\Lambda(\mathbf{f},P)=\frac{N_0}{N_0+1}\Lambda(w^*,P).
	\end{align*}
\end{small}
As $\tilde{w}^*\in\mathcal{W}_N$, it follows that $\Lambda(w_N^*,P)\ge \Lambda(\tilde{w}^*,P)=\sum_{\mathbf{f}}\tilde{w}^*(\mathbf{f})\Lambda(\mathbf{f},P)\ge \frac{N_0}{N_0+1}\Lambda(w^*,P)$.
\end{proof}

We now proceed to argue that it suffices to consider the throughput reduction function $\Lambda$ to take integral values that are bounded by some polynomial of $n$. First, when the integrality of $\Lambda$ is not satisfied, we can always use standard scaling and rounding tricks to get a new instance of the problem, where $\Lambda$ takes integral values. Our framework can be applied to the new instance, yielding an interdiction strategy that has almost the same performance guarantee for both the original and the new instances. We defer the formal statement and proof of this to Appendix \ref{app:integrality}, as it involves definitions in subsequent sections. Second, since $\gamma$ is bounded by some polynomial of $n$, $\max_{w,P}\Lambda(w,P)$ is also bounded by some polynomial of $n$.
 Now, let $M=N\max_{w,P}\Lambda(w,P)$. We can thus without loss of generality assume that $M$ is an integer bounded by some polynomial of $n$.

With the above results, we move into the second step, that converts the robust flow interdiction problem into a sequence of integer linear programs.


\subsubsection{Converting into Integer Linear Programs}
Recall the enumeration of single-path flows in the LP (\ref{LP:robust}). This time, we associate each flow $\mathbf{f}_i$ with an integral variable $x_i$. Consider the following integer program $ILP(\kappa)$ parameterized by a positive integer $\kappa\le M$.
 \begin{align}
 \quad\text{minimize } & \textstyle\sum_{i}x_i\label{multiset-mutlicover}\\
 \text{\textbf{s.t. }}& \sum_ix_i\Lambda(\mathbf{f}_i,P)\ge \kappa,\quad \forall P\in \mathcal{U} \label{cover}\\
 &\ x_i\in\mathbb{N},\quad \forall i \label{integrality}
 \end{align}
Each $x_i$ indicates the number of times $\mathbf{f}_i$ is selected. $ILP(\kappa)$ can be interpreted as selecting the single-path flows for the minimum total number of times that achieve a throughput reduction of $\kappa$ for all candidate $P$.

 For each $\kappa$, we denote by $N_\kappa$ the optimal value of $ILP(\kappa)$. If we can obtain an optimal solution $\{x\}$ to $ILP(\kappa)$, then the strategy $w$ with $w(\mathbf{f}_i)=x_i/N_{\kappa}$ satisfies $\min_{P\in\mathcal{U}}\Lambda(w,P)\ge \kappa/N_{\kappa}$. In the following lemma, we show that the strategy constructed according to the solution to the integer program with the maximum value of $\kappa/N_{\kappa}$ is a close approximation to the optimal $N_0$-bounded strategy in terms of worst case throughput reduction. 

\begin{lemma}\label{lemma:multiset}
	 Let $\kappa^*=\arg\max_{1\le\kappa\le M}({\kappa}/{N_\kappa})$. We have
	 $\frac{\kappa^*}{N_{\kappa^*}}\ge \min_{P\in\mathcal{U}}\Lambda(w_N^*,P)\ge \frac{N_0}{N_0+1}\min_{P\in\mathcal{U}}\Lambda(w^*,P)$. 
\end{lemma}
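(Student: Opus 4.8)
The second inequality requires no new work: it is exactly Lemma \ref{lemma:rationalize} with the minimum over $P$ applied to both sides. Since $\Lambda(w_N^*,P)\ge \frac{N_0}{N_0+1}\Lambda(w^*,P)$ holds for \emph{every} $P\in\mathcal{U}$, taking $\min_{P\in\mathcal{U}}$ of each side preserves the inequality. So the entire content lies in the first inequality, and my plan is to reduce it to exhibiting a \emph{single} index $\kappa$ for which $\kappa/N_\kappa$ already dominates $\min_{P\in\mathcal{U}}\Lambda(w_N^*,P)$; because $\kappa^*$ is defined to maximize $\kappa/N_\kappa$ over $1\le\kappa\le M$, producing one such witness suffices.

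The key idea is to read the optimal rational strategy $w_N^*$ directly as an integral feasible point of an instance of $ILP(\cdot)$. First I would set $x_i := N\,w_N^*(\mathbf{f}_i)$; because $w_N^*\in\mathcal{W}_N$ takes values in $\mathbb{Q}_N$, each $x_i$ is a non-negative integer, and $\sum_i w_N^*(\mathbf{f}_i)=1$ translates to $\sum_i x_i=N$. Writing $\Lambda^\star := \min_{P\in\mathcal{U}}\Lambda(w_N^*,P)$, I would then observe that for every $P\in\mathcal{U}$,
\[
\sum_i x_i\,\Lambda(\mathbf{f}_i,P)=N\,\Lambda(w_N^*,P)\ge N\Lambda^\star,
\]
so the vector $\{x_i\}$ satisfies constraint (\ref{cover}) of $ILP(\kappa_0)$ for $\kappa_0 := N\Lambda^\star$. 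By the minimality of the optimal value $N_{\kappa_0}$, feasibility of $\{x_i\}$ forces $N_{\kappa_0}\le \sum_i x_i = N$. Chaining these facts gives
\[
\frac{\kappa^*}{N_{\kappa^*}}\ \ge\ \frac{\kappa_0}{N_{\kappa_0}}\ \ge\ \frac{N\Lambda^\star}{N}\ =\ \Lambda^\star\ =\ \min_{P\in\mathcal{U}}\Lambda(w_N^*,P),
\]
which is precisely the first inequality.

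The one delicate point, and the only real obstacle, is certifying that $\kappa_0$ is a legitimate index, i.e.\ a positive integer with $1\le\kappa_0\le M$. This is exactly where the preceding reduction to integer-valued $\Lambda$ earns its keep: since each $N\,\Lambda(w_N^*,P)=\sum_i x_i\,\Lambda(\mathbf{f}_i,P)$ is then a sum of products of integers, $N\Lambda^\star$ is the minimum of finitely many integers and hence itself an integer. The upper bound $\kappa_0\le M$ is immediate from $\Lambda^\star\le\max_{w,P}\Lambda(w,P)$ together with $M=N\max_{w,P}\Lambda(w,P)$. The remaining corner case is $\Lambda^\star=0$, where $\kappa_0=0$ is not an admissible index; but then the right-hand side of the first inequality is zero and the bound holds trivially, so it can be dispatched separately. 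I expect no further complications beyond bookkeeping, since once integrality is in hand the argument is essentially a feasibility-and-scaling observation.
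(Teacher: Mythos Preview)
Your proposal is correct and mirrors the paper's own proof essentially line for line: the paper also sets $x_i=Nw_N^*(\mathbf{f}_i)$, defines the witness index $\kappa'=\min_{P\in\mathcal{U}}N\Lambda(w_N^*,P)$ (your $\kappa_0$), uses feasibility of $\{x_i\}$ for $ILP(\kappa')$ to get $N_{\kappa'}\le N$, and chains $\kappa^*/N_{\kappa^*}\ge\kappa'/N_{\kappa'}\ge\kappa'/N$. If anything, your treatment is slightly more careful in explicitly dispatching the $\Lambda^\star=0$ corner case that the paper glosses over.
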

\begin{proof}
Define $\kappa'$ to be $\min_{P\in\mathcal{U}}\sum_{\mathbf{f}}Nw^*_N(\mathbf{f})\Lambda(\mathbf{f},P)=\min_{P\in\mathcal{U}}N\Lambda(w^*_N,P)$. Note that $\kappa'$ is a positive integer and $\kappa'\le M$. Thus, by the definition of $\kappa^*$, we have $\kappa'/N_{\kappa'}\le \kappa^*/N_{\kappa^*}$. 
Also, observe that the solution $\{x\}$ with $x_i=Nw^*_N(\mathbf{f}_i)$ is feasible to $ILP(\kappa')$. Therefore,
$ N_{\kappa'}\le\sum_iw^*_N(\mathbf{f}_i)N= N.$
It follows that
	\[
	\frac{\kappa^*}{N_{\kappa^*}}\ge \frac{\kappa'}{N_{\kappa'}}\ge \frac{\kappa'}{N}\nonumber
	=\min_{P\in\mathcal{U}}\Lambda(w^*_N,P)\ge \frac{N_0}{N_0+1}\min_{P\in \mathcal{U}}\Lambda(w^*,P).
	\]	
\end{proof}

Connecting the analysis so far, we have a clear procedure to compute a near-optimal interdiction strategy for the robust flow interdiction. First, we construct and solve $ILP(\kappa)$ for $1\le\kappa\le M$. Second, we take optimal solution with the maximal $\kappa/N_{\kappa}$ and obtain its corresponding interdiction strategy, which is within a factor of $\frac{N_0}{N_0+1}$ to the optimal $N_0$-bounded strategy. The final step of our framework is devoted to solving $ILP(\kappa)$.

\subsubsection{Solving the Integer Linear Programs}\vspace{1mm}
Resembling (\ref{LP:robust}), each $ILP(\kappa)$ involves potentially exponential number of variables. What is different and important is that, we can obtain a $\frac{1}{\log M}$-approximation through a greedy scheme that iteratively chooses a single-path flow according to the following criterion:
let $\{x\}$ indicate the collection of flows that have been chosen so far, i.e., each $\mathbf{f}_i$ has been chosen for $x_i$ times. Let $i^*$ be 
\begin{small}
	\begin{align} \arg\max_i\sum_{P\in\mathcal{U},\kappa\ge\sum_jx_j\Lambda(\mathbf{f}_j,P)}\min\{\kappa-\sum_jx_j\Lambda(\mathbf{f}_j,P),\Lambda(\mathbf{f}_i,P)\}. \label{eq:greedyoracle}
\end{align}
\end{small}
The greedy scheme chooses $\mathbf{f}_{i^*}$ at the current iteration and increments $x_{i^*}$ by 1. The above procedure is repeated until we have $\sum_ix_i\Lambda(\mathbf{f}_i,P)\ge k$ for all $P\in\mathcal{U}$. Moreover, if we apply an $\alpha$-approximate greedy scheme, which chooses $\mathbf{f}_i$ that is an $\alpha$-optimal solution to (\ref{eq:greedyoracle}), then the final solution we obtain is $\alpha\log M$-optimal.
Essentially, (\ref{eq:greedyoracle}) selects the flow that provides the maximum marginal gain with respect to satisfying the constraints (\ref{cover}) for all $P\in\mathcal{U}$. That the ($\alpha$-approximate) greedy scheme achieves an logarithmic approximation follows from the relation of $ILP(\kappa)$ to the multiset-multicover problems and the results therein \cite{cite:multiset}, which we omit here due to space limitation. 
 Now recall the equivalence between $\Lambda(\mathbf{f},P)$ and $\Lambda(E_\mathbf{f},P)$ established in Section \ref{sec:deterministic}. We proceed to show that the Recursive Greedy algorithm can be used to construct an approximate greedy scheme. First, we have the following lemma.
 
 \begin{lemma}\label{lemma:robustsubmodular}
 	If $\Lambda$ is monotone and submodular, then the objective function of (\ref{eq:greedyoracle}) is also monotone and submodular.
 \end{lemma}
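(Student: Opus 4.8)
The plan is to read the objective in (\ref{eq:greedyoracle}) as a set function of the candidate flow's edge set $A=E_{\mathbf{f}_i}$ and to show it inherits monotonicity and submodularity from $\Lambda(\cdot,P)$. First I would fix the current iterate $\{x\}$ and the target $\kappa$, and observe that for each $P\in\mathcal{U}$ the quantity $c_P:=\sum_j x_j\Lambda(\mathbf{f}_j,P)$, and hence the residual $r_P:=\kappa-c_P$, is a constant not depending on the candidate $A$; likewise the index set $\mathcal{U}_\kappa:=\{P\in\mathcal{U}:\kappa\ge c_P\}$ over which the sum ranges is determined once $\{x\}$ is fixed. Thus the objective, as a function of $A$, is $g(A)=\sum_{P\in\mathcal{U}_\kappa}\min\{r_P,\Lambda(A,P)\}$ with every $r_P\ge 0$. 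Since a nonnegative sum of monotone submodular set functions is again monotone and submodular, the whole claim reduces to a single building block: for each $P\in\mathcal{U}_\kappa$, the truncated map $A\mapsto\min\{r_P,\Lambda(A,P)\}$ is monotone and submodular.

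Monotonicity of this building block is immediate, because $\min\{r_P,\cdot\}$ is non-decreasing and $\Lambda(\cdot,P)$ is monotone by hypothesis. Submodularity is the crux, and I would establish it via the general fact that composing a monotone submodular set function $h$ with a non-decreasing concave scalar function $\phi$ (here $\phi(y)=\min\{r_P,y\}$, which is plainly concave and non-decreasing) yields a submodular function. Concretely, fix $A\subseteq B$ and $e\notin B$, write $a=h(A)$, $a'=h(A\cup\{e\})$, $b=h(B)$, $b'=h(B\cup\{e\})$, and set $\delta_A=a'-a$, $\delta_B=b'-b$. Monotonicity of $h$ gives $a\le b$ and $\delta_A,\delta_B\ge 0$, while submodularity of $h$ gives $\delta_A\ge\delta_B$. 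The target inequality $\phi(a')-\phi(a)\ge\phi(b')-\phi(b)$ then follows in two scalar steps: since $\phi$ is non-decreasing and $\delta_A\ge\delta_B$ we get $\phi(a+\delta_A)-\phi(a)\ge\phi(a+\delta_B)-\phi(a)$; and since $\phi$ is concave, its increments over a fixed step $\delta_B$ are non-increasing in the base point, so $a\le b$ yields $\phi(a+\delta_B)-\phi(a)\ge\phi(b+\delta_B)-\phi(b)$. Chaining the two inequalities gives submodularity of $\phi\circ h$, hence of $\min\{r_P,\Lambda(\cdot,P)\}$.

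The main obstacle is precisely this submodularity-of-truncation step: one must resist treating $\min\{r_P,\Lambda(A,P)\}$ termwise on edges and instead argue at the level of marginal gains, carefully separating the two effects (the shrinking marginal $\delta_A\ge\delta_B$, handled by monotonicity of $\phi$, and the larger base point $a\le b$, handled by concavity of $\phi$). Once that scalar lemma is in place, closure of monotone submodular functions under nonnegative finite sums finishes the proof, and the resulting $g$ is exactly the function on which the Extended Recursive Greedy algorithm can be run to realize the $\alpha$-approximate greedy oracle.
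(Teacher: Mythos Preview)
Your proposal is correct and follows essentially the same approach as the paper: fix the iterate so that each $r_P$ and the index set are constants, reduce to showing that $A\mapsto\min\{r_P,\Lambda(A,P)\}$ is monotone submodular, and then sum. The only difference is granularity: the paper's proof is a single sentence that simply asserts ``the minimum of a constant and a monotone submodular function is also monotone and submodular,'' whereas you actually prove this truncation fact via the non-decreasing concave composition argument, which is a welcome addition of rigor.
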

\begin{proof}
	Note that at any iteration, $\sum_jx_j\Lambda(\mathbf{f}_j,P)$ is a known constant. Hence, for each $P$, $\min\{\kappa-\sum_jx_j\Lambda(\mathbf{f}_j,P),\Lambda(\mathbf{f}_i,P)\}$ is the minimum of a constant and a monotone submodular function, which is also monotone and submodular. It follows that the objective function of (\ref{eq:greedyoracle}) is monotone and submodular.
\end{proof}


By Lemma \ref{lemma:robustsubmodular}, the Recursive Greedy algorithm (or the Extended Recursive Greedy algorithm using $\bar{\Lambda}$ instead of $\Lambda$ when the user paths are not disjoint) can be applied to the maximization of (\ref{eq:greedyoracle}) and enjoys the same performance guarantee as in Theorems \ref{thm:recgreedy} and \ref{thm:recgreedy2}. Hence, the final step can be completed by an approximate greedy scheme that iteratively invokes the (Extended) Recursive Greedy algorithm. We now summarize the three steps of our approximation framework for the robust flow interdiction as \textbf{Algorithm \ref{algorithm:robust}} and analyze its performance.

	\begin{algorithm}
	\caption{Algorithm for the Robust Flow Interdiction}
	\begin{algorithmic}[1]	\label{algorithm:robust}		
		\REQUIRE{Network graph $G$, Uncertainty set $\mathcal{U}=\{P_1,\ldots,P_\xi\}$, Interdictor's source $s$, destination $t$ and budget $\gamma$ }\\
		\ENSURE{Interdiction Strategy $w$}
		\STATE Formulate $ILP(\kappa)$ for $1\le\kappa\le M$.
		\STATE Solve each $ILP(\kappa)$ using the approximate greedy scheme based on the (Extended) Recursive Greedy algorithm.
		\STATE Take the solution $\{x\}$ to $ILP(\kappa)$ with the maximum value of $\kappa/\sum_jx_j$ and construct $w$ by setting $w(\mathbf{f}_i)=x_i/\sum_jx_j$ for all $i$.
		\RETURN{$w$}
	\end{algorithmic}
\end{algorithm}

\begin{theorem}\label{thm:robust}
\textbf{Algorithm \ref{algorithm:robust}} returns an interdiction strategy $w$ that satisfies
\begin{small}
	\begin{align*}
 &\min_{P\in\mathcal{U}}\Lambda(w,P)\\&\ge \left(\frac{N_0}{(N_0+1)(b+1)\log M\cdot(\lceil\log d\rceil+1)}\right)\min_{P\in\mathcal{U}}\Lambda(w^*,P),
 \end{align*}
\end{small} where $w^*$ is the optimal $N_0$-bounded strategy. 
\end{theorem}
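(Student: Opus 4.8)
The plan is to assemble the three multiplicative losses already isolated in the preceding results: the rationalization loss $\frac{N_0}{N_0+1}$ furnished by Lemmas \ref{lemma:rationalize} and \ref{lemma:multiset}, the multiset-multicover loss $\log M$, and the submodular path-maximization loss $(b+1)(\lceil\log d\rceil+1)$ from Theorem \ref{thm:recgreedy2}. Since the rationalization step is entirely subsumed by Lemma \ref{lemma:multiset}, the remaining work is to control the cost of the approximate solutions produced in step 2 of \textbf{Algorithm \ref{algorithm:robust}} and then to translate that cost bound, through the selection rule of step 3, into a lower bound on $\min_{P\in\mathcal{U}}\Lambda(w,P)$.

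For step 2, I would first invoke Lemma \ref{lemma:robustsubmodular} to note that the oracle objective (\ref{eq:greedyoracle}) is monotone and submodular, so that, by the equivalence $\Lambda(\mathbf{f},P)=\Lambda(E_\mathbf{f},P)$, maximizing it over single-path flows is the same as maximizing a monotone submodular set function over $s$-$t$ paths. Theorem \ref{thm:recgreedy2} (run with $I=\log n$) therefore returns, at every greedy iteration, a flow whose marginal value is within a factor $\alpha:=(b+1)(\lceil\log d\rceil+1)$ of optimal; that is, the procedure is an $\alpha$-approximate greedy scheme for $ILP(\kappa)$. Feeding this into the multiset-multicover guarantee, the feasible solution $\{x^{(\kappa)}\}$ returned for $ILP(\kappa)$ has cost $\hat N_\kappa:=\sum_j x_j^{(\kappa)}$ obeying $\hat N_\kappa\le\alpha\log M\cdot N_\kappa$, where $N_\kappa$ is the optimum of $ILP(\kappa)$.

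For step 3, let $\hat\kappa$ be the index the algorithm selects, so that $w(\mathbf{f}_i)=x_i^{(\hat\kappa)}/\hat N_{\hat\kappa}$. Because the greedy scheme halts only once $\{x^{(\hat\kappa)}\}$ is feasible for $ILP(\hat\kappa)$, every $P\in\mathcal{U}$ satisfies $\sum_i x_i^{(\hat\kappa)}\Lambda(\mathbf{f}_i,P)\ge\hat\kappa$, and I would conclude
\[
\min_{P\in\mathcal{U}}\Lambda(w,P)=\frac{1}{\hat N_{\hat\kappa}}\min_{P\in\mathcal{U}}\sum_i x_i^{(\hat\kappa)}\Lambda(\mathbf{f}_i,P)\ge\frac{\hat\kappa}{\hat N_{\hat\kappa}}=\max_{1\le\kappa\le M}\frac{\kappa}{\hat N_\kappa}\ge\frac{\kappa^*}{\hat N_{\kappa^*}},
\]
with $\kappa^*=\arg\max_{1\le\kappa\le M}(\kappa/N_\kappa)$ from Lemma \ref{lemma:multiset}. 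Combining the cost bound $\hat N_{\kappa^*}\le\alpha\log M\cdot N_{\kappa^*}$ with Lemma \ref{lemma:multiset} then yields
\[
\frac{\kappa^*}{\hat N_{\kappa^*}}\ge\frac{1}{\alpha\log M}\cdot\frac{\kappa^*}{N_{\kappa^*}}\ge\frac{1}{(b+1)(\lceil\log d\rceil+1)\log M}\cdot\frac{N_0}{N_0+1}\min_{P\in\mathcal{U}}\Lambda(w^*,P),
\]
which is the asserted bound after reordering the factors.

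The hard part will be the cost bound of step 2: one must verify that the per-iteration error of the Recursive Greedy oracle composes with the multicover greedy analysis to produce exactly the clean factor $\alpha\log M$ on the cost, rather than a compounded or additive degradation. This rests on the fact (established in \cite{cite:multiset} and only sketched here) that the classical logarithmic multiset-multicover guarantee scales by precisely the per-step oracle-approximation factor when each greedy step is solved only approximately, which in turn hinges on the monotone submodularity secured by Lemma \ref{lemma:robustsubmodular}. Everything else is routine assembly of inequalities already proved, the only delicate point being that $\hat\kappa$ need not equal $\kappa^*$ --- handled above by observing that the algorithm maximizes $\kappa/\hat N_\kappa$ over all $\kappa$, so its choice dominates the particular value $\kappa^*/\hat N_{\kappa^*}$.
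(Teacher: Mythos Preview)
Your proposal is correct and follows essentially the same route as the paper's own proof: both arguments pass from $\min_{P}\Lambda(w,P)$ to the ratio $\hat\kappa/\hat N_{\hat\kappa}$ via feasibility, use the selection rule in step~3 to dominate $\kappa^*/\hat N_{\kappa^*}$, invoke Theorem~\ref{thm:recgreedy2} together with the $\alpha$-approximate multiset-multicover guarantee to bound $\hat N_{\kappa^*}\le(b+1)(\lceil\log d\rceil+1)\log M\cdot N_{\kappa^*}$, and finish with Lemma~\ref{lemma:multiset}. Your write-up is slightly more explicit (you spell out Lemma~\ref{lemma:robustsubmodular} and the $\hat\kappa\neq\kappa^*$ subtlety), but the logical skeleton and the key inequalities are identical.
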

\begin{proof}
	Let $ILP(\kappa)$ and $\{x\}$ be the integer linear program and its solution that correspond to $w$. We inherit the definition of $\kappa^*$ in Lemma \ref{lemma:multiset} and further define $\{x^*\}$ to be the solution that \textbf{Algorithm \ref{algorithm:robust}} computes for $ILP(\kappa^*).$  We have

	\begin{small}
	\begin{align}
	&\min_{P\in\mathcal{U}}\Lambda(w,P)=\min_{P\in \mathcal{U}}\sum_iw(\mathbf{f}_i)\Lambda(\mathbf{f}_i,P)\nonumber\\
	&=\min_{P\in \mathcal{U}}\sum_i\frac{x_i}{\sum_jx_j}\Lambda(\mathbf{f}_i,P)\ge \frac{\kappa}{\sum_jx_j}\ge \frac{\kappa^*}{\sum_jx_j^*}\nonumber\\
	&\ge \left(\frac{1}{(b+1)\log M\cdot(\lceil\log d\rceil+1)}\right)\frac{\kappa^*}{N_{\kappa^*}} \label{ieq:theorem3_1}\\
	&\ge  \left(\frac{N_0}{(N_0+1)(b+1)\log M\cdot(\lceil\log d\rceil+1)}\right)\min_{P\in\mathcal{U}}\Lambda(w^*,P)\label{ieq:theorem3_3},
	\end{align}
\end{small}
	where inequality (\ref{ieq:theorem3_1}) follows from Theorem \ref{thm:recgreedy2} and the results in \cite{cite:multiset}, and inequality (\ref{ieq:theorem3_3}) follows from Lemma \ref{lemma:multiset}.
\end{proof}
	
	\textbf{Time Complexity:} Note that \textbf{Algorithm \ref{algorithm:robust}} solves $M$ integer linear programs, and it takes at most $M\xi$ calls of the (Extended) Recursive Greedy algorithm for each program since the number of iterations is bounded by $M\xi$, where $\xi=|\mathcal{U}|$. Furthermore, at the third step, there are at most $N_\kappa\le M\xi$ variables with non-zero values in the solution $\{x\}$, which implies that $w$ can be output in $O(M\xi)$ time. Therefore, the time complexity of \textbf{Algorithm \ref{algorithm:robust}} is $O\left(m(M\xi)^2(2n)^{\log n}\right)$.


\section{Simulations} \label{sec:simulations}
In this section, we present our evaluation of the performance of the proposed algorithms. We first introduce the simulation environment in the following and then show the detailed results in subsequent sections.
\subsection{Simulation Setting}
We adopt the Gnutella peer to peer network data set from \cite{cite:simulation1}. We extract 20 networks of 1000 nodes, 
and make the networks acyclic by removing a minimal feedback edge set from each of them. The capacities of the edges are sampled from a normal distribution with mean 20 and standard deviation 3. The budget of the interdictor is set to the minimum capacity of the edges in each network. 
\subsection{Deterministic Flow Interdiction}
In the deterministic flow interdiction, we divide our simulations into two parts, where the user paths are disjoint and non-disjoint respectively. In the first part, we designate $k$ disjoint paths in each network as user paths with $k$ varying in $\{10,20,\ldots,100\}$. In the second part, we follow the similar route, except that the user paths are randomly chosen without guaranteeing their disjointness. For each network, we randomly select five connected node pairs as the source and destination of the interdictor. Thus, for each number of user paths, we have 100 simulation scenarios in total (20 networks times 5 $s$-$t$ pairs).

\subsubsection{Algorithms Involved in Performance Comparisons} We apply the Recursive Greedy algorithm when the user paths are disjoint and run the extended one when the user paths are non-disjoint. We vary the recursion depth, i.e., the value of $I$ in \textbf{Algorithm \ref{algorithm:greedy}} to evaluate its influence on the algorithms' performance.  Our algorithms are compared to a \textit{brute force} algorithm that enumerates all the paths between the interdictor's source and destination, which computes the optimal interdiction strategy.
\subsubsection{Performance Metric}
We calculate the ratio of the throughput reduction of the interdiction strategies by our algorithms to that of the optimal solutions obtained by the brute force algorithm. The results reported are the average over all the 100 scenarios.
\subsubsection{Simulation Results}
We plot the results of our algorithms on deterministic flow interdiction with disjoint and non-disjoint user paths in Figures \ref{fig:disjoint1000} and \ref{fig:non_disjoint1000}.

From Figure \ref{fig:disjoint1000}, we can see that: (i). by setting the recursion depth to two, we get interdiction strategies with throughput reduction more than 90\% of the optimal (0.9-approximation) and (ii). by setting the recursion depth to three, we recover the optimal interdiction strategies. Furthermore, in the simulations, we find that when the recursive depth is three, the number of paths examined by the Recursive Greedy algorithm is just about one fifth of the total number of $s$-$t$ paths. This suggests that the typical performance and running time are even better than what the theoretical analysis predicts. Finally, we observe that, in general, our algorithms perform better when the number of user paths is large. This observation also holds in subsequent cases. One possible explanation for this is that more user paths present more opportunities for throughput reduction, making (near-)optimal interdicting flows easier to find.

As demonstrated in Figure \ref{fig:non_disjoint1000}, the deterministic flow interdiction is harder to approximate when the user paths are non-disjoint. But we can still get 0.8-approximations with a recursion depth of three and 0.95-approximations with a recursion depth of four. Also, though we have not plotted in the figure, we have seen that increasing the recursion depth to five or six does not further improve the performance. Therefore, the gap between the Extended Recursive Greedy algorithm with depth of four and the optimal can be attributed to the loss brought by the approximate throughput reduction function $\bar{\Lambda}$.



\begin{figure*}[htbp]
\centering
\subfigure[]{
	\begin{minipage}[]{0.32\linewidth}
		\centering
		\vspace{-4mm}
		\includegraphics[width=1.02\linewidth]{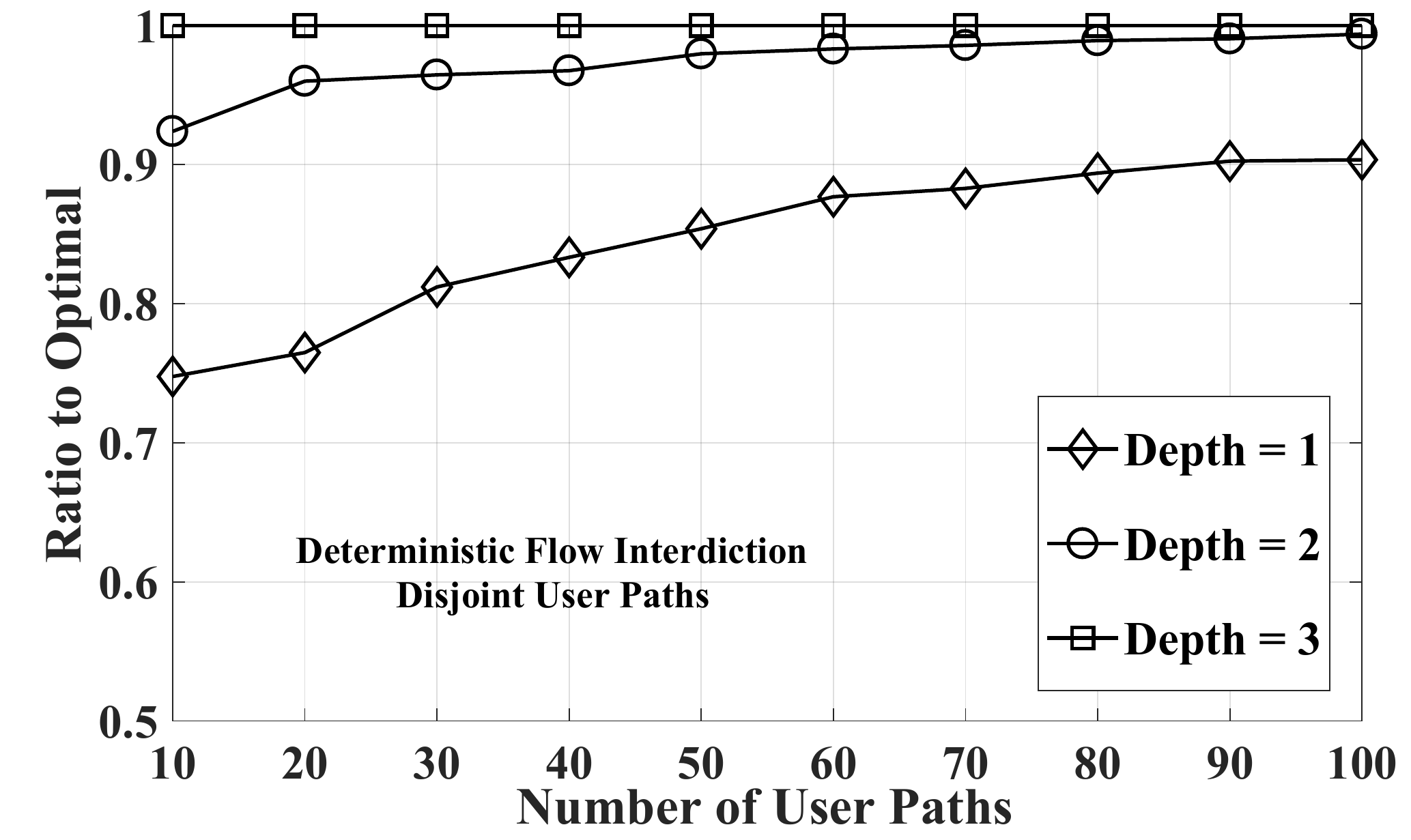}
		\vspace{-4mm}
		\label{fig:disjoint1000}
	\end{minipage}%
	
}
\subfigure[]{
	\begin{minipage}[]{0.32\linewidth}
		\centering
		\vspace{-4mm}
		\includegraphics[width=1.02\linewidth]{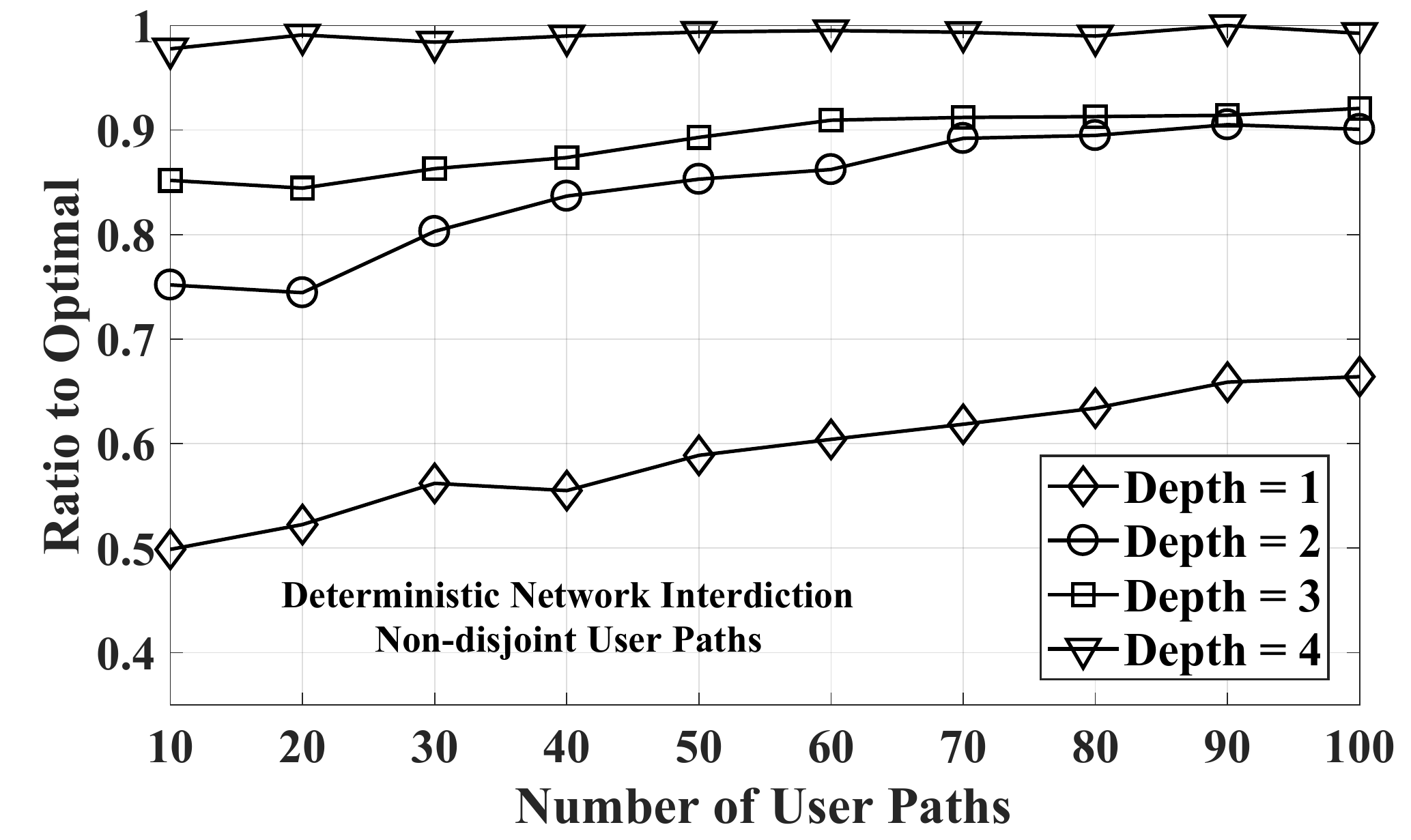}
		\vspace{-4mm}
		\label{fig:non_disjoint1000}
	\end{minipage}%
	
}
\subfigure[]{
	\begin{minipage}[]{0.32\linewidth}
		\centering
		\vspace{-4mm}
		\includegraphics[width=1.02\linewidth]{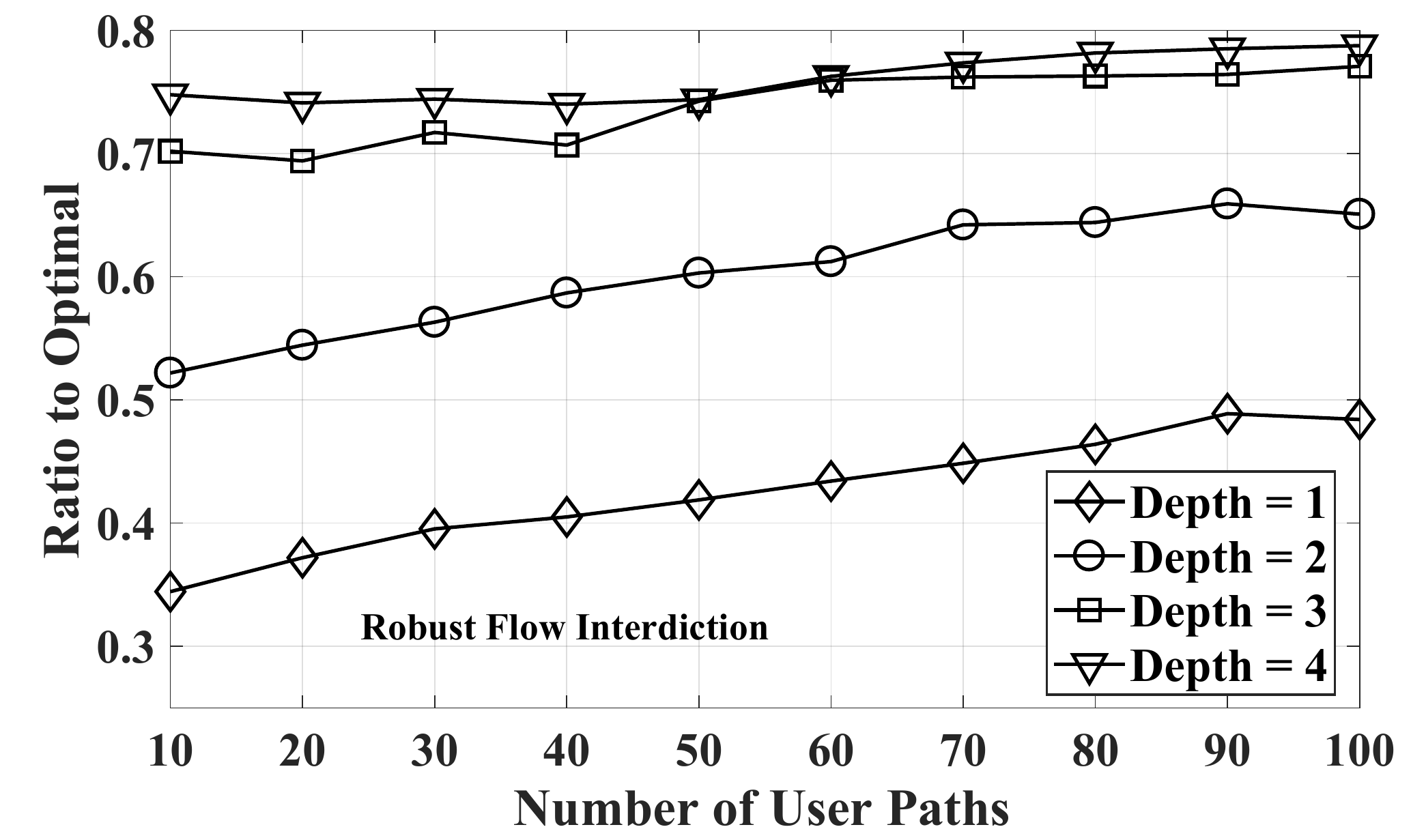}
		\vspace{-4mm}
		\label{fig:non_disjoint1000_robust}
	\end{minipage}%
	
}

\vspace{-3mm}
\caption{Ratio of throughput reduction of the solutions by our algorithms to the optimal.}
\label{fig:sim_figures}		
\end{figure*}

\subsection{Robust Flow Interdiction} 
In the robust flow interdiction, we randomly select 10 groups of $k$ paths as the uncertainty set $\mathcal{U}$ for $k\in\{10,20,\ldots,100\}$. Similar as before, we randomly selected 5 source-destination pairs for the interdictor in each network and form 100 scenarios for each $k$. 

\subsubsection{Algorithms Involved in Performance Comparisons}
We embed the Extended Recursive Greedy algorithm with different recursion depths in our proposed approximation framework (\textbf{Algorithm \ref{algorithm:robust}}). The optimal solution in this case is obtained by solving the LP (\ref{LP:robust}).
\subsubsection{Performance Metric}
For all strategies $w$ computed by our algorithms, we calculate the ratio of $\min_{P\in\mathcal{U}}\Lambda(w,P)$ to that of the optimal. The results reported are again averaged over all the scenarios.
\subsubsection{Simulation Results}

We plot the results in Figure \ref{fig:non_disjoint1000_robust}. Taking the depth as four, our approximation framework achieves interdiction strategies that are more than 70\% of the optimal (0.7-approximation). As in the previous case, we have implemented the framework with recursion depth of five and six but found that it did not improve the performance. 
%

\section{Discussion of General Networks} \label{sec:discussion}
In this section, we extend our network interdiction paradigm and the two flow interdiction problems to general networks.
Our network interdiction paradigm can be straightforwardly extended to general networks by allowing the network graph to be a general directed graph. One caveat is that we need to additionally restrict the flows that the interdictor injects to be free of cycles. Since otherwise, as the flow value of a cycle is zero, the interdictor would be able to consume the capacities of the edges in any cycle without spending any  of its budget, which would lead to meaningless solutions. Under the generalized paradigm, the deterministic and robust flow interdiction problems can be defined in the same way as Definitions \ref{def:deterministic} and \ref{def:robust}.
 For the network interdiction paradigm on general networks, Proposition \ref{proposition:property} still holds. But the Extended Recursive Greedy algorithm will break down since the edge set it returns will be an $s$-$t$ walk instead of an $s$-$t$ path (i.e. it may contain cycles). Furthermore, we can prove by an approximation-preserving reduction from the Longest Path problem in directed graphs \cite{cite:longestpath} that there is no polynomial time (quasi-polynomial time) algorithm with an approximation ratio of $O(n^{1-\delta})$ for any $\delta>0$ unless $P = NP$ ($DTIME(O(n^{\log n})) = NP$).\footnote{$DTIME(n^{\log n})$ denotes the class of problems that can be solved in quasi-polynomial time.} The reduction works by defining the graph in the Longest Path problem instance as the network graph and designating each edge as a user path. We further set the capacities of the edges and the interdictor's budget as one. Thus, the optimal single-path flow would essentially be the longest path from the interdictor's source and destination, with the throughput reduction equaling the length of the path it corresponds to. Enumerating all the node pairs in the graph, we can get the longest path in the original graph if we can solve the deterministic flow interdiction problem.
 This implies that the two flow interdiction problems on general directed graph are extremely hard to approximate within a non-trivial factor in polynomial or even quasi-polynomial time. 

\section{Conclusion} \label{sec:conclusion}
In this paper, we proposed a new paradigm for network interdiction that models the interdictor's action as injecting bounded-value flows to maximally reduce the throughput of the residual network. We studied two problems under the paradigm: deterministic flow interdiction and robust flow interdiction, where the interdictor has certain or uncertain knowledge of the operation of network users, respectively. Having proved the computation complexity of the two problems, we proposed an algorithm with logarithmic approximation ratio and quasi-polynomial running time was proposed for the deterministic flow interdiction. We further developed an approximation framework that integrates the algorithm and forms a quasi-polynomial time procedure that approximates the robust flow interdiction within a poly-logarithmic factor. Finally, we evaluated the performance of the proposed algorithms through simulations.


\begin{appendices}
	\section{Proof of Proposition {\ref{proposition:property}}} \label{appendix:proofproperty}
			Let $w$ be an interdiction strategy. If it is a distribution on single-path flows, then the proposition follows. Otherwise, there exists an $\mathbf{f}^*$ with $w(\mathbf{f}^*)>0$ that is not a single-path flow. By the flow decomposition theorem \cite{cite:networkflow} and that the network is acyclic, we can decompose $\mathbf{f}^*$ into $\mathbf{f}^*=\sum_i\mathbf{q}_i$, where $\mathbf{q}_1,\ldots,\mathbf{q}_r$ are single-path flows from $s$ to $t$. We further define $\mathbf{q}_i^*=\frac{\gamma}{val(\mathbf{q}_i)}\mathbf{q}_i$ as a scaled version of $\mathbf{q}_i$ with value $\gamma$, for $i\in\{1,\ldots,r\}$. Note that $\sum_i^rval(\mathbf{q}_i)=\gamma$, and since $\gamma\le\min_eC(e)$, $\mathbf{q}_1^*,\ldots,\mathbf{q}_r^*$ are all valid single-path flows in $\mathcal{F}_{\gamma}$. In the following, we show that we can redistribute the probability that $w$ lays on $\mathbf{f}^*$ to all its component single-path flows by decreasing $w(\mathbf{f}^*)$ to zero and adding $\frac{val(\mathbf{q}_i)}{\gamma}w(\mathbf{f}^*)$ to each $w(\mathbf{q}_i)$, with the resulting interdiction strategy $w'$ satisfying $\Lambda(w',P)\ge \Lambda(w,P)$. Repeating the process for all non-single-path flows $\mathbf{f}^*$ with $w(\mathbf{f}^*)>0$, we prove the proposition.
		
		For any $P$, we write the linear program (\ref{throughput}) with respect to $\mathbf{f^*}$ and $P$ in vector form and construct its dual as follows:
		\begin{align*}
		\quad\text{maximize } & \mathbf{1}^{\top}\bm{\tilde{\lambda}} & \mbox{minimize }& {\tilde{C}}^{\top}_{\mathbf{f}^*}\mathbf{g}_0+\bm{\lambda}^\top\mathbf{g}_1\\
		\text{\textbf{s.t. }} \mathbf{A}\bm{\tilde{\lambda}}&\le {\tilde{C}_{\mathbf{f}^*}} & 	\mbox{\textbf{s.t. }} &	\mathbf{A}^\top\mathbf{g}_0+\mathbf{Ig}_1 \ge \mathbf{1}	 \\
		\mathbf{0}\le&\bm{\tilde{\lambda}}\le\bm{\lambda} 
		& &\quad\mathbf{g}_0,\mathbf{g}_1\ge \mathbf{0}\\				
		\end{align*} 
		where $\bm{\lambda}=(\lambda_1,\ldots,\lambda_k)^\top$ is the vector of the initial flow values, $\tilde{C}_{\mathbf{f^*}}=C-\mathbf{f^*}$ , $\mathbf{A}$ is the matrix representation of constraints (\ref{capacity}), $\mathbf{I},\mathbf{1},\mathbf{0}$ are the identity matrix, the all-1 vector and the all-0 vector, and $\mathbf{g}_0,\mathbf{g}_1$ are the dual variables. By the strong duality theorem \cite{cite:linearprogramming}, the optimal value $T(\mathbf{f^*},P)$ of the primal problem is equal to $\tilde{C}_{\mathbf{f^*}}^\top \mathbf{g}^*_0+\bm{\lambda}^\top\mathbf{g}^*_1$, where $\mathbf{g}_0^*,\mathbf{g}_1^*$ is an optimal basic feasible solution to the dual problem.  Furthermore, consider the linear program (\ref{throughput}) with respect to each $\mathbf{q}_i^*$ and its dual. Note that the dual has the same feasible region as that associated with $\mathbf{f}^*$. Therefore, $\mathbf{g}_0^*,\mathbf{g}_1^*$ is still a basic feasible solution. Now, invoking weak duality, we have
		\[\forall i,\quad T(\mathbf{q}_i^*,P)\le (C-\mathbf{q}_i^*)^\top\mathbf{g}_0^*+\bm{\lambda}^\top \mathbf{g}_1^*.\] 
		It follows that 
		\begin{align*}
		&\sum_i\frac{val(\mathbf{q}_i)w(\mathbf{f}^*)}{\gamma}T(\mathbf{q}_i^*,P)\\&\le \sum_i\frac{val(\mathbf{q}_i)w(\mathbf{f}^*)}{\gamma}\left((C-\mathbf{q}_i^*)^\top\mathbf{g}_0^*+\bm{\lambda}^\top \mathbf{g}_1^*\right)\\
		&=w(\mathbf{f}^*)[(C-\mathbf{f}^*)^\top\mathbf{g}_0^*+\bm{\lambda}^\top \mathbf{g}_1^*]=w(\mathbf{f}^*)T(\mathbf{f}^*,P).
		\end{align*}
		Hence, we have $\Lambda(w',P)\ge\Lambda(w,P)$, and the proposition follows.

	\section{Proof of Proposition {\ref{proposition:hardness}}} \label{appendix:proofhardness}
	The proof is done by reduction from the 3-satisfiability problem, which is a classical NP-Complete problem \cite{cite:karp}.
	\textit{3-satisfiability: Given a set of boolean variables $x_i,1\le i\le n$ and a formula $C_1\vee C_2\vee\ldots\vee C_k$ with each clause $C_j$ being a disjunction ($\wedge$) of at most three literals $x_i$ or \ov{$x_i$}, the 3-satisfiability asks whether there is a satisfying assignment, i.e., an assignment of the variables that makes the formula true.}
	
	Given an instance of 3-satisfiability, the corresponding instance of the deterministic flow interdiction is constructed as follows. To begin with, without loss of generality, we assume that there is no clause that contains both $x_i$ and \ov{${x_i}$} for some $i$, as such clause can be satisfied by all the assignments. To create the network, we first add a path $p_j$ with $3n$ edges for each clause $C_j$. The paths are node-disjoint. Then for each variable $x_j$, we create a variable gadget with three nodes $u_i,v_{i0},v_{i1}$ and two edges $(u_i,v_{i0}),(u_i,v_{i1})$. Nodes $v_{i0},v_{i1}$ correspond to \ov{${x_i}$}, $x_i$, respectively. We next connect the variable gadgets and the paths for the clauses. For each $v_{i_0}$ ($v_{i_1}$), let $C_{i1},\ldots,C_{ir}$ be the set of clauses that contains literal \ov{${x_i}$}$(x_i)$. Let $e_1,\ldots,e_r$ be the $3i$-th ($3i+1$-th) edges on $p_{i1},\ldots,p_{ir}$. We add edges to the network to sequentially connect $v_{i_0},e_1,\ldots,e_r,u_{i+1}$ and refer to the resulting path from $v_{i_0}$ to $u_{i+1}$ as $v_{i_0}$-$u_{i+1}$ segment. For $i=n$, we further add a node to serve as $u_{n+1}$. We designate $s=u_0$ and $t=u_{n+1}$ as the source and the destination of the interdictor. The set of user paths is $P=\{p_1,\ldots, p_k\}$ and the initial flow values $f_1=\ldots=f_k=1$. The capacities of all the edges, and the budget of the interdictor are set to 1. Now we have completed the construction of the corresponding instance of the deterministic flow interdiciton. Note that the constructed network is a DAG and the whole reduction process can be done in polynomial time. See Figure \ref{fig:reduction} for an illustration of the reduction process.
	
	We proceed to show that there exists a single-path flow $\mathbf{f}$ with $\Lambda(\mathbf{f},P)=k$ if and only if there is a satisfying assignment for the original 3-satisfiability instance. First, if there exists a satisfying assignment with $x_i=a_i\in\{0,1\}$, we claim that the single-path flow $\mathbf{f}$ that corresponds to the $s$-$t$ path consisting of $(u_i,v_{ia_i})$ and $v_{ia_i}$-$u_{i+1}$ segment for all $i$ has throughput reduction $k$. Since in the satisfying assignment, each clause is set true by some literal, we have that each user path contains an edge with zero residual capacity after interdicted by $\mathbf{f}$. It follows that $T(\mathbf{f},P)=0$ and $\Lambda(\mathbf{f},P)=k$. Second, if there exists a single-path flow $\mathbf{f}$ with $\Lambda(\mathbf{f},P)=k$, then the path $E_{\mathbf{f}}$ that $\mathbf{f}$ corresponds to must intersect with all user paths. We next show that $E_{\mathbf{f}}$ can be converted to an $s$-$t$ path $E_{\mathbf{f}'}$ consisting only of $(u_i,v_{ia_i})$'s and $v_{ia_i}$-$u_{i+1}$ segments and also intersect with all user paths. Indeed, $E_{\mathbf{f}'}$ can be constructed by taking all the $(u_i,v_{ia_i})$'s and $v_{ia_i}$-$u_{i+1}$ segments that $E_\mathbf{f}$ intersects. Note that since there is no clause that contains both $x_i$ and \ov{${x_i}$} for some $i$, the assignment $\forall i,\ x_i=a_i$ induced by $E_{\mathbf{f}'}$ is a valid assignment. Since $E_{\mathbf{f}'}$ intersects with all user paths, the assignment satisfies all the clauses, and thus makes the formula true. Hence, we justify the validity of the reduction. Combining with corollary 1, we have that the deterministic flow interdiction problem is NP-hard.
	
	\begin{figure}
		\centering
		\includegraphics[width=0.9\linewidth]{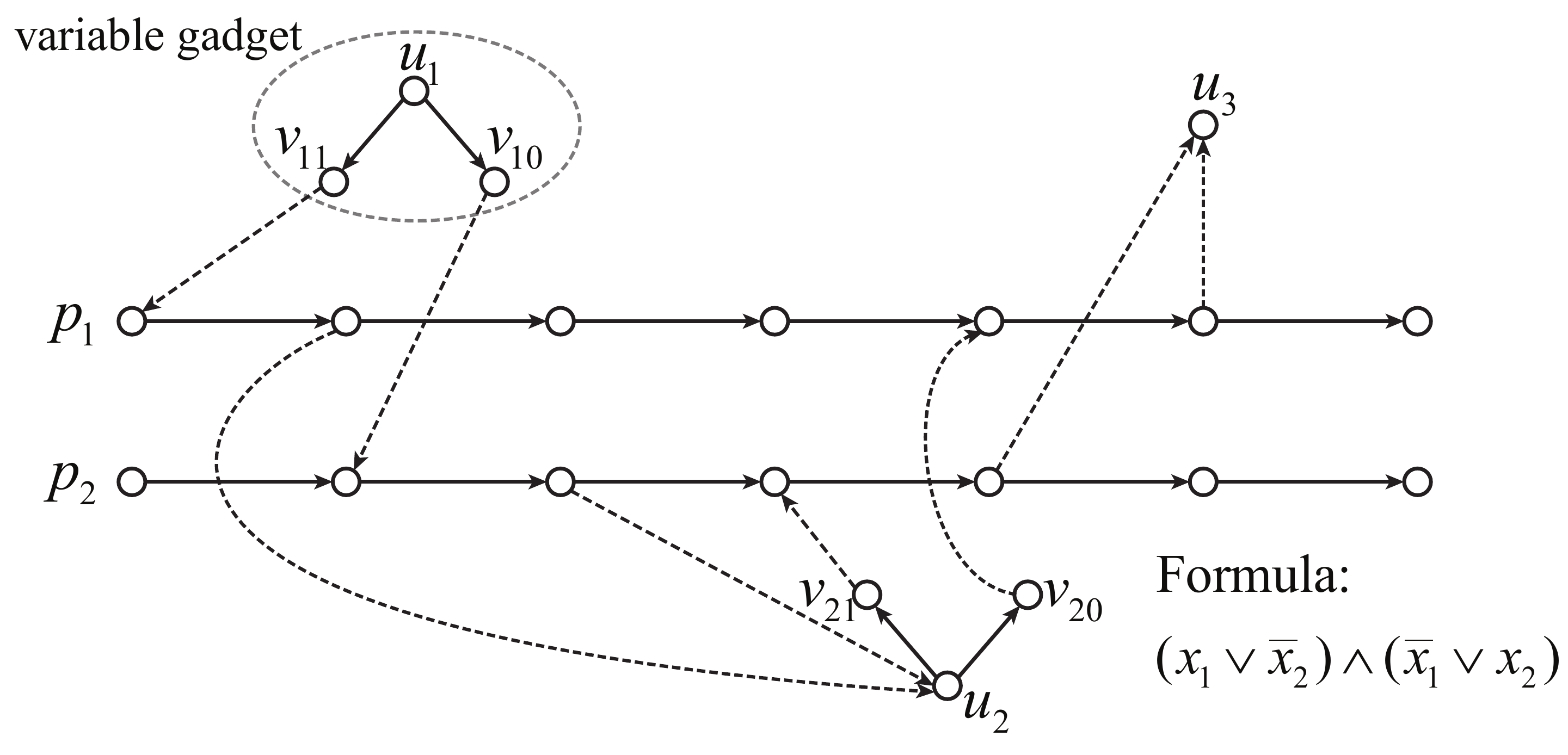}
		\caption{An illustration of the reduction process for the 3-satisfiability instance with formula $(x_1\vee$ \protect\ov{$x_2$}$)\wedge ($\protect\ov{${x_1}$}$\vee x_2)$.}
		\label{fig:reduction}
	\end{figure}

\section{Generalization of the Recursive Greedy Algorithm} \label{app:generalization}
In this section, we describe a generalized version of the Recursive Greedy algorithm that uses more than one anchors. Let $a>1$ be some integers. The details of the algorithm is presented in \textbf{Algorithm \ref{algorithm:greedy1}}. At step 8, instead of going over all $v\in V$, the generalized algorithm goes through all $a-1$ combinations of nodes in $V$ and uses them as anchors. The analysis of the algorithm is given in Theorem \ref{thm:recgreedy1}
\begin{algorithm}
	\caption{The Generalized Recursive Greedy Algorithm}
	\begin{algorithmic}[1]	\label{algorithm:greedy1}		
		\REQUIRE{Network graph $G(V,E)$,  user paths $P=\{p_1,\ldots,p_2\}$ with initial flow values $\{f_1,\ldots,f_k\}$, Interdictor's source $s$, destination $t$ and budget $\gamma$ }\\
		\ENSURE{The optimal $s$-$t$ path $E_\mathbf{f}$}
		
		\STATE \textbf{Run: } $RG(s,t,\emptyset,I)$\\
		\textit{The Recursive Function $RG(u_1,u_2,X,i)$:}
		\STATE $E_\mathbf{f}:=$ shortest $s$-$t$ path.
		\IF{$E_\mathbf{f}$ does not exist}
		\RETURN Infeasible
		\ENDIF
		\IF{$i=0$}
		\RETURN $E_\mathbf{f}$
		\ENDIF
		\STATE $r:=\Lambda_X(E_\mathbf{f},P)$.
		\FOR{$v_1,v_2,\ldots,v_{a-1}\in V$}
		\STATE $E_{\mathbf{f}_1}:=RG(u_1,v_1,X,i-1),E_{\mathbf{f}_2}:=RG(v_1,v_2,X\cup E_{\mathbf{f}_1},i-1),\ldots,E_{\mathbf{f}_{a}}:=RG(v_{a-1},u_2,X\cup E_{\mathbf{f}_1}\cup\ldots\cup E_{\mathbf{f}_{a-1}},i-1)$.
		\IF{$\Lambda_X(E_{\mathbf{f_1}}\cup\ldots\cup E_{\mathbf{f_{a}}},P)>r$}
		\STATE $r:=\Lambda_X(E_{\mathbf{f_1}}\cup\ldots\cup E_{\mathbf{f_a}},P)$, $E_\mathbf{f}:=E_{\mathbf{f}_1}\cup\ldots\cup E_{\mathbf{f}_a}$.
		\ENDIF		
		\ENDFOR
		\RETURN{$E_\mathbf{f}$}
	\end{algorithmic}
\end{algorithm}

\begin{theorem}\label{thm:recgreedy1}
	If $I\ge \lceil\log_{a}d\rceil$, the Generalized Recursive Greedy algorithm returns an $s$-$t$ path $E_\mathbf{f}$ with $\Lambda(E_\mathbf{f},P)\ge \frac{1}{\lceil\log_{a}d\rceil+1}\Lambda(E_\mathbf{f^*},P)$, where $d$ is the length of $E_{\mathbf{f}^*}$.
\end{theorem}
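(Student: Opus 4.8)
The plan is to generalize the inductive argument used in the proof of Theorem~\ref{thm:recgreedy} from a binary split to an $a$-way split, replacing the base-two logarithm with $\log_a$ throughout. As before, I would prove the stronger statement that for all $u_1,u_2\in V$ and $X\subseteq E$, if $I\ge\lceil\log_a d\rceil$ then $RG(u_1,u_2,X,I)$ returns a $u_1$-$u_2$ path $E_\mathbf{f}$ with $\Lambda_X(E_\mathbf{f},P)\ge\frac{1}{\lceil\log_a d\rceil+1}\Lambda_X(E_{\mathbf{f}^*},P)$, where $E_{\mathbf{f}^*}$ is the optimal $u_1$-$u_2$ path and $d$ its length; the theorem then follows by taking $u_1=s$, $u_2=t$, $X=\emptyset$. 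The induction is again on $d$, with the base case $d=1$ handled identically to Theorem~\ref{thm:recgreedy}: the single edge is the shortest path and is examined at step~2.

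For the inductive step, write the nodes of $E_{\mathbf{f}^*}$ as $v_0=u_1,\ldots,v_d=u_2$ and choose the $a-1$ optimal anchors $v^*_1,\ldots,v^*_{a-1}$ (together with the endpoints $v^*_0=u_1$ and $v^*_a=u_2$) at positions that partition $E_{\mathbf{f}^*}$ into $a$ consecutive subpaths $E_{\mathbf{f}^*_1},\ldots,E_{\mathbf{f}^*_a}$, each of length at most $\lceil d/a\rceil$. The crucial bookkeeping step is the identity $\lceil\log_a d\rceil-1=\lceil\log_a\lceil d/a\rceil\rceil$, which guarantees that the depth $I-1$ passed to the recursive calls still meets the induction hypothesis for subpaths of length at most $\lceil d/a\rceil$. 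When $RG$ uses $(v^*_1,\ldots,v^*_{a-1})$ as anchors it produces $E_{\mathbf{f}_1},\ldots,E_{\mathbf{f}_a}$, where $E_{\mathbf{f}_j}=RG(v^*_{j-1},v^*_j,X\cup E_{\mathbf{f}_1}\cup\cdots\cup E_{\mathbf{f}_{j-1}},I-1)$; applying the induction hypothesis to each of these calls yields $\Lambda_{X\cup E_{\mathbf{f}_1}\cup\cdots\cup E_{\mathbf{f}_{j-1}}}(E_{\mathbf{f}_j},P)\ge\frac{1}{\lceil\log_a d\rceil}\Lambda_{X\cup E_{\mathbf{f}_1}\cup\cdots\cup E_{\mathbf{f}_{j-1}}}(E_{\mathbf{f}^*_j},P)$ for each $j$.

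The heart of the argument is then to sum these $a$ inequalities and collapse the right-hand side using the submodularity and monotonicity of $\Lambda$ (Lemma~\ref{lemma:submodular}). Writing $E'_\mathbf{f}=E_{\mathbf{f}_1}\cup\cdots\cup E_{\mathbf{f}_a}$, the left-hand sum telescopes to $\Lambda_X(E'_\mathbf{f},P)$. On the right, since $E_{\mathbf{f}_1}\cup\cdots\cup E_{\mathbf{f}_{j-1}}\subseteq E'_\mathbf{f}$ the context of term $j$ can be enlarged, by one application of submodularity, from $X\cup E_{\mathbf{f}_1}\cup\cdots\cup E_{\mathbf{f}_{j-1}}$ to the larger set $X\cup E'_\mathbf{f}\cup E_{\mathbf{f}^*_1}\cup\cdots\cup E_{\mathbf{f}^*_{j-1}}$, which only decreases the marginal gain; the resulting sum telescopes to $\Lambda(X\cup E'_\mathbf{f}\cup E_{\mathbf{f}^*},P)-\Lambda(X\cup E'_\mathbf{f},P)$, and monotonicity then drops $E'_\mathbf{f}$ from the first term, giving $\Lambda_X(E'_\mathbf{f},P)\ge\frac{1}{\lceil\log_a d\rceil}\bigl(\Lambda_X(E_{\mathbf{f}^*},P)-\Lambda_X(E'_\mathbf{f},P)\bigr)$. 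Rearranging produces the claimed $\frac{1}{\lceil\log_a d\rceil+1}$ bound, and since the path actually returned by $RG$ dominates $E'_\mathbf{f}$ in $\Lambda_X$, the induction closes. I expect the main obstacle to be carefully justifying these chained submodularity reductions over all $a$ segments simultaneously---verifying that the nested set inclusions hold so the right-hand sum telescopes exactly as in the $a=2$ case---together with confirming the logarithmic depth identity at the boundary values of $d$.
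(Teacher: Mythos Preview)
Your proposal is correct and follows essentially the same approach as the paper's proof: induction on $d$, selection of $a-1$ anchors that split $E_{\mathbf{f}^*}$ into subpaths of length at most $\lceil d/a\rceil$, application of the induction hypothesis to each sub-call, and the same submodularity/monotonicity telescoping to reach the inequality $\Lambda_X(E'_\mathbf{f},P)\ge\frac{1}{\lceil\log_a d\rceil}\bigl(\Lambda_X(E_{\mathbf{f}^*},P)-\Lambda_X(E'_\mathbf{f},P)\bigr)$. The only cosmetic difference is that the paper performs the context enlargement in two separate submodularity steps (first to $X\cup E'_\mathbf{f}$, then to $X\cup E'_\mathbf{f}\cup E_{\mathbf{f}_1^*}\cup\cdots\cup E_{\mathbf{f}_{j-1}^*}$), whereas you do it in a single step; both are valid since the required inclusion $X\cup E_{\mathbf{f}_1}\cup\cdots\cup E_{\mathbf{f}_{j-1}}\subseteq X\cup E'_\mathbf{f}\cup E_{\mathbf{f}_1^*}\cup\cdots\cup E_{\mathbf{f}_{j-1}^*}$ holds.
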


\begin{proof}
	We prove a more general claim, that for all $u_1,u_2\in V,\ X\subseteq E$,  if $I\ge \lceil\log d\rceil$, the procedure $RG(u_1,u_2,X,I)$ returns an $u_1$-$u_2$ path $E_\mathbf{f}$ with $\Lambda_X(E_\mathbf{f},P)\ge \frac{1}{\lceil\log d\rceil+1}\Lambda_X(E_\mathbf{f^*},P)$, where $E_{\mathbf{f}^*}$ is the $u_1$-$u_2$ path that maximized $\Lambda(\cdot,P)$ and $d$ is the length of $E_{\mathbf{f}^*}$. The theorem follows from the claim by setting $u_1=s$, $u_2=t$ and $X=\emptyset$.
	
	Let the nodes on the path $E_\mathbf{f^*}$ be $\{u_1=v_0,\ldots,v_d=u_2\}$. The proof is done by induction on $d$.	
	First, for the base step, when $d=1$, it means that there exists an edge between $u_1$ and $u_2$, which must be the shortest $u_1$-$u_2$ path. Obviously the procedure checks this path, and the claim follows. Next, suppose the claim holds for $d\le l\in \mathbb{N}$. 
	 When $d=l+1$, let $v_1^*=v_{\lceil\frac{d}{a}\rceil}, v_2^*=v_{\lceil\frac{2d}{a}\rceil},\ldots,v_{a-1}^*=v_{\lceil\frac{(a-1)d}{a}\rceil}$. Let $E_{\mathbf{f}_1^*},\ldots,E_{\mathbf{f}_a^*}$ be the subpaths of $E_{\mathbf{f}^*}$ from $s$ to $v_1^*,\ldots,v_{a-1}^*$ to $t$. When $RG$ examines $\{v_1^*,\ldots,v_{a-1}^*\}$ at step 8, it invokes $a$ sub-procedures denoted as $RG(u_1,v_1^*,X^{(0)},I-1),\ldots,RG(v_{a-1}^*,u_2,X^{(a-1)},I-1)$. In the above notations, we use $E_{\mathbf{f}_j}$ to denote the sub-path returned by the $j$th subprocedure, $X^{(j)}$ to denote $X\cup E_{\mathbf{f}_1}\cup\ldots\cup E_{\mathbf{f}_{j}}$ for $j\in\{1,\ldots,a\}$ and $X^{(0)}=X$. Let $E'_\mathbf{f}=E_{\mathbf{f}_1}\cup\ldots\cup E_{\mathbf{f}_a}$. Our goal is to show that 
	\begin{align}
	\Lambda_X(E'_\mathbf{f},P)\ge\frac{1}{\lceil\log_{a}d\rceil+1}\Lambda_X\left(E_{\mathbf{f}^*},P\right), \label{ieq:target1}
	\end{align}
	which proves the induction step since the path $E_{\mathbf{f}}$ that $RG(s,t,X,I)$ returns must satisfy $\Lambda_X(E_{\mathbf{f}},P)\ge \Lambda_X(E'_{\mathbf{f}},P)$.
	
	Since $I\ge \lceil\log_{a}d\rceil$, we have $I-1\ge\lceil\log_{a}d\rceil-1 = \lceil\log_{a}d/a\rceil$. By the induction hypothesis,
	\begin{align*}
	\Lambda_{X^{(0)}}(E_{\mathbf{f}_1},P)&\ge \frac{1}{\lceil\log_{a}d\rceil}\Lambda_{X^{(0)}}(E_\mathbf{f_1^*},P),\\
	\Lambda_{X^{(1)}}(E_{\mathbf{f}_2},P)&\ge \frac{1}{\lceil\log_{a}d\rceil}\Lambda_{X^{(1)}}(E_\mathbf{f_2^*},P),\\
	&\ldots,\\\Lambda_{X^{(a-1)}}(E_{\mathbf{f}_a},P)&\ge \frac{1}{\lceil\log_{a}d\rceil}\Lambda_{X^{(a-1)}}(E_\mathbf{f_a^*},P).
	\end{align*}
	For $j\in\{0,\ldots,a-1\}$, by the submodularity of $\Lambda$ (Lemma \ref{lemma:submodular}), we have $\Lambda_{X^{(j)}}(E_\mathbf{f_a^*},P)\ge \Lambda_{X^{(a)}}(E_\mathbf{f_a^*},P)$. Using this, we sum all the inequalities above and get
	\begin{align}
	\Lambda_X(E'_\mathbf{f},P)&=\sum_{j=0}^{a-1} \Lambda_{X^{(j)}}(E_{\mathbf{f}_{j+1}},P)\\
	&\ge \frac{1}{\lceil\log_{a}d\rceil}\sum_{j=0}^{a-1}\Lambda_{X^{(j)}}(E_{\mathbf{f}_{{j+1}}^*},P)\\
	&\ge \frac{1}{\lceil\log_{a}d\rceil}\sum_{j=0}^{a-1}\Lambda_{X^{(a)}}(E_\mathbf{f_{j+1}^*},P).
	\end{align}
	Again, by Lemma \ref{lemma:submodular}, we have for $j\in\{0,\ldots,a-1\}$,
	\[
	\Lambda_{X^{(a)}}(E_{\mathbf{f}_{j+1}^*},P)\ge \Lambda_{X^{(a)}\cup\left(\bigcup_{i=1}^{j}E_{\mathbf{f}_i^*}  \right)}(E_{\mathbf{f}_{j+1}^*},P)
	\]
	It follows that 
	\begin{small}	
		\begin{align}
		\Lambda_X(E'_\mathbf{f},P)&\ge \frac{1}{\lceil\log_{a}d\rceil}\sum_{j=0}^{a-1}\Lambda_{X^{(a)}\cup\left(\bigcup_{i=1}^{j}E_{\mathbf{f}_i^*}  \right)}(E_{\mathbf{f}_{j+1}^*},P)\\
		&=\frac{1}{\lceil\log_{a}d\rceil}\Lambda_{X^{(a)}}(E_{\mathbf{f}_1^*}\cup\ldots\cup E_{\mathbf{f}_a^*},P)\label{eq:def11}\\
		&=\frac{1}{\lceil\log_{a}d\rceil}\left[\Lambda\left(X\cup\left(\bigcup_{j=1}^aE_{\mathbf{f_j}}\right)\cup\left(\bigcup_{j=1}^aE_{\mathbf{f_j^*}}\right),P\right)\right.\nonumber\\&\left.\quad-\Lambda\left(X\cup\left(\bigcup_{j=1}^aE_{\mathbf{f_j}}\right),P\right)\right]\label{eq:def21}\\
		&=\frac{1}{\lceil\log_{a}d\rceil}\left[\Lambda_X\left(\left(\bigcup_{j=1}^aE_{\mathbf{f_j}}\right)\cup\left(\bigcup_{j=1}^aE_{\mathbf{f_j^*}}\right),P\right)\right.\nonumber\\&\left.\quad-\Lambda_X\left(\bigcup_{j=1}^aE_{\mathbf{f_j}},P\right)\right]\label{eq:def31}\\
		&\ge\frac{1}{\lceil\log_{a}d\rceil}\left[\Lambda_X\left(\bigcup_{j=1}^aE_{\mathbf{f_j^*}},P\right)-\Lambda_X\left(\bigcup_{j=1}^aE_{\mathbf{f_j}},P\right)\right]\label{ieq:monotonicity1}\\
		&= \frac{1}{\lceil\log_{a}d\rceil}\left[\Lambda_X\left(E_{\mathbf{f^*}},P\right)-\Lambda_X\left(E'_{\mathbf{f}},P\right)\right]\label{ieq:finalbound11}, 
		\end{align}
	\end{small}
	where inequality (\ref{ieq:monotonicity1}) follows from the monotonicity of $\Lambda$ and equalities (\ref{eq:def11}), (\ref{eq:def21}) and (\ref{eq:def31})  follow from the definition of $\Lambda_X$.  From (\ref{ieq:finalbound11}), we obtain (\ref{ieq:target1}),
	which concludes the proof.
\end{proof}

\textbf{Time Complexity: } As we invoke at most $an^a$ sub-procedures at each level of recursion and the computation of $\Lambda$ takes $O(m)$ time, the time complexity of the Generalized Recursive Greedy algorithm is $O((an)^{(a-1)I}m)$. Again, taking $I=\log_a n$, we get an $1/(\lceil\log_an\rceil+1)$-approximation with a time complexity of $O((an)^{(a-1)\log_an}m)$.

\section{Proof of Lemmas \ref{lemma:extendproperties} and \ref{lemma:gap}}\label{app:prooflemma}
This section is devoted to the proof of Lemmas \ref{lemma:extendproperties} and \ref{lemma:gap}. We define 
$\prod\limits_{e\in p_i,e\in E_2,\tilde{C}_A(e)\le \sum_{p_j\ni e}\lambda_j}\frac{\tilde{C}_A(e)}{\sum_{p_j\ni e}\lambda_j}$ as $\Delta_i(A,P)$.

\subsection{Lemma \ref{lemma:extendproperties}} \label{app:prooflemma2}
	Recall the definition of submodularity and monotonicity in Lemma \ref{lemma:submodular}. First, we can easily see from {Phase I} and {Phase II} that for all $i$, $\tilde{\lambda}_{iA}^{(2)}$ is monotonically non-increasing with respect to $A$. It follows that $\bar{\Lambda}(\cdot,P)$ is monotone.

Next, we prove the submodularity of $\bar{\Lambda}$. Consider two sets $A\subseteq B\subseteq E$ and an edge $e\in E,e\notin B$. Our goal is to show that $\bar{\Lambda}(A\cup\{e\},P)-\bar{\Lambda}(A,P)\ge \bar{\Lambda}(B\cup\{e\},P)-\bar{\Lambda}(B,P)$. We divide the proof into three cases. 

\textit{Case I:} If $e\notin E_0$, then $\bar{\Lambda}(A\cup\{e\},P)-\bar{\Lambda}(A,P)=\bar{\Lambda}(B\cup\{e\},P)-\bar{\Lambda}(B,P)=0$. 

\textit{Case II:} If $e\in E_1$, then suppose $e\in p_i$ for some $i$. 
Note that since $A\subseteq B$, $\tilde{\lambda}_{iA}^{(1)}\ge \tilde{\lambda}_{iB}^{(1)}$. We further divide this case into three subcases. (i). If $C(e)-\gamma\ge\tilde{\lambda}_{iA}^{(1)} $, then we have $\tilde{\lambda}_{iA}^{(1)}=\tilde{\lambda}_{iA\cup\{e\}}^{(1)}$ and $\tilde{\lambda}_{iB}^{(1)}=\tilde{\lambda}_{iB\cup\{e\}}^{(1)}$. Hence,
\[\bar{\Lambda}(A\cup\{e\},P)-\bar{\Lambda}(A,P)=\bar{\Lambda}(B\cup\{e\},P)-\bar{\Lambda}(B,P)=0.\]
(ii). If $\tilde{\lambda}_{iA}^{(1)}> {C}(e)-\gamma\ge \tilde{\lambda}_{iB}^{(1)}$, then $\tilde{\lambda}_{iA}^{(1)}>\tilde{\lambda}_{iA\cup\{e\}}^{(1)}$ and $\tilde{\lambda}_{iB}^{(1)}=\tilde{\lambda}_{iB\cup\{e\}}^{(1)}$. Hence,
\begin{align*}
\bar{\Lambda}(A\cup\{e\},P)-\bar{\Lambda}(A,P)>0,\\ \bar{\Lambda}(B\cup\{e\},P)-\bar{\Lambda}(B,P)=0. 
\end{align*}
(iii). If $C(e)-\gamma< \tilde{\lambda}_{iB}^{(1)}$, we have $\tilde{\lambda}_{iA\cup\{e\}}^{(1)}=\tilde{\lambda}_{iB\cup\{e\}}^{(1)}=C(e)-\gamma$. It follows that
\begin{align*} 
\tilde{\lambda}_{iA\cup\{e\}}^{(2)}=(C(e)-\gamma)\cdot\Delta_i(A\cup\{e\},P),\\
\tilde{\lambda}_{iB\cup\{e\}}^{(2)}=(C(e)-\gamma)\cdot\Delta_i(B\cup\{e\},P).
\end{align*}
As $A\cup\{e\}\subseteq B\cup\{e\}$, we have $\Delta_i(A\cup\{e\},P)\ge \Delta_i(B\cup\{e\},P)$. It follows that, 
\begin{align*}
\bar{\Lambda}(A\cup\{e\},P)-\bar{\Lambda}(A,P)&=(\tilde{\lambda}_{iA}^{(1)}-\tilde{\lambda}_{iA\cup\{e\}}^{(1)})\Delta_i(A\cup\{e\},P)\\
&\ge (\tilde{\lambda}_{iB}^{(1)}-\tilde{\lambda}_{iB\cup\{e\}}^{(1)})\Delta_i(B\cup\{e\},P)\\
&= \bar{\Lambda}(B\cup\{e\},P)-\bar{\Lambda}(B,P),
\end{align*} 
where the two equalities follow from the fact that $\Delta_i(A\cup\{e\},P)=\Delta_i(A,P), \Delta_i(B\cup\{e\},P)=\Delta_i(B,P)$ since $e\notin E_2$. 

\textit{Case III:} If $e\in E_2$, then when $C(e)-\gamma\ge \sum_{p_j\ni e}\lambda_j$, we still have  \[\bar{\Lambda}(A\cup\{e\},P)-\bar{\Lambda}(A,P)=\bar{\Lambda}(B\cup\{e\},P)-\bar{\Lambda}(B,P)=0.\] 
When $C(e)-\gamma< \sum_{p_j\ni e}\lambda_j$, first, we observe that since $e\notin E_1$, $\tilde{\lambda}_{iA}^{(1)}=\tilde{\lambda}_{iA\cup\{e\}}^{(1)}$ and $\tilde{\lambda}_{iB}^{(1)}=\tilde{\lambda}_{iB\cup\{e\}}^{(1)}$. Also, as $A\subseteq B$, we have $\tilde{\lambda}_{iA}^{(1)}\ge \tilde{\lambda}_{iB}^{(1)}$ and $\Delta_i(A,P)\ge\Delta_i(B,P)$ for all $i$. Combining these, we obtain
\begin{align*}
&\bar{\Lambda}(A\cup\{e\},P)-\bar{\Lambda}(A,P)\\&=\sum_{i:e\in p_i}\tilde{\lambda}_{iA}^{(1)}\cdot\Delta_i(A,P)\cdot\left(1-\frac{{C}(e)-\gamma}{\sum_{p_j\ni e}\lambda_j}\right)\\
&\ge \sum_{i:e\in p_i}\tilde{\lambda}_{iB}^{(1)}\cdot\Delta_i(B,P)\cdot\left(1-\frac{{C}(e)-\gamma}{\sum_{p_j\in e}\lambda_j}\right)\\
&= \bar{\Lambda}(B\cup\{e\},P)-\bar{\Lambda}(B,P),
\end{align*}

Therefore, in all cases, we have $\bar{\Lambda}(A\cup\{e\},P)-\bar{\Lambda}(A,P)\ge \bar{\Lambda}(B\cup\{e\},P)-\bar{\Lambda}(B,P)$. Hence, $\bar{\Lambda}(\cdot,P)$ is submodular.

\subsection{Lemma \ref{lemma:gap}} \label{app:prooflemma3}
In the definition of $\bar{\Lambda}$, we have reasoned that $\Lambda(A,P)\le \bar{\Lambda}(A,P)$. What is left is to show that $\bar{\Lambda}(A,P)\le (b+1)\cdot\Lambda(A,P)$. Let $\tilde{\lambda}_1,\ldots,\tilde{\lambda}_k$ be an optimal solution to the maximization problem (\ref{throughput}) associated with $A$.

First, for Phase I, observe that for all $i$, $\tilde{\lambda}_i\le \min\{\lambda_i,\min\limits_{e\in p_i,e\in E_1}\{\tilde{C}_A(e)\}\}$. Therefore, we have $\sum_i\tilde{\lambda}_{iA}^{(1)}\ge T(A,P)$. It follows that
\begin{align}
\Lambda(A,P)\ge \sum_i(\lambda_i-\tilde{\lambda}_{iA}^{(1)}). \label{ieq:phaseI}
\end{align}
Next, for phase II,
\begin{small} 
	\begin{align}
	&\sum_i\left(\tilde{\lambda}_{iA}^{(1)}-\tilde{\lambda}_{iA}^{(2)}\right) = \sum_i \tilde{\lambda}_{iA}^{(1)}\left(1-\Delta_i(A,P)\right)\nonumber\\
	\le& \sum_i \lambda_{i}\left(1-\Delta_i(A,P)\right) \nonumber\\
	\le& \sum_i\lambda_i\cdot\sum_{e\in p_i,e\in E_2,\tilde{C}_A(e)\le \sum_{p_j\ni e}\lambda_j}\left(1-\frac{\tilde{C}_A(e)}{\sum_{p_j\ni e}\lambda_j}\right)\label{eq:auxiliaryieq}\\
	=&\sum_{e\in E_2,\tilde{C}_A(e)\le \sum_{p_j\ni e}\lambda_j}\left(\sum_{p_j\ni e}\lambda_j-\tilde{C}_A(e)\right)\label{eq:rearrange},
	\end{align}
\end{small}
where inequality (\ref{eq:auxiliaryieq}) follows from the fact that $1-\prod_{j}a_j\le \sum(1-a_j)$ for $0\le a_1\le\ldots\le a_j\le 1$ and equality (\ref{eq:rearrange}) comes from rearranging the terms. Since $\sum_{p_j\ni e}\tilde{\lambda}_j\le \tilde{C}_A(e)$ for all $e$, we have
\begin{align}
&\sum_{e\in E_2,\tilde{C}_A(e)\le \sum_{p_j\ni e}\lambda_j}\left(\sum_{p_j\ni e}\lambda_j - \tilde{C}_A(e) \right) \nonumber\\ &\le \sum_{e\in E_2,\tilde{C}_A(e)\le \sum_{e\in p_j}\lambda_j}\left(\sum_{p_j\ni e}\lambda_j - \sum_{p_j\ni e}\tilde{\lambda}_j \right)\nonumber\\
&\le b\cdot \sum_i\left(\lambda_i-\tilde{\lambda}_i\right)=b\cdot \Lambda(A,P). \label{ieq:phaseII}
\end{align}
Therefore, combining (\ref{ieq:phaseI}) and (\ref{ieq:phaseII}), we have $\bar{\Lambda}(A,P)=\sum_i\left(\lambda_i-\tilde{\lambda}_{iA}^{(1)}+\tilde{\lambda}_{iA}^{(1)}-\tilde{\lambda}_{iA}^{(2)}\right)\le (b+1)\Lambda(A,P)$.

\section{Justification of Integrality Assumption of $\Lambda$} \label{app:integrality}
In this section, we show that not much generality is lost if we consider the throughput reduction function $\Lambda$ to take value in integers that are bounded by some polynomial of $n$. Specifically, we show the following proposition. 
\begin{proposition}\label{proposition:appendix}
	For any instance $\mathcal{I}$ of the robust flow interdiction problem with throughput reduction function $\Lambda$ and some fixed $\epsilon>0$, we can construct another instance $\mathcal{I}'$ in whose throughput reduction function $'$ take integral values that are bounded by some polynomial of $n$ for all $w,P\in\mathcal{U}$. Furthermore, if we apply \textbf{Algorithm \ref{algorithm:robust}} to $\mathcal{I'}$, it returns a strategy $w$ that satisfies
	\begin{small}
		\begin{align*}
		&\min_{P\in\mathcal{U}}\Lambda(w,P)\\&\ge (1-\epsilon)^2 \left(\frac{N_0}{(N_0+1)(b+1)\log \frac{M}{\epsilon}\cdot(\lceil\log d\rceil+1)}\right)\min_{P\in\mathcal{U}}\Lambda(w^*,P)\end{align*}
	\end{small}
	for the original instance $\mathcal{I}$.
\end{proposition}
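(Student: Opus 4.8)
The plan is to discretize the instance via a scaling-and-rounding construction and then push the guarantee of Theorem \ref{thm:robust} from the rounded instance back to the original one, losing only a controlled $(1-\epsilon)$ factor at each of two translation steps. To build $\mathcal{I}'$, I would fix a granularity $\theta>0$ and replace every capacity $C(e)$, every initial flow value $\lambda_{li}$, and the budget $\gamma$ by the corresponding quantity divided by $\theta$ and rounded to an integer, rounding in the direction that preserves the structural constraints $\sum_{p_i\ni e}\lambda_i\le C(e)$ and $\gamma\le\min_e C(e)$. Since the resulting parameters are integral, the residual max-flow problem (\ref{throughput}) defining $T(\mathbf{f},P)$ now has integral data, so by the integrality theorem for max-flow \cite{cite:networkflow} its optimum $T'(\mathbf{f},P)$ is an integer; hence $\Lambda'(\mathbf{f},P)=\sum_i\lambda_{li}'-T'(\mathbf{f},P)$ is integral. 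Choosing $\theta$ so that the scaled reductions lie in $\{0,1,\ldots,O(M/\epsilon)\}$ makes $\Lambda'$ bounded by a polynomial of $n$ (with $\epsilon$ fixed) and, crucially, replaces the range parameter $M$ by $M'=\Theta(M/\epsilon)$ when Algorithm \ref{algorithm:robust} is run on $\mathcal{I}'$; this is the source of the $\log\frac{M}{\epsilon}$ term in the target bound. The combinatorial parameters $b$ and $d$ are untouched by scaling, which is why they appear unchanged.

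The technical core is a two-sided multiplicative comparison between $\Lambda'$ and $\Lambda$. I would prove that for every single-path flow $\mathbf{f}$ and every $P\in\mathcal{U}$,
\[
(1-\epsilon)\,\Lambda(\mathbf{f},P)\;\le\;\theta\,\Lambda'(\mathbf{f},P)\;\le\;\frac{1}{1-\epsilon}\,\Lambda(\mathbf{f},P),
\]
and then extend both inequalities to arbitrary mixed strategies by linearity of $\Lambda(w,P)=\sum_{\mathbf{f}}w(\mathbf{f})\Lambda(\mathbf{f},P)$ and of $\Lambda'(w,P)$. The bound is obtained by tracking the total additive perturbation of the roundings: each rounded parameter shifts the rescaled value $\theta\Lambda'(\mathbf{f},P)$ by at most $\theta$ relative to $\Lambda(\mathbf{f},P)$, so the accumulated error is $O((m+k)\theta)$, which I would absorb into an $\epsilon$-fraction of the value through the choice of $\theta$.

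Granting this comparison, the transfer is mechanical and accounts for the two factors of $(1-\epsilon)$. Applying Theorem \ref{thm:robust} to $\mathcal{I}'$ yields a strategy $w$ with $\min_{P}\Lambda'(w,P)\ge\rho'\min_{P}\Lambda'(\tilde w,P)$, where $\rho'=\frac{N_0}{(N_0+1)(b+1)\log M'\,(\lceil\log d\rceil+1)}$ and $\tilde w$ is the optimal $N_0$-bounded strategy for $\mathcal{I}'$. Because the original optimum $w^*$ is itself an $N_0$-bounded strategy for $\mathcal{I}'$, we have $\min_P\Lambda'(\tilde w,P)\ge\min_P\Lambda'(w^*,P)$. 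Chaining the right inequality above to lower-bound $\Lambda(w,P)$ by $(1-\epsilon)\,\theta\Lambda'(w,P)$, and the left inequality (at $w^*$) to lower-bound $\theta\Lambda'(w^*,P)$ by $(1-\epsilon)\Lambda(w^*,P)$, collapses $\theta$ and produces exactly the $(1-\epsilon)^2$ prefactor together with $\rho'$, which is the claimed bound.

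I expect the main obstacle to be the lower inequality $\theta\Lambda'(\mathbf{f},P)\ge(1-\epsilon)\Lambda(\mathbf{f},P)$, because $\Lambda=\sum_i\lambda_i-T$ is a difference of two large quantities whose rounding errors may partially cancel or reinforce, and the additive error $O((m+k)\theta)$ must be small relative to the \emph{value} $\Lambda(\mathbf{f},P)$, ultimately relative to the possibly-tiny optimum $\min_P\Lambda(w^*,P)$ rather than to $L=\max_{w,P}\Lambda(w,P)$. Reconciling a granularity fine enough for this with a range coarse enough to keep $M'=O(M/\epsilon)$ polynomial is the delicate point; I would resolve it by taking $\theta$ proportional to $\epsilon$ times a lower bound on the relevant optimum (obtained, for instance, from a single run of the Extended Recursive Greedy algorithm), or by a standard FPTAS-style guess over dyadic scales, which costs only a poly-logarithmic factor that is absorbed into the running time.
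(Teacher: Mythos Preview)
Your route differs from the paper's and carries an unresolved tension. The paper does \emph{not} round the network data (capacities, initial flows, budget); instead it rounds the objective itself, setting $\Lambda'(\mathbf{f},P)=\lfloor N_1\Lambda(\mathbf{f},P)\rfloor$ for a polynomially bounded integer $N_1$ chosen so that $N_1\gtrsim(b+1)(\lceil\log d\rceil+1)\log(N_1M)/\epsilon$, and it runs Algorithm~\ref{algorithm:robust} on this $\Lambda'$ while keeping the Extended Recursive Greedy subroutine computing $\bar{\Lambda}$ from the \emph{original} $\Lambda$. Because the greedy oracle still approximates the true $\Lambda$, Theorem~\ref{thm:recgreedy2} gives the greedy step directly in terms of $\Lambda$, and the single translation $\lfloor N_1 x\rfloor\ge N_1x-1$ yields the $(1-\epsilon)$ losses; no two-sided comparison of max-flow values is needed. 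This completely sidesteps the obstacle you flag, since the rounding error is one unit per term and is absorbed by the very approximation factor $(b+1)(\lceil\log d\rceil+1)$ already present.

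By contrast, your parameter-level rounding must push additive error through the LP~(\ref{throughput}), and the inequality $\theta\Lambda'(\mathbf{f},P)\ge(1-\epsilon)\Lambda(\mathbf{f},P)$ forces $\theta$ to scale with $\epsilon\cdot\min_P\Lambda(w^*,P)/(m+k)$ while simultaneously you need $M'=L/\theta=O(M/\epsilon)$ to obtain the stated $\log(M/\epsilon)$. These constraints are incompatible unless $\min_P\Lambda(w^*,P)$ is bounded below by something like $(m+k)\cdot L/M$, which you have not argued. Your proposed fixes (guessing the optimum, dyadic scales) would alter $M'$ and hence the logarithmic factor in the final bound; they may salvage a comparable guarantee but not the exact form claimed. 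The paper's direct rounding of $\Lambda$ is both simpler and produces the stated bound without this detour.
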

\begin{proof}
	First, we assume that without loss of generality, if $\Lambda(w,P)>0$ for some $w,P$, then $\Lambda(w,P)>1$, as we can always scale up all the parameters if the condition is not satisfied. With this condition, we define $\Lambda'(\mathbf{f},P)=\lfloor N_1\Lambda(\mathbf{f},P)\rfloor$, where $N$ is some integer bounded by some polynomial of $n$ and satisfies $N_1=\lceil\frac{2(b+1)\log (N_1M)(\lfloor \log d\rfloor +1)}{\epsilon}\rceil$. 
	Keeping all other parameters unchanged and substituting $\Lambda$ with $\Lambda'$, we get the new instance $\mathcal{I}'$. Note that $\Lambda'(w,P)$ satisfies the condition in the statement of the proposition. Thus, we can apply the proposed framework \textbf{Algorithm \ref{algorithm:robust}} to $\mathcal{I}'$ (with $M'=NM$). Note that for the Extended Recursive Greedy Algorithm in the framework, the approximate function $\bar{\Lambda}$ we use is calculated with respect to $\Lambda$ in the original instance $\mathcal{I}$. 
	
	To establish the performance guarantee of such procedure, we first analyze the quality of the greedy iterations computed by the Extended Greedy algorithm. At some iteration, let $\mathbf{f}$ be the single-path flow that corresponds to the path returned by the algorithm. Note that as Lemma \ref{lemma:gap} holds for all $P\in\mathcal{U}$, we have by Theorem \ref{thm:recgreedy2} that 
	\begin{align*}
	\Lambda(\mathbf{f},P)&\ge \frac{1}{(b+1)\cdot(\lceil\log d\rceil+1)}\Lambda(\mathbf{f^*},P)\\&\ge \frac{1}{(b+1)\cdot(\lceil\log d\rceil+1)}\Lambda(\mathbf{f'},P),
	\end{align*}
	where $\mathbf{f^*}$ is the flow returned by the exact greedy scheme with respect to $\Lambda$ and $\mathbf{f'}$ is the flow returned by that with respect to $'$. It follows that
	\begin{align}
	&\sum_{P\in\mathcal{U}}\Lambda'(\mathbf{f},P)=\sum_{P\in\mathcal{U}}\lfloor N_1\Lambda(\mathbf{f},P)\rfloor\\&\ge\sum_{P\in\mathcal{U}}\lfloor \frac{N_1}{(b+1)\cdot(\lceil\log d\rceil+1)}\Lambda(\mathbf{f}',P)\rfloor\nonumber\\
	&\ge\sum_{P\in\mathcal{U}}\left( \frac{N_1}{(b+1)\cdot(\lceil\log d\rceil+1)}\Lambda(\mathbf{f}',P)-1\right)\nonumber\\
	&\ge\sum_{P\in\mathcal{U}}\left( 1-\epsilon\right)\frac{N_1\Lambda(\mathbf{f}',P)}{(b+1)\cdot(\lceil\log d\rceil+1)}\nonumber\\
	&= \frac{1-\epsilon}{(b+1)\cdot(\lceil\log d\rceil+1)}\sum_{P\in\mathcal{U}}\Lambda'(\mathbf{f}',P).\nonumber
	\end{align}
	Therefore, the quality of the obtained approximate greedy solutions enjoys almost the same guarantee with respect to $'$. Then, invoking Theorem \ref{thm:robust}, denoting the optimal $N_0$-bounded strategy for $\mathcal{I'}$ as $w'$, we have that 
	\begin{align*}
	&\min_{P\in \mathcal{U}}\Lambda(w,P)\ge \frac{1}{N_1}\min_{P\in\mathcal{U}}\Lambda'(w,P)\\&\ge \left(\frac{N_0\left(1-\epsilon\right)}{(N_0+1)(b+1)\log M'\cdot(\lceil\log d\rceil+1)}\right)\min_{P\in\mathcal{U}}\frac{\Lambda'(w',P)}{N_1}\\
	&\ge \left(\frac{N_0\left(1-\epsilon \right)}{(N_0+1)(b+1)\log \frac{M}{\epsilon}\cdot(\lceil\log d\rceil+1)}\right)\min_{P\in\mathcal{U}}\frac{\Lambda'(w^*,P)}{N_1}\\
	&\ge \left(\frac{N_0\left(1-\epsilon \right)}{(N_0+1)(b+1)\log \frac{M}{\epsilon}\cdot(\lceil\log d\rceil+1)}\right)\min_{P\in\mathcal{U}}\frac{N_1\Lambda(w^*,P)-1}{N_1}\\
	&\ge \left(\frac{N_0\left(1-\epsilon \right)^2}{(N_0+1)(b+1)\log \frac{M}{\epsilon}\cdot(\lceil\log d\rceil+1)}\right)\min_{P\in\mathcal{U}}\Lambda(w^*,P).
	\end{align*}
\end{proof}
Note that the bound obtained in the proposition is essentially the same as that in Theorem \ref{thm:robust}.

\end{appendices}

\end{document}